\newcolumntype{d}{D{,}{,}{3.3}}
\newcommand{\trans}{\delta}
\newcommand{\alphabet}{\Sigma}
\newcommand{\emptyword}{\epsilon}
\newcommand{\finwords}{\alphabet^*}
\newcommand{\infwords}{\alphabet^\omega}
\newcommand{\poswords}{\alphabet^+}
\newcommand{\machine}{M}
\newcommand{\buechiU}{\overline{B}}
\newcommand{\buechiL}{\underline{B}}
\newcommand{\proDFA}{A}
\newcommand{\aut}{A}
\newcommand{\states}{Q}
\newcommand{\initState}{q_0}
\newcommand{\acc}{F}
\newcommand{\lang}[1]{L(#1)}
\newcommand{\upword}[1]{UP(#1)}
\newcommand{\quotient}[1]{\finwords/_{#1}}
\newcommand{\class}[1]{ [#1] }
\newcommand{\func}[1]{\mathbf{#1}}
\newcommand{\singleEq}{\backsim}
\newcommand{\doubleEq}{\approx}
\newcommand{\true}{\mathrm{T}}
\newcommand{\false}{\mathrm{F}}
\newcommand{\A}{\mathrm{A}}
\newcommand{\B}{\mathrm{B}}
\newcommand{\C}{\mathrm{C}}
\newcommand{\hide}[1]{}
\newcommand{\learnT}{\mathcal{T}}
\renewcommand{\qed}{\hfill\blacksquare}
\newcommand{\omegaRegLang}{L}
\newcommand{\canoEq}{\singleEq_{\omegaRegLang}}
\newcommand{\proEq}{\doubleEq^{u}}
\newcommand{\periodicEq}{\doubleEq^{u}_{P}}
\newcommand{\syntacticEq}{\doubleEq^{u}_{S}}
\newcommand{\recurrentEq}{\doubleEq^{u}_{R}}
\newcommand{\fdfas}{\mathcal{F}}
\newcommand{\deqWithu}[3]{#2\doubleEq^{u}_{#1} #3}
\newcommand{\eqWith}[3]{#2\singleEq_{#1} #3}
\newcommand{\stateWord}[1]{\tilde{#1}}
\newcommand{\FDFA}{\textsc{FDFA}}
\newcommand{\size}[1]{|#1|}
\newcommand{\autdollar}{\mathcal{D}}
\newcommand{\autdollarfdfa}{\mathcal{D}_1}
\newcommand{\autdollarfdfaneq}{\mathcal{D}_2}
\newcommand{\preOfeq}{\unlhd}
\newcommand{\preOfneq}{\lhd}
\newcommand{\roll}{\textsf{ROLL}}
\newcommand{\Lperiodic}{L^{\textsf{Periodic}}}
\newcommand{\Lsyntactic}{L^{\textsf{Syntactic}}}
\newcommand{\Lrecurrent}{L^{\textsf{Recurrent}}}
\newcommand{\Ldollar}{L^\$}
\title{A Novel Learning Algorithm for B\"uchi Automata\\ based on Family of DFAs and Classification Trees}
\author{
Yong Li\inst{1,2}, Yu-Fang Chen\inst{3}, Lijun Zhang\inst{1,2}, Depeng Liu\inst{1,2}
}
\institute{
	State Key Laboratory of Computer Science,
    Institute of Software, CAS
\and
	University of Chinese Academy of Sciences
\and
    Institute of Information Science, Academia Sinica
}
\begin{document}

% make the title area
\maketitle

\begin{abstract}
In this paper, we propose a novel algorithm to learn a B\"uchi automaton from a teacher who knows an $\omega$-regular language. The algorithm is based on learning a formalism named \emph{family of DFAs} (FDFAs) recently proposed by Angluin and Fisman~\cite{Angluin2014}. The main catch is that we use a \emph{classification tree} structure instead of the standard \emph{observation table} structure.
The worst case storage space required by our algorithm is quadratically better than the table-based algorithm proposed in~\cite{Angluin2014}.
We implement the first publicly available library {\roll} (Regular Omega Language Learning%, {\verb http://roll.iscasmc.ios.ac.cn}
), which consists of all $\omega$-regular learning algorithms available in the literature and the new algorithms proposed in this paper.
Experimental results show that our tree-based algorithms have the best performance among others regarding the number of solved learning tasks.
\end{abstract}

\section{Introduction}
Since the last decade, learning-based automata inference techniques~\cite{Bollig2009ALN,Kearns1994ICL,Angluin1987LS,Rivest1989IFA} have received significant attention from the community of formal system analysis. In general, the primary applications of automata learning in the community can be categorized into two: \emph{improving efficiency and scalability of verification}~\cite{cobleigh2003learning,chaki2005automated,chen2009learning,grumberg2016learning,lin2014learning,alur2005symbolic,feng2011automated,HeGWZ15} and \emph{synthesizing abstract system model for further analysis}~\cite{peled2002black,hagerer2002model,WangWHC11,alur2005synthesis,howar2013hybrid,giannakopoulou2012symbolic,sun2015tlv,aarts2015generating,ChapmanCKKST15,Chen2016PLV}.

The former usually is based on the so called \emph{assume-guarantee} compositional verification approach, which divides a verification task into several subtasks via a composition rule. Learning algorithms are applied to construct environmental assumptions of components in the rule automatically. For the latter, automata learning has been used to automatically generate interface model of computer programs~\cite{alur2005synthesis,howar2013hybrid,giannakopoulou2012symbolic,xiao2013tzuyu,sun2015tlv}, a model of system error traces for diagnosis purpose~\cite{ChapmanCKKST15}, behavior model of programs for statistical program analysis\cite{Chen2016PLV}, and model-based testing and verification~\cite{peled2002black,hagerer2002model,WangWHC11}.

Besides the classical finite automata learning algorithms, people also apply and develop learning algorithm for richer models for the above two applications. For example, learning algorithms for register automata~\cite{HowarSJC12,IsbernerHS14} have been developed and applied to synthesis system and program interface models. Learning algorithm for timed automata has been developed for automated compositional verification for timed systems~\cite{lin2014learning}. However, all the results mentioned above are for checking \emph{safety properties} or synthesizing \emph{finite behavior models} of systems/programs.
B\"uchi automaton is the standard model for describing liveness properties of distributed systems~\cite{alpern1987recognizing}. The model has been applied in automata theoretical model checking~\cite{vardi1986automata} to describe the property to be verified. It is also often used in the synthesis of reactive systems. Moreover, B\"uchi automata have been used as a means to prove program termination~\cite{lee2001size}. However, unlike the case for finite automata learning, learning algorithms for B\"uchi automata are very rarely used in our community. We believe this is a potentially fertile area for further investigation.

The first learning algorithm for the full-class of $\omega$-regular languages represented as B\"uchi automata was described in~\cite{Farzan2008}, based on the $L^*$ algorithm~\cite{Angluin1987LS} and the result of~\cite{Calbrix1993}. Recently, Angluin and Fisman propose a new learning algorithm for $\omega$-regular languages~\cite{Angluin2014} using a formalism called a \emph{family of DFAs} (FDFAs), based on the results of~\cite{Maler1993}. The main problem of applying their algorithm in verification and synthesis is that their algorithm requires a teacher for FDFAs. In this paper, we show that their algorithm can be adapted to support B\"uchi automata teachers.

We propose a novel $\omega$-regular learning algorithm based on FDFAs and a \emph{classification tree} structure (inspired by the tree-based $L^*$ algorithm in~\cite{Kearns1994ICL}).
The worst case storage space required by our algorithm is quadratically better than the table-based algorithm proposed in~\cite{Angluin2014}.
Experimental results show that our tree-based algorithms have the best performance among others regarding the number of solved learning tasks.

For regular language learning, there are robust and publicly available libraries, e.g.,  \textsf{libalf}\cite{Bollig2010} and \textsf{LearnLib}\cite{isberner2015open}. A similar library is still lacking for B\"uchi automata learning.
We implement the first publicly available B\"uchi automata learning library, named {\roll} (Regular Omega Language Learning, \url{http://iscasmc.ios.ac.cn/roll}), which includes all B\"uchi automata learning algorithms of the full class of $\omega$-regular languages available in the literature and the ones proposed in this paper. We compare the performance of those algorithms using a benchmark consists of
295 B\"uchi automata corresponding to all 295 LTL specifications available in B\"uchiStore~\cite{Tsay2011BSO}.
%of 295 B\"uchi automata collected from B\"uchiStore~\cite{Tsay2011BSO}.
%We believe $\jalf$ can be an enabling tool for the application of $\omega$-regular learning in formal system analysis.

To summarize, our contribution includes the following. (1) Adapting the algorithm of~\cite{Angluin2014} to support B\"uchi automata teachers. (2) A novel learning algorithm for $\omega$-regular language based on FDFAs and classification trees. (3) The publicly available library {\roll} that includes all B\"uchi automata learning algorithms can be found in the literature. (4) A comprehensive empirical evaluation of B\"uchi automata learning algorithms.

\section{Preliminaries}\label{sec:preliminaries}
Let $A$ and $B$ be two sets. We use $A \oplus B$ to denote their \emph{symmetric difference}, i.e., the set $(A\setminus B) \cup (B\setminus A)$.
Let $\alphabet$ be a finite set called \emph{alphabet}. We use $\emptyword$ to represent an empty word.
The set of all finite words is denoted by $\finwords$, and the set of all infinite words, called $\omega$-words, is denoted by $\infwords$. Moreover, we also denote by $\poswords$ the set $\finwords\setminus\{\emptyword\}$.
We use $\size{u}$ to denote the length of the finite word $u$.
We use $[i\cdots j]$ to denote the set $\{i, i+1, \cdots, j\}$. We denote by $w[i]$ the $i$-th letter of a word $w$.
We use $w[i..k]$ to denote the subword of $w$ starting at the $i$-th letter and ending at the $k$-th letter, inclusive, when $i\leq k$ and the empty word $\emptyword$ when $i > k$.
A \emph{language} is a subset of $\finwords$ and an \emph{$\omega$-language} is a subset of $\infwords$.
Words of the form $uv^\omega$ are called \emph{ultimately periodic} words.
We use a pair of finite words $(u,v)$ to denote the ultimately periodic word $w= uv^\omega$. We also call $(u, v)$ a \emph{decomposition} of $w$.
For an $\omega$-language $\omegaRegLang$, let $UP(\omegaRegLang) = \{uv^\omega \mid u \in \Sigma^*,v \in \Sigma^+,uv^\omega \in \omegaRegLang \}$, i.e., all ultimately periodic words in $\omegaRegLang$.

% prefix of a finite word
%We use $u\preOfeq v$ to represent that there exists some $j\geq 0$ such that $u=v[0\cdots j]$
%, we then say $u$ is a prefix of $v$. We use $u\preOfneq v$ if $u\preOfeq v$ and $u\neq v$.
%We say an $\omega$-word $w$ given by $(u, v)$ means that $w$ can be decomposed into a finite prefix $u$
%and an infinite repetition of $v$, i.e, $w=uv^\omega$. We also call $(u, v)$ is a \emph{decomposition} of $\omega$-word $w$.

A \emph{finite automaton} (FA) is a tuple $\aut = (\alphabet, \states, \initState,  \acc, \trans)$
consisting of a finite alphabet $\alphabet$,
a finite set $\states$ of states, an initial state $\initState$, a set $\acc\subseteq \states$ of accepting states, and a transition relation $\trans \subseteq \states\times\alphabet\times \states$.
For convenience, we also use $\trans(q, a)$ to denote the set $\{q'\mid (q, a, q') \in \trans \}$.
A \emph{run} of an FA on a finite word $v = a_1 a_2 a_3 \cdots a_n$ is a sequence of states $\initState, q_1, \cdots, q_n$
such that $(q_i, a_{i+1}, q_{i+1}) \in \trans$. The run $v$ is \emph{accepting} if $q_n\in \acc$.
A word $u$ is accepting if it has an accepting run. The
language of $A$, denoted by $\lang{\aut}$, is the set $ \{u\in\finwords \mid u\ \text{is accepted by}\ \aut\}$.
Given two FAs $A$ and $B$, one can construct a product FA $A\times B$ recognizing $L(A) \cap L(B)$ using a standard product construction.

A \emph{deterministic finite automaton} (DFA) is an FA such that $\trans(q, a)$ is a singleton for any $q\in\states$ and $a\in\alphabet$.
For DFA, we write $\trans(q, a) = q'$ instead of $\trans(q, a) = \{q'\}$.
The transition can be lifted to words by defining $\trans(q, \emptyword) = q$ and
$\trans(q,av)=\trans(\trans(q,a), v)$ for $q\in \states, a\in\alphabet$ and $v\in\finwords$.
We also use $\aut(v)$ as
a shorthand for $\trans(\initState, v)$. % and $\aut_{q}(v)$ for $\trans(q,v)$. Moreover, we use $\aut^q_f$ to denote the automaton $(\alphabet, \states, q,  \{f\}, \trans)$ obtained from $\aut$ by setting the initial and accepting states to $q$ and $f$.

A \emph{B\"uchi automaton} (BA) has the same structure as an FA, except that it accepts only infinite words.
A run of an infinite word in a BA is an infinite sequence of states defined similarly to the case of a finite word in an FA.
An infinite word $w$ is accepted by a BA
iff it has a run visiting at least one accepting state infinitely often.
The language defined by a BA $\aut$, denoted by $\lang{\aut}$, is the set $\{w\in\infwords \mid w\ \text{is accepted by}\ \aut\}$.
An $\omega$-language $\omegaRegLang \subseteq \Sigma^\omega$ is $\omega$-regular iff there exists a BA $\aut$ such that $\omegaRegLang = \lang{\aut}$.

\begin{theorem}[Ultimately Periodic Words of $\omega$-Regular Languages~\cite{buchi1966symposium}]
Let $\omegaRegLang$, $\omegaRegLang'$ be two $\omega$-regular languages. Then $\omegaRegLang=\omegaRegLang'$ if and only if $UP(\omegaRegLang)=UP(\omegaRegLang')$.
\end{theorem}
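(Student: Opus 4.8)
The plan is to dispatch the forward implication immediately and concentrate all the work on the converse. If $\omegaRegLang = \omegaRegLang'$, then trivially $\upword{\omegaRegLang} = \upword{\omegaRegLang'}$, since $\upword{\cdot}$ is defined purely as the collection of ultimately periodic words contained in the language. So the real content of the theorem is the claim that the ultimately periodic words alone determine an $\omega$-regular language, i.e.\ that the converse holds.

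For the converse I would argue by contraposition: assuming $\omegaRegLang \neq \omegaRegLang'$, I aim to exhibit an ultimately periodic word in exactly one of the two languages. Since $\omegaRegLang \neq \omegaRegLang'$, their symmetric difference $\omegaRegLang \oplus \omegaRegLang' = (\omegaRegLang \setminus \omegaRegLang') \cup (\omegaRegLang' \setminus \omegaRegLang)$ is nonempty. The crucial structural fact I would invoke is that the class of $\omega$-regular languages is closed under complementation, union and intersection; hence $\omegaRegLang \oplus \omegaRegLang'$ is itself $\omega$-regular and is recognized by some B\"uchi automaton $\aut$ with $\lang{\aut} \neq \emptyset$.

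The heart of the argument is then the lemma that \emph{every nonempty $\omega$-regular language contains an ultimately periodic word}. To prove it I would fix an accepting run of $\aut$ on some $w \in \lang{\aut}$. Such a run visits some accepting state $q$ infinitely often, so by the pigeonhole principle the run decomposes into a finite prefix that reaches $q$ while reading a word $u \in \finwords$, followed by a finite segment that returns from $q$ to $q$ while reading a nonempty word $v \in \poswords$. Pumping this cycle gives a run on $u v^\omega$ that visits $q$ infinitely often and is therefore accepting, so $u v^\omega \in \lang{\aut}$ is ultimately periodic. Applying this to $\aut$ recognizing $\omegaRegLang \oplus \omegaRegLang'$ yields an ultimately periodic word $u v^\omega$ lying in, say, $\omegaRegLang \setminus \omegaRegLang'$; then $u v^\omega \in \upword{\omegaRegLang}$ but $u v^\omega \notin \upword{\omegaRegLang'}$ (and symmetrically in the other case), witnessing $\upword{\omegaRegLang} \neq \upword{\omegaRegLang'}$.

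I expect the main obstacle to be the two ingredients that make the symmetric-difference argument go through: the existence lemma for ultimately periodic words and, underlying it, the closure of $\omega$-regular languages under complement that guarantees $\omegaRegLang \oplus \omegaRegLang'$ is $\omega$-regular. Complementation of B\"uchi automata is the one genuinely nontrivial component, whereas the pigeonhole extraction of $u$ and $v$ from an accepting run is routine and the remaining steps only unwind the definition of $\upword{\cdot}$.
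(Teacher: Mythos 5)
Your proof is correct, and it is exactly the standard argument behind the citation: the paper states this theorem as a known result of B\"uchi without proof, and the classical proof proceeds precisely as you do, via closure of $\omega$-regular languages under Boolean operations (with complementation of B\"uchi automata as the only nontrivial ingredient) followed by the pigeonhole extraction of an ultimately periodic word $uv^\omega$ from a nonempty B\"uchi language applied to the symmetric difference. Nothing is missing; the lasso lemma and the reduction to nonemptiness of $\omegaRegLang \oplus \omegaRegLang'$ together give both directions as you describe.
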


\begin{definition}[Family of DFAs ($\FDFA$) \cite{Angluin2014}]\label{def:fdfa} A family of DFAs $\fdfas=(\machine, \{\proDFA^q\})$
	over an alphabet $\alphabet$ consists of a leading automaton $\machine=(\alphabet, \states, \initState, \trans)$
	and progress DFAs $\proDFA^q=(\alphabet, \states_q, s_q, \trans_q, \acc_q)$ for each $q\in\states$.
\end{definition}

Notice that the leading automaton $\machine$ is a DFA without accepting states.
Each $\FDFA$ $\fdfas$ characterizes a set of ultimately periodic words $UP(\fdfas)$.
Formally, an ultimately periodic word $w$ is in $UP(\fdfas)$ iff it has a decomposition $(u, v)$ \emph{accepted} by $\fdfas$. A decomposition $(u, v)$ is accepted by $\fdfas$ iff $\machine(uv) = \machine(u)$ and  $v \in \lang{\proDFA^{\machine(u)}}$.
An example of an $\FDFA$ $\fdfas$ is depicted in Fig.~\ref{fig:fdfa-example}.
The leading automaton $\machine$ has only one state $\emptyword$.
The progress automaton of $\emptyword$ is $\proDFA^{\emptyword}$.
The word $(ba)^\omega$ is in $UP(\fdfas)$ because it has a decomposition $(ba,ba)$ such that $\machine(ba\cdot ba) = \machine(ba)$ and $ba\in \lang{\proDFA^{\machine(ba)}}=\lang{\proDFA^{\emptyword}}$.
It is easy to see that the decomposition $(bab, ab)$ is not accepted by $\fdfas$
since $ab\not\in \lang{\proDFA^{\machine(bab)}}=\lang{\proDFA^{\emptyword}}$.

\begin{wrapfigure}{l}{0.5\textwidth}
	\vspace{-0.6cm}
	\centering
	\begin{tikzpicture}[shorten >=0.5pt,node distance=1.5cm,on grid,auto,framed]
	
	\begin{scope}
	\node[initial,state, inner sep=2.5pt,minimum size=2.5pt] (q0)      {$\emptyword$};
	\node[] at ($(q0) + (-1, 0.7)$) {$\machine$};
	
	\path[->]
	(q0) edge [loop above] node {a} (q0)
	edge [loop below] node {b} (q0);
	\end{scope}
	
	\begin{scope}[xshift=2cm]
	\node[initial,state, inner sep=2.5pt,minimum size=2.5pt] (q0)      {$\emptyword$};
	\node[state, accepting, inner sep=2pt,minimum size=2pt]         (q1) [right =of q0]  {$a$};
	\node[] at ($(q0) + (-1, 0.7)$) {$\proDFA^{\emptyword}$};
	
	\path[->] (q0)  edge node {a, b} (q1)
	(q1)  edge [loop above]  node {a} (q1)
	edge [bend left]  node {b} (q0);
	\end{scope}
	\end{tikzpicture}
	\vspace{-0.3cm}
	\caption{An example of an $\FDFA$}\label{fig:fdfa-example}
	\vspace{-0.7cm}
\end{wrapfigure}

For any $\omega$-regular language $\omegaRegLang$, there exists an $\FDFA$ $\fdfas$ such that  $UP(\omegaRegLang) = UP(\fdfas)$~\cite{Angluin2014}. We show in Sec.~\ref{sec:buechi-builder} that it is not the case for the reverse direction.
More precisely,  in~\cite{Angluin2014}, three kinds of $\FDFA$s are suggested as the canonical representation of $\omega$-regular languages, namely \emph{periodic $\FDFA$}, \emph{syntactic $\FDFA$} and \emph{recurrent $\FDFA$}. Their formal definitions are given in terms of \emph{right congruence}.

%\begin{definition}[Right Congruence (RC)]
An equivalence relation $\singleEq$ on $\finwords$ is a right congruence if $x\singleEq y$ implies $xv\singleEq yv$ for every $x, y, v\in\finwords$. % and $w\in\infwords$.
%\end{definition}
The index of $\singleEq$, denoted by $\size{$$\singleEq$$}$, is the number of equivalence classes of $\singleEq$.
We use $\quotient{\singleEq}$ to denote the equivalence classes of the right congruence $\singleEq$.
A \emph{finite right congruence} is a right congruence with a finite index.
For a word $v\in\finwords$, we use the notation $\class{v}_{\singleEq}$ to represent the class of $\singleEq$ in which $v$ resides and ignore the subscript $\singleEq$ when the context is clear.
The right congruence $\canoEq$ of a given $\omega$-regular language $\omegaRegLang$ is defined such that $x\canoEq y$ iff $\forall w\in\infwords.xw\in\omegaRegLang \Longleftrightarrow yw\in\omegaRegLang$.
The index of $\canoEq$ is finite because it is not larger than the number of states in a deterministic Muller automaton recognizing $\omegaRegLang$~\cite{Maler1993}.
\begin{definition}[Canonical $\FDFA$~\cite{Angluin2014}]\label{def:cano-fdfas}
	Given an $\omega$-regular language $\omegaRegLang$, a periodic (respectively, syntactic and recurrent) $\FDFA$ $\fdfas = (\machine, \{\proDFA^q\})$ of $L$ is defined as follows.\\
	The leading automaton $M$ is the tuple $(\alphabet, \quotient{\canoEq}, \class{\emptyword}_{\canoEq}, \trans)$, where $\trans(\class{u}_{\canoEq}, a) = \class{ua}_{\canoEq}$ for all $u\in\finwords$ and $a \in \alphabet$.
	
	We define the right congruences $\proEq_P, \proEq_S$, and $\proEq_R$ for progress automata $\proDFA^u$
	of periodic, syntactic, and recurrent $\FDFA$ respectively as follows:
	\[
	\begin{array}{lll}
	\deqWithu{P}{x}{y} \text{ iff } & &\forall v\in\finwords, u(xv)^\omega\in\omegaRegLang\Longleftrightarrow u(yv)^\omega\in\omegaRegLang, \\
	\deqWithu{S}{x}{y} \text{ iff } \text{ } \eqWith{L}{ux}{uy}& \text{and} &
	\forall v\in\finwords, \eqWith{L}{uxv}{u} \Longrightarrow (u(xv)^\omega\in\omegaRegLang\Longleftrightarrow u(yv)^\omega\in\omegaRegLang), \text{ and} \\
	\deqWithu{R}{x}{y} \text{ iff } & &
	\forall v\in\finwords,
	\eqWith{L}{uxv}{u}\land u(xv)^\omega\in\omegaRegLang \Longleftrightarrow\eqWith{L}{uyv}{u}\land u(yv)^\omega\in\omegaRegLang.
	\end{array}
	\]
	The progress automaton $\proDFA^u$ is the tuple $(\alphabet, \quotient{\proEq_K}, \class{\emptyword}_{\proEq_K}, \trans_K, \acc_K)$, where $\trans_K(\class{u}_{\proEq_K}, a) = \class{ua}_{\proEq_K}$ for all $u\in\finwords$ and $a \in \alphabet$.
	The accepting states $\acc_K$ is the set of equivalence classes $\class{v}_{\proEq_K}$ for which
	$\eqWith{L}{uv}{u}$ and $uv^\omega\in\omegaRegLang$ when $K \in \{S, R\}$ and the set of equivalence classes $\class{v}_{\proEq_K}$ for which $uv^\omega\in\omegaRegLang$ when $K \in \{P\}$.
	
\end{definition}
In this paper, by an abuse of notation, we use a finite word $u$ to
denote the state in a DFA in which the equivalence class $\class{u}$ resides.

%The nondeterminism lies in the way to decide the decomposition of an $\omega$-word.

\begin{lemma}[\hspace*{-0.12cm}\cite{Angluin2014}]\label{lem:language-eq-fdfa}
	Let $\fdfas$ be a periodic (syntactic, recurrent) $\FDFA$ of an $\omega$-regular language $\omegaRegLang$. Then $\upword{\fdfas}=\upword{\omegaRegLang}$.
\end{lemma}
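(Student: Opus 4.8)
**The plan is to prove the two inclusions $UP(\fdfas) \subseteq UP(\omegaRegLang)$ and $UP(\omegaRegLang) \subseteq UP(\fdfas)$ separately, handling the three congruences $\proEq_P, \proEq_S, \proEq_R$ in a uniform way where possible.** The central observation is that the leading automaton $\machine$ is built from the right congruence $\canoEq$ of $\omegaRegLang$ itself, so for any $u \in \finwords$ the state $\machine(u)$ is exactly $\class{u}_{\canoEq}$. First I would record the basic fact that membership of an ultimately periodic word $w$ in $UP(\fdfas)$ means there exists \emph{some} decomposition $(u,v)$ with $v \in \poswords$ satisfying $\machine(uv) = \machine(u)$ and $v \in \lang{\proDFA^{\machine(u)}}$; I must connect this purely structural condition to the semantic condition $uv^\omega \in \omegaRegLang$.

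For the inclusion $UP(\fdfas) \subseteq UP(\omegaRegLang)$, suppose $(u,v)$ is a decomposition accepted by $\fdfas$. The condition $\machine(uv) = \machine(u)$ says $\eqWith{L}{uv}{u}$, i.e. $uv \canoEq u$, and iterating the right-congruence property gives $uv^k \canoEq u$ for all $k \geq 1$. The condition $v \in \lang{\proDFA^{u}}$ places the class $\class{v}_{\proEq_K}$ in the accepting set $\acc_K$. Here I would invoke the defining property of $\acc_K$: for $K \in \{S, R\}$ it directly asserts $\eqWith{L}{uv}{u}$ and $uv^\omega \in \omegaRegLang$, so the semantic conclusion is immediate; for $K = P$ it asserts $uv^\omega \in \omegaRegLang$ directly. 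The subtle point is that the progress automaton reaches its state by reading $v$ from the initial class $\class{\emptyword}_{\proEq_K}$, and acceptance of $v$ means $\class{v}_{\proEq_K} \in \acc_K$, so I must verify that the representative $v$ landing in the accepting class genuinely witnesses $uv^\omega \in \omegaRegLang$ — this requires checking that the progress congruence $\proEq_K$ respects the acceptance condition, i.e. that $\acc_K$ is well-defined as a union of congruence classes. This well-definedness is the crux: I would show that if $x \proEq_K y$ and $x$ satisfies the accepting predicate relative to $u$, then so does $y$, which follows by unfolding each of the three definitions with the instantiation $v = \emptyword$.

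For the reverse inclusion $UP(\omegaRegLang) \subseteq UP(\fdfas)$, let $w \in UP(\omegaRegLang)$ be ultimately periodic. The plan is to \emph{construct} an accepted decomposition. Since $\canoEq$ has finite index, among the decompositions $w = u v^\omega$ I can use a pumping-style argument: taking any decomposition and replacing $v$ by a suitable power $v^k$, I can arrange that $\machine(u v^k) = \machine(u)$, because reading successive copies of $v$ cycles through finitely many states of $\machine$ and some power returns to $\machine(u)$. Concretely, pick $u$ long enough and $k$ so that $u v^k \canoEq u$; then $(u, v^k)$ satisfies the leading condition. It then remains to check $v^k \in \lang{\proDFA^{u}}$, i.e. that $\class{v^k}_{\proEq_K}$ is accepting, which again reduces to the accepting predicate: since $u(v^k)^\omega = u v^\omega = w \in \omegaRegLang$ and $\eqWith{L}{u v^k}{u}$, the pair satisfies exactly the condition defining $\acc_K$.

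\textbf{The hard part will be the reverse inclusion}, specifically guaranteeing that a single decomposition can be massaged to simultaneously satisfy the leading condition $\machine(uv) = \machine(u)$ and the progress acceptance condition. The delicate interaction is that changing $v$ to $v^k$ to fix the leading automaton must not destroy membership of the periodic part in the correct accepting class, and conversely I must ensure the chosen $u$ is a genuine representative of its $\canoEq$-class landing. I expect to lean on Theorem~1 only indirectly; the real work is a careful pumping argument over the finite-index congruence $\canoEq$ combined with verifying that $\acc_K$ is saturated under $\proEq_K$. The three cases $P, S, R$ differ only in how much information $\acc_K$ records, and I anticipate the recurrent case $R$ being the most intricate since its definition intertwines the leading-congruence condition $\eqWith{L}{uxv}{u}$ with the membership condition inside a single biconditional quantified over $v$.
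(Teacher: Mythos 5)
Your proof is correct, but it takes a different route from the paper: the paper does not prove this lemma at all---it imports it wholesale from Angluin and Fisman~\cite{Angluin2014}, and the only in-house work it does (Appendix~\ref{app:lang-preservation}) is to reconcile its slightly different acceptance condition for periodic $\FDFA$s (requiring $\machine(uv)=\machine(u)$ in addition to $v\in\lang{\proDFA^{\machine(u)}}$) with the original one, using the normalized factorization $(uv^i, v^j)$ and the saturation property of periodic $\FDFA$s. You instead prove the full equivalence $\upword{\fdfas}=\upword{\omegaRegLang}$ from first principles, and your two key ingredients are exactly the ones the citation hides: (i) that $\acc_K$ is well-defined as a union of $\proEq_K$-classes, which as you say falls out of each definition by instantiating the universally quantified experiment with $v=\emptyword$ (together with the conjunct $\eqWith{L}{ux}{uy}$ in the syntactic case), and (ii) the normalization of an arbitrary decomposition over the finite-index congruence $\canoEq$ for the reverse inclusion---the same $(uv^i,v^j)$ device the paper's appendix uses, and closely related to its Prop.~\ref{prop:periodic-word-implication}. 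Your approach buys self-containedness and makes visible where each hypothesis is used; the paper's buys brevity. Two small points to tighten: first, your phrase ``replacing $v$ by a suitable power $v^k$'' alone would not suffice, since iterating $v$ from $\machine(u)$ need not return to $\machine(u)$ itself---one must also absorb copies of $v$ into the prefix, i.e.\ pass to $(uv^i, v^{j-i})$ with $\machine(uv^i)=\machine(uv^j)$; your follow-up ``pick $u$ long enough'' covers this, but it deserves to be explicit. Second, your prediction that the recurrent case is the most intricate is backwards: for $\recurrentEq$ the $v=\emptyword$ instantiation \emph{is} verbatim the accepting predicate's biconditional, making $R$ the easiest of the three; it is the syntactic case that needs the extra transfer of $\eqWith{L}{uv}{u}$ through $\eqWith{L}{ux}{uy}$. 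Finally, you silently use that the state $\machine(u)$ is a representative $\stateWord{u}\canoEq u$ and that the predicates defining $\acc_K$ are invariant under replacing $u$ by $\stateWord{u}$; this is true (all conditions in Def.~\ref{def:cano-fdfas} depend on $u$ only through its $\canoEq$-class) but is worth one explicit line, since it is the only place the definition of $\canoEq$ as an $\omega$-word congruence, rather than mere right-congruence, is needed.
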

\begin{lemma}[\hspace*{-0.1cm}\cite{AngluinBF16}]\label{lem:language-fdfa-to-ba}
	Let $\fdfas$ be a periodic (syntactic, recurrent) $\FDFA$ of an $\omega$-regular language $\omegaRegLang$. One can construct a BA recognizing $\omegaRegLang$ from $\fdfas$.
\end{lemma}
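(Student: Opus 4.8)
The plan is to construct a BA $B$ from $\fdfas$ whose ultimately periodic words coincide with $\upword{\fdfas}$, and then invoke the characterisation theorem of~\cite{buchi1966symposium} together with Lemma~\ref{lem:language-eq-fdfa}. For each state $q$ of the leading automaton $\machine$ I would introduce two regular languages: the set $\lang{\machine_q}=\{u\in\finwords\mid\machine(u)=q\}$ of finite words reaching $q$, and the set $V_q=\{v\in\poswords\mid\trans(q,v)=q \text{ and } v\in\lang{\proDFA^q}\}$ of nonempty return words at $q$ that are accepted by the progress DFA $\proDFA^q$. Both are recognised by DFAs built directly from $\machine$ (by relocating its initial and accepting states) and from a product of $\machine$ with $\proDFA^q$. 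I then set $\lang{B}=\bigcup_{q\in\states}\lang{\machine_q}\cdot(V_q)^\omega$. Since this is a finite union of concatenations of a regular language with the $\omega$-power of a regular language containing no empty word, it is $\omega$-regular, and a BA $B$ recognising it can be obtained by the standard automata-theoretic constructions.

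It then remains to show $\upword{\lang{B}}=\upword{\fdfas}$. The inclusion $\upword{\fdfas}\subseteq\upword{\lang{B}}$ is the easy direction: given an accepted decomposition $(u,v)$ of $w$, put $q=\machine(u)=\machine(uv)$; then $\trans(q,v)=q$ and $v\in\lang{\proDFA^q}$, so $u\in\lang{\machine_q}$, $v\in V_q$, and hence $w=uv^\omega\in\lang{\machine_q}\cdot(V_q)^\omega\subseteq\lang{B}$ is ultimately periodic. For the reverse inclusion I take an ultimately periodic $w\in\lang{B}$ and write $w=u_0v_1v_2v_3\cdots$ with $u_0\in\lang{\machine_q}$ and each $v_i\in V_q$, so that the run of $\machine$ on $w$ returns to $q$ after every prefix $u_0v_1\cdots v_i$. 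The obstacle is that the block decomposition $v_1v_2\cdots$ need not align with the eventual period of $w$, so I must still exhibit a single period $v$ with $w=uv^\omega$, $\machine(uv)=\machine(u)$, and $v\in\lang{\proDFA^{\machine(u)}}$.

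This last step is the crux and is exactly where canonicity enters. Here I would first establish, as a preliminary claim, that the periodic, syntactic and recurrent $\FDFA$s are \emph{saturated}: for a fixed ultimately periodic word, any two \emph{normalised} decompositions $(u_1,v_1)$ and $(u_2,v_2)$ (those satisfying $\machine(u_i)=\machine(u_iv_i)$) are either both accepted or both rejected. Saturation follows from the definitions of $\proEq_P$, $\proEq_S$ and $\proEq_R$ in terms of the right congruence $\canoEq$: for a normalised decomposition the side condition $\eqWith{L}{uv}{u}$ holds automatically (it is precisely $\machine(uv)=\machine(u)$), so acceptance reduces to membership of the fixed word $uv^\omega$ in $\omegaRegLang$ and therefore cannot depend on the chosen factorisation. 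Given saturation, I finish the reverse inclusion by a pumping argument over the finite automata $\machine$ and $\proDFA^q$: since $w$ is ultimately periodic, I refactor its tail into equal laps of the eventual period that all loop at $q$ in $\machine$, obtaining a normalised decomposition $(u,v)$ of $w$; saturation then transfers acceptance from the block structure witnessing $w\in\lang{B}$ to $(u,v)$, giving $v\in\lang{\proDFA^q}$ and hence $w\in\upword{\fdfas}$. It is precisely saturation that a non-canonical $\FDFA$ may lack, which is consistent with the paper's observation that the reverse passage from $\FDFA$s to $\omega$-regular languages fails in general.

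Finally, combining $\upword{\lang{B}}=\upword{\fdfas}$ with $\upword{\fdfas}=\upword{\omegaRegLang}$ from Lemma~\ref{lem:language-eq-fdfa} gives $\upword{\lang{B}}=\upword{\omegaRegLang}$; as both $\lang{B}$ and $\omegaRegLang$ are $\omega$-regular, the theorem of~\cite{buchi1966symposium} yields $\lang{B}=\omegaRegLang$, so $B$ is the desired BA.
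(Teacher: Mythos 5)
Your construction $\lang{B}=\bigcup_{q}\lang{\machine_q}\cdot(V_q)^\omega$ is not the paper's construction: it is the \emph{over-approximation} (in fact a coarsening of the paper's $\buechiU$, since you additionally pool all accepting states of $\proDFA^q$ into one language $V_q$), and the paper explicitly warns that the over-approximation need not satisfy $\upword{\lang{B}}=\upword{\fdfas}$ \emph{even for canonical} $\fdfas$. The reverse inclusion in your proof genuinely fails. Concrete counterexample: let $L=\Sigma^*(ab)^\omega$ over $\Sigma=\{a,b\}$. Here $\canoEq$ has a single class, so the leading automaton $\machine$ has one state $q$ with $\lang{\machine_q}=\finwords$, and the periodic progress right congruence gives $\lang{\proDFA^{\emptyword}}=(ab)^+\cup(ba)^+$ (the only nonempty periods $v$ with $v^\omega\in L$). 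Thus $V_q=(ab)^+\cup(ba)^+$, and the ultimately periodic word $w=(ab\cdot ba)^\omega=(abba)^\omega$ lies in $\lang{B}$, yet $w\notin L=\lang{\buechiL}$-target, and $w\notin\upword{\fdfas}=\upword{L}$ since $(abba)^\omega$ contains $bb$ infinitely often. So $\lang{B}\neq L$ and the appeal to the theorem of~\cite{buchi1966symposium} cannot rescue the argument, because already $\upword{\lang{B}}\neq\upword{L}$.

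The precise point where your argument breaks is the sentence ``saturation then transfers acceptance from the block structure witnessing $w\in\lang{B}$ to $(u,v)$.'' Saturation (which you correctly establish for normalised decompositions of canonical $\fdfas$) compares two normalised decompositions \emph{of the same} $\omega$-word. But the block structure $w=u_0v_1v_2\cdots$ only yields accepted normalised decompositions $(u_0v_1\cdots v_{i-1},\,v_i)$ of the \emph{different} words $u_0v_1\cdots v_{i-1}v_i^\omega\neq w$; no block provides an accepted decomposition of $w$ itself, so saturation has nothing to transfer. In the example above, the unique normalised period of $w$ at $q$ is $v=abba\notin\lang{\proDFA^{\emptyword}}$, exactly as it must be since $w\notin L$. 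The paper avoids this trap by using the under-approximation: the triple product $\machine^u_u\times(\proDFA^{u})^{s_u}_{v}\times(\proDFA^{u})^{v}_{v}$ forces every block to return to the \emph{accepting state} $v$ of the progress DFA, making the block language closed under concatenation; then Lem.~5 of~\cite{Calbrix1993} (invoked in the proof of Lem.~\ref{lem:fdfa-to-buechi-inclusion}) refactors any ultimately periodic word of $\buechiL$ into a single accepted period, giving $\upword{\lang{\buechiL}}\subseteq\upword{\fdfas}$ for \emph{all} $\fdfas$, while canonicity is used in the opposite direction via Prop.~\ref{prop:periodic-word-implication} (acceptance of $(u,v)$ implies acceptance of $(u,v^k)$ for all $k\geq 1$) together with Lem.~\ref{lem:word-fdfa-buechiL-preserve} --- not via saturation. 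To repair your proof you would have to replace $V_q$ by the per-accepting-state languages $\lang{\machine^q_q\times(\proDFA^{q})^{s_q}_{v}\times(\proDFA^{q})^{v}_{v}}$ and rerun both inclusions along these lines.
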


\section{B\"uchi Automata Learning Framework based on $\FDFA$}\label{sec:learning-framework}

\begin{figure}[h]

\scalebox{0.9}{
\begin{tikzpicture}

   \hide{
		%BA learner
		\node[draw=black!90, thick, inner sep=0, anchor=north,rounded corners, fill opacity=0.85,text width=10.2cm,fill=gray!20,minimum height=7.8cm] (ba_learner) at (3.4,0.5) {};
		\node[] at (3.4,0.3) {BA/LBA learning framework};		
	}
		%MEM box
		\node[draw=purple!50, anchor=north,rounded corners,  text width=13cm,fill=purple!20,minimum height=1.5cm] (Mem) at (4.5,-0.8) {};
		\node[rotate=270, align = center] at (-1.8,-1.6) {Member};		

		%EQU box
		\node[draw=purple!50, anchor=north,rounded corners,  text width=13cm,fill=purple!20,minimum height=4.1cm] (Mem) at (4.5,-2.8) {};
		\node[rotate=270, align = center] at (-1.8,-4.8) {Equivalence};

		%FDFA learner
		\node[draw=blue!50, inner sep=0, anchor=north,rounded corners, fill opacity=0.85,text width=3.2cm,fill=blue!20,minimum height=7.2cm] (learner) at (0,0) {};
		\node[] at (0,-0.3) {FDFA learner};
		
		%FDFA teacher
		\node[draw=blue!50, inner sep=0, anchor=north,rounded corners, fill opacity=0.85, text width=6.7cm,fill=blue!20,minimum height=7.2cm] (teacher) at (7.5,0) {};
		\node[] at (7.5,-0.3) {FDFA teacher};

		%BA teacher
		\node[draw=orange!50, inner sep=0, anchor=north,rounded corners, text width=0.6cm,fill=orange!10,minimum height=6.5cm] (BA_teacher) at (10.3,-0.25) {};
		\node[rotate=270, align = center] at (10.3,-3.5) {BA teacher};

		%TABLE-based learner
		\node[draw=yellow!50, inner sep=0, anchor=north,rounded corners, text width=2.8cm,fill=yellow!10,minimum height=1cm] (table_learner) at (0,-1) {};
		\node[align = center, text width=2.6cm] at (0,-1.5) {
			Table-based~\cite{Angluin2014} (Sec.\ref{sec:fdfa-learner-table})
		};
		
		%TREE-Based Learner
		\node[draw=green!50, inner sep=0, anchor=north,rounded corners, text width=2.8cm,fill=green!10,minimum height=3cm] (tree_learner) at (0,-3.2) {};
		\node[align = center] at (0,-4.5) {
			Tree-based (Sec.~\ref{sec:fdfa-learner-tree})\\
			\begin{minipage}[t]{0.222\textwidth}
			\begin{itemize}
			\item Periodic FDFA
			\item Syntactic FDFA
			\item Recurrent FDFA
			\end{itemize}
			\end{minipage}			
		};
		
		%FDFA to BA
		\node[draw=green!50, inner sep=0, anchor=north,rounded corners, text width=4cm,fill=green!10,minimum height=1.5cm] (FDFA2BA) at (6.5,-3.2) {};
		\node[align = left] at (7,-4) {
			FDFA $F$ to BA $B$~(Sec.~\ref{sec:buechi-builder})\\
			\begin{minipage}[t]{0.4\textwidth}
			\begin{itemize}
            \item Under-Approximation $\underline B$
			\item Over-Approximation $\overline B$
			\end{itemize}
			\end{minipage}			
	    };

		%ANALYZE CE
		\node[draw=green!50, inner sep=0, anchor=north,rounded corners, text width=4cm,fill=green!10,minimum height=1.5cm] (normalize_CE) at (6.5,-5.25) {};
		\node[align = left] at (7,-6) {
			Analyze CE~(Sec.~\ref{sec:ce-translation})\\
			\begin{minipage}[t]{0.4\textwidth}
			\begin{itemize}
			\item Under-Approximation $\underline B$
			\item Over-Approximation $\overline B$
			\end{itemize}
			\end{minipage}			
		};
		\draw [dashed, ->] (6.5,-4.75) -- node[auto] {$F$}(6.5,-5.25) ;
		%\draw [-] (4.5,-4) -- (4.3,-4) -- (4.3,-6) ;

		%MEM queries
		\draw [->] (1.6,-1.5) -- node[auto] {$\mathsf{Mem^{FDFA}}(u,v)$}(4.15,-1.5) ;
		\draw [->] (4.15,-1.5) -- node[auto] {$\mathsf{Mem^{BA}}(uv^\omega)$}(10,-1.5) ;
		%\draw [->] (10,-1.7) -- node[auto] {yes/no}(1.6,-1.7) ;				
		\draw [-] (10,-1.7) --(4.15,-1.7) ;	
		\draw [->] (4.15,-1.7) -- node[auto] {yes/no}(1.6,-1.7) ;	
		%EQU queries
		\draw [->] (1.6,-3.5) -- node[auto] {$\mathsf{Equ^{FDFA}}(F)$}(4.5,-3.5) ;	
		\draw [->] (8.5,-3.5) -- node[auto] {$\mathsf{Equ^{BA}}(B)$}(10,-3.5) ;	
		\draw [-] (10,-6.5) --node[auto] {yes}(8.5,-6.5) -- (8.5,-7.5) ;	
		\draw [->] (8.5,-7.5) -- (7.5,-7.5)node[left]{Output a BA recognizing the target language} ;

		\draw [->] (10,-6) -- node[above] {no + $uv^\omega$}(8.5,-6) ;				
		\draw [->] (4.5,-6) -- node[above]  {no +$(u', v')$}(1.6,-6) ;	
				
\end{tikzpicture}
}
\caption{Overview of the learning framework based on $\FDFA$ learning.
The components in \protect\tikz \protect\node[draw=yellow!50, rounded corners, text width=0.5cm,fill=yellow!10,minimum height=0.3cm] {}; boxes are results from existing works.
The components in \protect\tikz \protect\node[draw=green!50, rounded corners, text width=0.5cm,fill=green!10,minimum height=0.3cm] {}; boxes are our new contributions. }\label{fig:overview-architecture}
\end{figure}

We begin with an introduction of the framework of learning BA recognizing an unknown $\omega$-regular language $\omegaRegLang$.
%The B\"uchi automaton learned from the framework could be a common nondeterministic B\"uchi automaton (NBA).
%Our framework makes use of the $\FDFA$ learners, which usually gives smaller BA than the $L_{\$}$ learner proposed in~\cite{Farzan2008}.

\subsubsection*{Overview of the framework:}
%The framework is depicted in Fig.~\ref{fig:overview-architecture}.
First, we assume that we already have a BA teacher who knows the unknown $\omega$-regular language $\omegaRegLang$ and
answers \emph{membership} and \emph{equivalence} queries about $\omegaRegLang$.
More precisely, a membership query $\mathsf{Mem^{BA}}(uv^\omega)$ asks if $uv^\omega \in \omegaRegLang$.
For an equivalence query $\mathsf{Equ^{BA}}(B)$, the BA teacher answers ``yes" when $L(B)=\omegaRegLang$, otherwise it returns ``no" as well as a counterexample $uv^\omega\in \omegaRegLang\oplus \lang{B}$.

The framework depicted in Fig.~\ref{fig:overview-architecture} consists of two components,
namely the \emph{$\FDFA$ learner} and the \emph{$\FDFA$ teacher}. Note that one can place any $\FDFA$ learning algorithm to the $\FDFA$ learner component. For instance, one can use the $\FDFA$ learner from \cite{Angluin2014} which employs a table to store query results, or
the $\FDFA$ learner using a classification tree proposed in this paper. The $\FDFA$ teacher
can be any teacher who can answer membership and equivalence queries about an unknown $\FDFA$.
%We illustrate our proposal for each unit in the framework in the following.
\subsubsection*{$\FDFA$ learners:}
The $\FDFA$ learners component will be introduced in Sec.~\ref{sec:fdfa-learner-table}
and~\ref{sec:fdfa-learner-tree}.
We first briefly review the table-based $\FDFA$ learning algorithms~\cite{Angluin2014} in Sec.~\ref{sec:fdfa-learner-table}.
Our tree-based learning algorithm for canonical $\FDFA$s will be introduced in Sec.~\ref{sec:fdfa-learner-tree}.
The algorithm is inspired by the tree-based $L^*$ learning algorithm~\cite{Kearns1994ICL}.
Nevertheless, applying the tree structure to learn $\FDFA$s is not a trivial task. For example,
instead of a binary tree used in~\cite{Kearns1994ICL}, we need to use a $K$-ary tree to learn syntactic FDFAs.
The use of  $K$-ary tree complicates the procedure of refining the classification tree and automaton construction.
More details will be provided in Sec.~\ref{sec:fdfa-learner-tree}.

\subsubsection*{$\FDFA$ teacher:}
The task of the $\FDFA$ teacher is to answer queries $\mathsf{Mem^{FDFA}}(u,v)$ and $\mathsf{Equ^{FDFA}}(F)$ posed by the $\FDFA$ learner. Answering $\mathsf{Mem^{FDFA}}(u,v)$ is easy. The $\FDFA$ teacher just needs to redirect the result of $\mathsf{Mem^{BA}}(uv^\omega)$ to the $\FDFA$ learner.
Answering equivalence query $\mathsf{Equ^{FDFA}}(F)$ is more tricky.\\\ \\
\underline{From an $\FDFA$ $F$ to a BA $B$:} The $\FDFA$ teacher needs to transform an $\FDFA$ $F$ to a BA $B$ to pose an equivalence query $\mathsf{Equ^{BA}}(B)$.
In Sec.~\ref{sec:buechi-builder}, we show that, in general, it is impossible to build a BA $B$ from an $\FDFA$ $F$ such that $\upword{\lang{B}} = \upword{F}$.
Therefore in Sec.~\ref{sec:buechi-builder}, we propose two methods to approximate $\upword{F}$,
namely the \emph{under-approximation} method
and the \emph{over-approximation} method.
As the name indicates, the under-approximation (respectively, over-approximation) method constructs
a BA $B$ from $F$ such that $\upword{\lang{B}} \subseteq \upword{F}$ (respectively, $\upword{F} \subseteq \upword{\lang{B}}$).
The under-approximation method is modified from the algorithm in~\cite{Calbrix1993}.
Note that if the $\FDFA$s are the canonical representations,
the BAs built by the under-approximation method recognize the same ultimately periodic words as the $\FDFA$s, which makes it a complete method
for BA learning (Lem.~\ref{lem:language-eq-fdfa} and \ref{lem:language-fdfa-to-ba}).
As for the over-approximation method, we cannot guarantee to get a BA $B$ such that  $\upword{\lang{B}}=\upword{F}$
even if the $F$ is a canonical representation, which thus makes it an incomplete method.
However, in the worst case, the over-approximation method produces a BA whose number of states is only quadratic
in the size of the $\FDFA$.
In contrast, the number of states in the BA constructed by the under-approximation method
is cubic in the size of the $\FDFA$.\\\ \\
%Moreover, the learning procedure based on over-approximation may terminate earlier since
%it approximates the superset of the language of the $\FDFA$.
%Therefore we keep it here in the hope that it will be helpful in some occasions.
\underline{Counterexample analysis:} If the $\FDFA$ teacher receives ``no" and a counterexample $uv^\omega$ from the BA teacher,
the $\FDFA$ teacher has to return ``no" as well as a valid decomposition
$(u', v')$ that can be used by the $\FDFA$ learner to refine $F$.
In Sec.~\ref{sec:ce-translation}, we show how the $\FDFA$ teacher chooses
a pair $(u', v')$ from $uv^\omega$ that allows $\FDFA$ learner to refine current $\FDFA$ $F$.
As the dashed line with a label $F$ in Fig.~\ref{fig:overview-architecture} indicates,
we use the current conjectured $\FDFA$ $F$ to analyze the counterexample.
The under-approximation method
and the over-approximation method of $\FDFA$ to BA translation require different
counterexample analysis procedures. More details will be provided in Sec.~\ref{sec:ce-translation}.

Once the BA teacher answers ``yes" for the equivalence query $\mathsf{Equ^{BA}}(B)$,  the $\FDFA$ teacher
will terminate the learning procedure and outputs a BA recognizing $\omegaRegLang$.
\iffalse
The main idea of the analysis is as follows.
Suppose the BA teacher returns ``no" and a counterexample $uv^\omega$ for equivalence query $\mathsf{Equ^{BA}}(B)$.
Through the analysis of the counterexample,
the $\FDFA$ teacher should either give a counterexample $(u', v')$
which is accepted by the $\FDFA$ $F$ with $u'v'^\omega \notin \omegaRegLang$
or output a counterexample $(u', v')$
which is not accepted by the $\FDFA$ $F$ with $u'v'^\omega \in \omegaRegLang$.

Take the $\FDFA$ $\fdfas$ in Fig.~\ref{fig:fdfa-example}
as an example.
Assume that $\FDFA$ teacher can transform $\fdfas$ to a B\"uchi automaton $B$ which accepts the same language as $\fdfas$.
From previous analysis, we know that $(ab)^\omega$ is accepted by $\fdfas$, therefore it is also accepted by $B$.
Suppose the $\FDFA$ teacher calls $\mathsf{Equ^{BA}}(B)$ and receives ``no" with
a counterexample $(ab)^\omega$, which means $(ab)^\omega$
does not belong to the unknown language $\omegaRegLang$ since $(ab)^\omega \in L(B)$.
The $\FDFA$ teacher will possibly answer the equivalence query for $\fdfas$ with ``no" and $(ab, ab)$
since $(ab, ab)$ is accepted by $\fdfas$ and $(ab)^\omega \notin \omegaRegLang$.
Note that $(aba, ba)$ is also a valid decomposition of $(ab)^\omega$ but it is
not a proper decomposition for $\fdfas$ since $(aba, ba)$ is not accepted by $\fdfas$.
\fi
\section{Table-based Learning Algorithm for $\FDFA$s}\label{sec:fdfa-learner-table}

In this section, we briefly introduce the table-based learner for FDFAs~\cite{Angluin2014}.
It employs a structure called \emph{observation table}~\cite{Angluin1987LS}
to organize the results obtained from queries and propose candidate FDFAs.
The table-based FDFA learner simultaneously runs several instances of DFA learners.
The DFA learners are very similar to the $L^*$ algorithm~\cite{Angluin1987LS},
except that they use different conditions to decide if two strings belong to the same state (based on Def.~\ref{def:cano-fdfas}).
More precisely, the FDFA learner uses one DFA learner $L^*_{\machine}$ for the leading automaton $\machine$, and
for each state $u$ in $\machine$, one DFA learner $L^*_{\proDFA^u}$ for each progress automaton $\proDFA^u$.
The table-based learning procedure works as follows.
The learner $L^*_{\machine}$ first closes the observation table by posing membership queries and then constructs a candidate for leading automaton $\machine$. For every state $u$ in $\machine$, the table-based algorithm runs an instance of DFA learner $L^*_{\proDFA^u}$ to find the progress automaton $\proDFA^u$.
When all DFA learners propose  candidate DFAs, the FDFA learner assembles them
to an FDFA $\fdfas = (\machine, \{\proDFA^u\})$ and then poses an equivalence query for it.
The $\FDFA$ teacher will either return \emph{``yes"} which means the learning algorithm succeeds or
return \emph{``no"} accompanying with a counterexample. Once receiving the counterexample, the table-based algorithm
will decide which DFA learner should refine its candidate DFA.
We refer interested readers to \cite{Angluin2014} for more details of the table-based algorithm.

\section{Tree-based Learning Algorithm for $\FDFA$s}\label{sec:fdfa-learner-tree}
In this section, we provide our tree-based learning algorithm for $\FDFA$s.
To that end, we first
define the classification tree structure for $\FDFA$ learning in Sec.~\ref{sec:sub:fdfa-learner-tree-data} and present the tree-based algorithm in Sec.~\ref{sec:sub:fdfa-learner-tree-algo}.

\subsection{Classification Tree Structure in Learning}\label{sec:sub:fdfa-learner-tree-data}
Here we present our classification tree structure for $\FDFA$ learning.
Compared to the classification tree defined in~\cite{Kearns1994ICL}, ours is not restricted to be a binary tree. Formally, a classification tree is a tuple $\learnT = (N, r, L_n, L_e)$ where
$N = I \cup T$ is a set of nodes consisting of the set $I$ of \emph{internal nodes} and the set $T$ of \emph{terminal nodes}, the node $r\in N$ is the root of the tree,
$L_n: N\rightarrow \finwords \cup (\finwords\times \finwords)$ labels an internal node with an \emph{experiment} and a terminal node with a \emph{state}, and
$L_e: N \times D\rightarrow N$ maps a parent node and a label to its corresponding child node, where the set of labels $D$ will be specified below.

During the learning procedure, we maintain a \emph{leading tree} $\learnT$ for the leading automaton $\machine$, and for every state $u$ in $\machine$, we keep a \emph{progress tree} $\learnT_u$ for the progress automaton $\proDFA^u$.
For every classification tree, we define a tree experiment function
$\func{TE}: \finwords \times (\finwords \cup (\finwords \times \finwords)) \rightarrow D$.
Intuitively, $\func{TE}(x, e)$ computes the entry value at row (state) $x$ and column (experiment) $e$ of an observation table in table-based learning algorithms. %For the leading automaton, an experiment is a $\omega$-word represented as a pair of finite words in $\finwords \times \finwords$. For a progress automaton, an experiment is a finite word in $\finwords$.
The labels of nodes in the classification tree $\learnT$ satisfy the follow invariants:
Let $t\in T$ be a terminal node labeled with a state $x=L_n(t)$.
Let $t'\in I$ be an ancestor node of $t$ labeled with an experiment $e=L_n(t')$.
Then the child of $t'$ following the label $\func{TE}(x,e)$, i.e., $L_e(t', \func{TE}(x,e))$, is either the node $t$ or an ancestor node of $t$.

\subsubsection*{Leading tree $\learnT$:}
The leading tree $\learnT$ for $\machine$ is a binary tree with labels $D=\{ \false, \true\}$.
The tree experiment function $\func{TE}(u, (x, y)) = \true$ iff $uxy^\omega \in \omegaRegLang$ (recall the definition of $\canoEq$ in Sec.~\ref{sec:preliminaries}) where $u,x,y \in\finwords$.
Intuitively, each internal node $n$ in $\learnT$ is labeled by an experiment $xy^\omega$ represented as $(x, y)$. For any word $u \in \finwords$,
 $uxy^\omega \in \omegaRegLang$ (or  $uxy^\omega \notin \omegaRegLang$) implies that  the equivalence class of $u$ lies in the $\true$-subtree (or $\false$-subtree) of $n$.
%In the following we suppose we have a function $\func{pickChild}: \finwords \cup (\finwords \times \finwords) \longrightarrow D$.
%Intuitively, given an internal node $n\in I$
%and $u_1, u_2\in\finwords$, if $u_1, u_2$ are not classified in the same leaf node of the tree and
%$L(n)$ is an experiment to differentiate them,
%then $\func{pickChild}(u_1, L(n))$ is not \emph{identical} to $\func{pickChild}(u_2, E_n(n))$.
%If the $D = \{\false, \true\}$, then $e_1$ is identical to $e_2$ iff $e_1 = e_2$
%for any $e_1,e_2\in D$.

%Since we always initialize the leading tree $T$ with
%only one leaf node $n$ labeled by $\emptyword$, we can build
%the leading automaton $\machine = (\alphabet, \states, \emptyword, \trans)$ from $T$
%where $\states = \{ L(n) \mid n \in O\}$, $\emptyword$ is the initial state and $\trans(s, a) = \func{sift}(sa)$.

\subsubsection*{Progress tree $\learnT_u$:} The progress trees $\learnT_{u}$ and the corresponding function
$\func{TE}(x, e)$ are defined based on the right congruences $\periodicEq$, $\syntacticEq$, and $\recurrentEq$ of canonical $\FDFA$s in Def.~\ref{def:cano-fdfas}.\vspace{1mm}\\
%Given a state $u$ in $\machine$, the progress tree $\learnT_u$ has labeling function $L_n(N)\in\finwords$.
\underline{Periodic $\FDFA$:}\label{sec:sub:fdfa-learner-tree-periodic}
The progress tree for periodic $\FDFA$ is also a binary tree labeled with $D=\{ \false, \true\}$.
The experiment function
$\func{TE}(x, e) = \true$ iff $u(xe)^\omega \in \omegaRegLang$ where $x,e\in\finwords$.\vspace{1mm}\\
\underline{Syntactic $\FDFA$:}\label{sec:sub:fdfa-learner-tree-syntactic}
The progress tree for syntactic $\FDFA$ is a $K$-ary tree with labels $D=\states\times\{\A, \B, \C\}$ where $\states$ is the set of states in the leading automaton $\machine$.
For all $x,e \in \finwords$, the experiment function $\func{TE}(x, e)= (\machine(ux),t)$,
where $t=\A$ iff $u = \machine(uxe) \wedge u(xe)^\omega \in \omegaRegLang$,
$t=\B$ iff $u = \machine(uxe) \wedge u(xe)^\omega \not\in \omegaRegLang$,
and $t=\C$ iff $u \neq  \machine(uxe)$.
%For all $x,e \in \finwords$, the experiment function $\func{TE}(x, e)= (\machine(ux),t)$, where $t=A$, $B$, and $C$ when ``$u = \machine(uxe) \wedge u(xe)^\omega \in \omegaRegLang$'', ``$u = \machine(uxe) \wedge u(xe)^\omega \not\in \omegaRegLang$'', and ``$u \neq  \machine(uxe)$'', respectively.

For example, assuming that $\machine$ is constructed from the right congruence $\canoEq$, for any two states $x$ and $y$ such that $\func{TE}(x, e)= \func{TE}(y, e)=(z,A)$, it must be the case that $\eqWith{L}{ux}{uy}$ because $\machine(ux)=z=\machine(uy)$. Moreover, the experiment $e$ cannot distinguish $x$ and $y$ because ${uxe}\canoEq{u}\canoEq{uye}$ and both $u(xe)^\omega,u(ye)^\omega\in\omegaRegLang$.
\vspace{1mm}\\
%The experiment function $\func{TE}(x, e)$ computes a pair $(s,t)$ where $s = \machine(ux)$, (a) $t=A$ iff $u = \machine(uxe) \wedge u(xe)^\omega \in \omegaRegLang$, (b) $t=B$ iff $u = \machine(uxe) \wedge u(xe)^\omega \not\in \omegaRegLang$, (c) $t=C$ iff $u \neq  \machine(uxe)$, and $x,e \in \finwords$.
\underline{Recurrent $\FDFA$:}\label{sec:sub:fdfa-learner-tree-recurrent}
The progress tree for recurrent $\FDFA$ is a binary tree  labeled with $D=\{ \false, \true\}$.
The function
$\func{TE}(x, e) = \true$ iff $u(xe)^\omega \in \omegaRegLang \wedge u = \machine(uxe)$ where $x,e\in\finwords$.

\subsection{Tree-based Learning Algorithm}\label{sec:sub:fdfa-learner-tree-algo}

The tree-based learning algorithm first initializes the leading tree $\learnT$ and the progress tree $\learnT_{\emptyword}$
as a tree with only one terminal node $r$ labeled by $\emptyword$.

From a classification tree $\learnT = (N, r, L_n, L_e)$, the learner constructs a candidate of a leading automaton $\machine=(\alphabet, \states, \emptyword, \trans)$ or a progress automaton $\proDFA^{u}=(\alphabet, \states, \emptyword, \trans,\acc)$ as follow.
The set of states is $\states = \{ L_n(t) \mid t \in T\}$.
Given $s\in\states$ and $a\in\alphabet$, the transition function $\trans(s, a)$ is constructed by the following procedure.
Initially the current node $n:=r$.
If $n$ is a terminal node, it returns $\trans(s, a)=L_n(n)$.
Otherwise, it picks a unique child $n'$ of $n$ with $L_e(n, \func{TE}(sa, L_n(n))) = n'$, updates the current node to $n'$,
and repeats the procedure\footnote{For syntactic $\FDFA$, it can happen that $\delta(s,a)$
goes to a ``new'' terminal node. A new state for the $\FDFA$ is identified in such a case.}.
By Def.~\ref{def:cano-fdfas}, the set of accepting states $\acc$ of a progress automaton can be identified from the structure of $\machine$ with the help of membership queries.
For periodic $\FDFA$, $\acc = \{v \mid uv^\omega\in\omegaRegLang, v\in\states\}$ and for
syntactic and recurrent $\FDFA$, $\acc = \{v \mid \eqWith{\machine}{uv}{u}, uv^\omega\in\omegaRegLang, v\in\states\}$.

Whenever the learner has constructed an $\FDFA$ $\fdfas = (\machine, \{\proDFA^u\})$,
it will pose an equivalence query for $\fdfas$. If the teacher returns ``no" and a counterexample $(u,v)$, the learner has to refine the classification tree and propose another candidate of $\FDFA$.

\begin{definition}[Counterexample for FDFA Learner]\label{def:ce-for-fdfa-learner}
Given the conjectured $\FDFA$ $\fdfas$ and the target language $\omegaRegLang$, we say that the counterexample
\begin{itemize}
	\item $(u,v)$ is \emph{positive} if $\eqWith{\machine}{uv}{u}$, $uv^\omega\in \upword{\omegaRegLang}$, and $(u,v)$ is not accepted by $\fdfas$,
	\item $(u,v)$ is \emph{negative} if $\eqWith{\machine}{uv}{u}$, $uv^\omega\not\in \upword{\omegaRegLang}$, and $(u,v)$ is accepted by $\fdfas$.
\end{itemize}
\end{definition}

We remark that in our case all counterexamples $(u, v)$ from the $\FDFA$ teacher satisfy the constraint $\eqWith{\machine}{uv}{u}$,
which corresponds to the \emph{normalized factorization} form in~\cite{Angluin2014}.

\subsubsection*{Counterexample guided refinement of $\fdfas$:}
Below we show how to refine the classification trees based on a negative counterexample $(u,v)$. The case of a positive counterexample is symmetric.
By definition, we have  $uv\sim_M u$, $uv^\omega\notin \upword{L}$ and $(u,v)$ is accepted by $\fdfas$. Let $\stateWord{u} = \machine(u)$, if $\stateWord{u}v^\omega \in \upword{L}$, the refinement of the leading tree is performed, otherwise $\stateWord{u}v^\omega \notin \upword{L}$, the  refinement of the progress tree is performed.

%Below we provide the details for refining the leading automaton, which is inspired by the table based framework in~\cite{Angluin2014}.

\subsubsection*{Refinement for the leading tree:} In the leading automaton $\machine$ of the conjectured $\FDFA$, if a state $p$ has a transition to a state $q$ via a letter $a$,
i.e, $q = \machine(p a)$, then $pa$ has been assigned to the terminal node labeled by $q$ during the construction of $\machine$. If one also finds an experiment $e$ such that $\func{TE}(q,e)\neq \func{TE}(pa,e)$, then we know that $q$ and $pa$ should not belong to the same state in a leading automaton. W.l.o.g., we assume $\func{TE}(q,e)=F$. In such a case, the leading tree can be refined by replacing the terminal node labeled with $q$ by a tree such that (i) its root is labeled by $e$, (ii) its left child is a terminal node labeled by $q$, and (iii) its right child is a terminal node labeled by $pa$.

Below we discuss how to extract the required states $p,q$ and experiment $e$.
Let $\size{u} = n$ and
$s_0 s_1 \cdots s_n$ be the run of $\machine$ over $u$. Note that $s_0 = \machine(\emptyword)=\emptyword$ and $s_n = \machine(u) =\stateWord{u}$.
From the facts that $(u,v)$ is a negative counterexample and $\stateWord{u}v^\omega \in \upword{L}$ (the condition to refine the leading tree), we have $\func{TE}(s_0, (u[1\cdots n], v))= \false\neq  \true=\func{TE}(s_n, (\emptyword, v))= \func{TE}(s_n, (u[n+1 \cdots n], v))$ because $uv^\omega\notin\upword{\omegaRegLang}$ and $\stateWord{u}v^\omega \in \upword{\omegaRegLang}$.
Recall that we have $w[j\cdots k] =\emptyword$ when $j>k$. Therefore, there must exist a smallest $j\in[1\cdots n]$ such that $\func{TE}(s_{j-1}, (u[j\cdots n], v)) \neq \func{TE}(s_{j}, (u[j+1\cdots n], v))$.
It follows that we can use the experiment $e=(u[j+1\cdots n], v)$
to distinguish $q=s_{j}$ and $pa=s_{j-1}u[j]$. %\lz{here $\in L$ needs to be corrected!}

\begin{example}\label{example1}
Consider a conjectured $\FDFA$ $\fdfas$ in Fig.~\ref{fig:fdfa-example} produced during the process of learning $\omegaRegLang = a^\omega + b^\omega$.
The corresponding leading tree $\learnT$ and the progress tree $\learnT_\emptyword$ are depicted on the left of Fig.~\ref{fig:refine-leading-tree}.
The dotted line is for the $\false$ label and the solid one is for the $\true$ label.
Suppose the $\FDFA$ teacher returns a negative counterexample $(ab, b)$.
The leading tree has to be refined since $\machine(ab)b^\omega=b^\omega\in \omegaRegLang$.
We find an experiment $(b, b)$ to differentiate $\emptyword$ and $a$ using the procedure above and update the leading tree $\learnT$ to $\learnT'$. The leading automaton $\machine$ constructed from $\learnT'$ is depicted on the right of Fig.~\ref{fig:refine-leading-tree}.
\end{example}
\vspace{-.7cm}
\begin{figure}
\begin{tikzpicture}[shorten >=0.5pt,node distance=1.5cm,on grid,auto,framed]
    \begin{scope}
        \node [draw] (q0) {$\emptyword$};
        \node[] at ($(q0) + (0, 0.7)$) {$\learnT$};
    \end{scope}

    \begin{scope}[xshift=2cm]
       \node[] (q01) {$\emptyword$};
       \node[ draw] (q02) at ($(q01) + (-0.7, -0.8)$) {$\emptyword$};
       \node[ draw] (q03) at ($(q01) + (0.7, -0.8)$) {$a$};
       \node[] at ($(q01) + (0, 0.7)$) {$\learnT_\emptyword$};
       \draw[dashed] (q01)  --  (q02);
       \draw (q01)  --  (q03);
       \draw[->] ($(q01) + (1, 0)$) -- node[auto] {CE $(ab, b)$} ($(q01) + (3, 0)$);
    \end{scope}

    \begin{scope}[xshift=6.5cm]
       \node[] (q01) {$(b, b)$};
       \node[ draw] (q02) at ($(q01) + (-0.7, -0.8)$) {$a$};
       \node[ draw] (q03) at ($(q01) + (0.7, -0.8)$) {$\emptyword$};
       \node[] at ($(q01) + (0, 0.7)$) {$\learnT'$};
       \draw[dashed] (q01)  --  (q02);
       \draw (q01)  --  (q03);
    \end{scope}

    \begin{scope}[xshift=9.7cm]
      \node[initial,state, inner sep=3pt,minimum size=2pt] (q0)      {$\emptyword$};
      \node[state, inner sep=3pt,minimum size=2pt]         (q1) [right =of q0]  {$a$};
      \node[] at ($(q0) + (-0.7, 0.7)$) {$\machine$};

      \path[->] (q0)  edge node {a} (q1)
                      edge [loop below] node {b} (q0)
                (q1)  edge [loop above]  node {a} (q1)
                      edge [loop below]  node {b} (q1);
    \end{scope}
\end{tikzpicture}
\vspace{-.3cm}
\caption{Refinement of the leading tree and the corresponding leading automaton}
\vspace{-1.2cm}
\label{fig:refine-leading-tree}
\end{figure}

\subsubsection*{Refinement for the progress tree:}
Here we explain the case of periodic $\FDFA$s. The other cases are similar and we leave the details in Appendix~\ref{app:ref_pt}.
Recall that $\stateWord{u}v^\omega \notin \upword{\omegaRegLang}$ and thus the algorithm refines the progress tree $\learnT_{\stateWord{u}}$.
Let $\size{v} = n$ and $h = s_0 s_1 \cdots s_n$ be the corresponding run of $\proDFA^{\stateWord{u}}$ over $v$.
Note that $s_0 = \proDFA^{\stateWord{u}}(\emptyword)=\emptyword$ and $s_n = \proDFA^{\stateWord{u}}(v)= \stateWord{v} $.
We have $\stateWord{u}(\stateWord{v})^\omega \in\upword{\omegaRegLang}$ because
$\stateWord{v}$ is an accepting state.
From the facts that $\stateWord{u}v^\omega \notin \upword{\omegaRegLang}$ and $\stateWord{u}(\stateWord{v})^\omega \in\upword{\omegaRegLang}$,
we have $\func{TE}(s_0, v[1\cdots n])=\false\neq \true=\func{TE}(s_n, \emptyword)=\func{TE}(s_n, v[n+1\cdots n])$.
Therefore, there must exist a smallest $j\in[1\cdots n]$ such that $\func{TE}(s_{j-1}, v[j\cdots n]) \neq \func{TE}(s_{j}, v[j+1\cdots n])$.
It follows that we can use the experiment $e=v[j+1\cdots n]$
to distinguish $q=s_{j}$, $pa=s_{j-1}v[j]$ and refine the progress tree $\learnT_{\stateWord{u}}$.

\subsubsection*{Optimization:}
Example~\ref{example1} also illustrates the fact that the counterexample
$(ab, b)$ may not be eliminated right away after the refinement.
In this case, it is still a valid counterexample (assuming that the progress tree $\learnT_\emptyword$ remains unchanged).
Thus as an optimization in our tool, one can repeatedly use the counterexample until it is eliminated.

\section{From $\FDFA$ to B\"uchi Automata}\label{sec:buechi-builder}

\begin{wrapfigure}{r}{0.45\textwidth}
	\vspace{-0.8cm}
	\centering
\begin{tikzpicture}[shorten >=0.5pt,node distance=1.2cm,on grid,auto,framed, scale=0.8]

    \begin{scope}
       \node[initial,state, inner sep=2.5pt,minimum size=2.5pt] (q0)      {$\emptyword$};
       \node[] at ($(q0) + (-1, 0.7)$) {$\machine$};

       \path[->]
           (q0) edge [loop above] node {a} (q0)
                edge [loop below] node {b} (q0);
    \end{scope}

    \begin{scope}[xshift=2.5cm]
      \node[initial,state, inner sep=2.5pt,minimum size=2.5pt] (q0)      {$\emptyword$};
      \node[state, accepting, inner sep=2pt,minimum size=2pt]         (q1) [right =of q0]  {$a$};
      \node[state, inner sep=2pt,minimum size=2pt]         (q2) [below =of q0]  {$b$};
      \node[] at ($(q0) + (-1, 0.7)$) {$\proDFA^{\emptyword}$};

      \path[->] (q0)  edge node {a} (q1)
                      edge node {b} (q2)
                (q1)  edge [loop above]  node {b} (q1)
                      edge  node {a} (q2)
                (q2)  edge [loop left]  node {a} (q2)
                      edge [loop right] node {b} (q2);
    \end{scope}
\end{tikzpicture}
\vspace{-0.3cm}
\caption{An $\FDFA$ $\fdfas$ such that $UP(\fdfas)$ does not characterize an $\omega$-regular language}\label{fig:fdfa-example-nonregular}
\vspace{-1cm}
\end{wrapfigure}
Since the FDFA teacher exploits the BA teacher for answering equivalence queries, it needs first to convert the given FDFA into a BA.
Unfortunately, with the following example, we show that in general, it is impossible to construct a \emph{precise} BA $B$ for an $\FDFA$ $\fdfas$ such that $\upword{\lang{B}} = \upword{\fdfas}$.

\begin{example}
Consider a non-canonical $\FDFA$ $\fdfas$ in Fig.~\ref{fig:fdfa-example-nonregular},
we have $\upword{\fdfas} = \bigcup_{n=0}^{\infty} \{a,b\}^* \cdot (ab^n)^\omega$.
We assume that $\upword{\fdfas}$ characterizes an $\omega$-regular language $\omegaRegLang$. It is known that the periodic $\FDFA$ recognizes exactly the $\omega$-regular language and
the index of each right congruence is finite \cite{Angluin2014}.
However, we can show that the right congruence $\doubleEq^{\emptyword}_P$ of a periodic $\FDFA$ of $\omegaRegLang$ is of infinite index.
Observe that $ab^k\not\doubleEq^{\emptyword}_P ab^j$ for any $k, j\geq 1$ and $k \neq j$, because $\emptyword\cdot(ab^k \cdot ab^k)^\omega \in\upword{\fdfas}$ and $\emptyword\cdot(ab^j \cdot ab^k)^\omega \notin\upword{\fdfas}$.
It follows that $\doubleEq^{\emptyword}_P$ is of infinite index. We conclude that $\upword{\fdfas}$ cannot characterize an $\omega$-regular language. %\lz{again careful with notation $\in L$!!}
\end{example}

We circumvent the above problem by proposing two BAs $\underline B, \overline B$, which under- and over-approximate
the ultimately periodic words of an $\FDFA$.
Given an $\FDFA$ $\fdfas = (\machine, \{\proDFA^u\})$ with $\machine=(\alphabet, \states, \initState, \trans)$ and $\proDFA^u=(\alphabet, \states_u, s_u, \trans_u, \acc_u)$ for all $u\in Q$, we define $\machine^s_v=(\alphabet, \states, s, \trans, \{v\})$ and $(\proDFA^{u})^s_{v}=(\alphabet, \states_u, s, \trans_u, \{v\})$, i.e., the DFA obtained from $\machine$ and $\proDFA^{u}$ by setting their initial and accepting states as $s$ and $\{v\}$, respectively.
Define $N_{(u,v)}=\{v^\omega\mid \eqWith{\machine}{uv}{u} \wedge v\in\lang{(\proDFA^{u})^{s_u}_{v}}\}$. Then $\upword{\fdfas} = \bigcup_{u\in\states, v\in\acc_u} \lang{\machine^{\initState}_u}\cdot N_{(u,v)}$.

We construct $\overline B$ and $\underline B$ by approximating the set $N_{(u,v)}$. For $\overline B$, we first define an FA $\overline P_{(u,v)}=( \alphabet, \states_{u,v}, s_{u,v}, \{f_{u,v}\}, \trans_{u,v})=\machine^u_u \times (\proDFA^{u})^{s_u}_{v}$ and let $\overline N_{(u,v)}=\lang{\overline P_{(u,v)}}^\omega$.
Then one can construct a BA $( \alphabet, \states_{u,v}\cup\{f\}, s_{u,v}, \{f\}, \trans_{u,v}\cup \trans_f)$ recognizing $\overline N_{(u,v)}$ where $f$ is a ``fresh'' state and $\trans_f=\{(f, \epsilon, s_{u,v}), (f_{u, v}, \epsilon, f)\}$.
For $\underline B$, we define an FA $\underline P_{(u,v)}=\machine^u_u\times (\proDFA^{u})^{s_u}_{v}\times (\proDFA^{u})^v_{v}$ and let $\underline N_{(u,v)}=\lang{\underline P_{(u,v)}}^\omega$. One can construct a BA recognizing $\underline N_{(u,v)}$ using a similar construction to the case of $\overline N_{(u,v)}$.
In Def.~\ref{def:fdfa-to-buechi} we show how to construct BAs $\overline B$ and $\underline B$ s.t. $\upword{\lang{\overline B}} = \bigcup_{u\in\states, v\in\acc_u} \lang{\machine^{\initState}_u}\cdot \overline N_{(u,v)}$ and $\upword{\lang{\underline B}} = \bigcup_{u\in\states, v\in\acc_u} \lang{\machine^{\initState}_u}\cdot \underline N_{(u,v)}$.

\begin{definition}\label{def:fdfa-to-buechi}
Let $\fdfas =(\machine, \{\proDFA^u\})$ be an $\FDFA$ where
$\machine = ( \alphabet, \states, \initState, \trans)$
and $\proDFA^{u} = ( \alphabet, \states_{u}, s_{u}, \acc_u, \trans_{u})$ for every $u \in \states$.
Let $( \alphabet, \states_{u,v}, s_{u,v}, \{f_{u,v}\}, \trans_{u,v})$ be a BA recognizing $\underline N_{(u,v)}$ (respectively $\overline N_{(u,v)}$). Then the BA $\underline B$ (respectively $\overline B$) is defined as the tuple
$$\left( \alphabet, \states\cup \bigcup\limits_{u\in\states, v\in \acc_u} \states_{u,v}, \initState, \bigcup\limits_{u\in\states, v\in \acc_u} \{f_{u, v} \}, \trans \cup \bigcup\limits_{u\in\states, v\in\acc_u} \trans_{u,v}\cup \bigcup\limits_{u\in\states, v\in \acc_u} \{(u, \epsilon, s_{u,v})\}\right).$$
\end{definition}

\begin{restatable}[Sizes and Languages of $\buechiL$ and $\buechiU$]{lemma}{bainclusion}
\label{lem:fdfa-to-buechi-inclusion}
Let $\fdfas$ be an FDFA and $\buechiL$, $\buechiU$ be the BAs constructed from $\fdfas$ by Def.~\ref{def:fdfa-to-buechi}.
Let $n$ and $k$ be the numbers of states in the leading automaton and the largest progress automaton of $\fdfas$.
The number of states of $\buechiL$ and $\buechiU$ are in $\mathcal{O}(n^2k^3)$ and $\mathcal{O}(n^2k^2)$, respectively.
Moreover, $\upword{\lang{\buechiL}}\subseteq \upword{\fdfas}\subseteq \upword{\lang{\buechiU}}$ and we have $\upword{\lang{\buechiL}} = \upword{\fdfas}$ when $\fdfas$ is a canonical $\FDFA$.
\end{restatable}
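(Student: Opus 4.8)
The plan is to prove the statement in two parts: first the size bounds, then the language (in)clusions and the exactness for canonical FDFAs.

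\paragraph{Size bounds.}
First I would count the states contributed by each component automaton $\overline P_{(u,v)}$ and $\underline P_{(u,v)}$, and then sum over all pairs $(u,v)$ with $u \in \states$ and $v \in \acc_u$. For $\overline B$, each $\overline P_{(u,v)} = \machine^u_u \times (\proDFA^u)^{s_u}_v$ is a product of an $n$-state automaton with a $k$-state automaton, giving $\mathcal{O}(nk)$ states, plus one fresh state $f$. There are at most $n$ choices for $u$ and, for each, at most $k$ accepting states $v \in \acc_u$, hence at most $\mathcal{O}(nk)$ pairs. Adding the $n$ states of $\machine$ (reused as connection points via the $\epsilon$-transitions $(u,\epsilon,s_{u,v})$), the total is $n + \mathcal{O}(nk)\cdot\mathcal{O}(nk) = \mathcal{O}(n^2k^2)$. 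For $\underline B$, each $\underline P_{(u,v)} = \machine^u_u \times (\proDFA^u)^{s_u}_v \times (\proDFA^u)^v_v$ carries an extra $k$-state factor, so each component has $\mathcal{O}(nk^2)$ states; multiplying again by the $\mathcal{O}(nk)$ pairs yields $\mathcal{O}(n^2k^3)$. These are routine counting arguments once the product structure is unwound.

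\paragraph{Language characterization.}
The heart of the argument is the decomposition $\upword{\fdfas} = \bigcup_{u\in\states,\, v\in\acc_u} \lang{\machine^{\initState}_u}\cdot N_{(u,v)}$ stated just before Def.~\ref{def:fdfa-to-buechi}. Since the excerpt asserts that the constructed BAs satisfy $\upword{\lang{\buechiU}} = \bigcup \lang{\machine^{\initState}_u}\cdot \overline N_{(u,v)}$ and $\upword{\lang{\buechiL}} = \bigcup \lang{\machine^{\initState}_u}\cdot \underline N_{(u,v)}$, the inclusions reduce to comparing $\underline N_{(u,v)}$, $N_{(u,v)}$, and $\overline N_{(u,v)}$ pointwise. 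I would verify $\underline N_{(u,v)} \subseteq N_{(u,v)} \subseteq \overline N_{(u,v)}$ for each pair. The right inclusion is the easy direction: $N_{(u,v)} = \{v^\omega \mid \eqWith{\machine}{uv}{u},\, v\in\lang{(\proDFA^u)^{s_u}_v}\}$, and any such $v$ is accepted by $\machine^u_u \times (\proDFA^u)^{s_u}_v = \overline P_{(u,v)}$, so $v \in \lang{\overline P_{(u,v)}}$ and thus $v^\omega \in \lang{\overline P_{(u,v)}}^\omega = \overline N_{(u,v)}$. For the left inclusion, a word in $\underline N_{(u,v)} = \lang{\underline P_{(u,v)}}^\omega$ is an $\omega$-power of a single word $w$ accepted by $\underline P_{(u,v)} = \machine^u_u\times(\proDFA^u)^{s_u}_v\times(\proDFA^u)^v_v$; the first two factors force $\eqWith{\machine}{uw}{u}$ and $w\in\lang{(\proDFA^u)^{s_u}_v}$, while the third factor (reading from $v$ back to $v$) guarantees that $w^\omega$ stays inside $N_{(u,v)}$ as a genuine member, so $w^\omega \in N_{(u,v)}$. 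Taking unions and prepending $\lang{\machine^{\initState}_u}$ preserves these inclusions, yielding $\upword{\lang{\buechiL}}\subseteq\upword{\fdfas}\subseteq\upword{\lang{\buechiU}}$.

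\paragraph{Exactness for canonical FDFAs and main obstacle.}
For the final claim, $\upword{\lang{\buechiL}} = \upword{\fdfas}$ when $\fdfas$ is canonical, it suffices to show the reverse inclusion $N_{(u,v)} \subseteq \underline N_{(u,v)}$ holds for canonical FDFAs. The subtlety is that $\overline N_{(u,v)}$ may be strictly larger than $N_{(u,v)}$: $\lang{\overline P_{(u,v)}}^\omega$ can contain $\omega$-words formed by concatenating \emph{different} accepted finite words, whereas $N_{(u,v)}$ only contains pure powers $v^\omega$ of a single representative. The third product factor $(\proDFA^u)^v_v$ in $\underline P_{(u,v)}$ is precisely what rules out such mismatched concatenations, and I expect verifying this to be the main obstacle. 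The key property I would invoke is the saturation / periodicity structure of canonical FDFAs from Def.~\ref{def:cano-fdfas}: for a canonical (in particular recurrent or syntactic) FDFA, if $w^\omega \in \upword{\fdfas}$ with $\eqWith{\machine}{uw}{u}$ and $w$ reaching the accepting state $v$ from $s_u$, then every ``return segment'' of $w$ that goes from the state $v$ back to $v$ is itself an accepted loop, so the run of any decomposition of $w^\omega$ can be arranged to pass through $v$ cyclically, placing $w^\omega$ in $\lang{\underline P_{(u,v)}}^\omega$. Concretely I would appeal to Lem.~\ref{lem:language-eq-fdfa} (that a canonical FDFA captures exactly $\upword{\omegaRegLang}$) together with the right-congruence saturation to argue that no ultimately periodic word is lost when we restrict from $\overline P$ to $\underline P$, giving $N_{(u,v)} \subseteq \underline N_{(u,v)}$ and hence equality with $\upword{\fdfas}$ for canonical inputs.
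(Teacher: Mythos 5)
Your size counting and the direction $\upword{\fdfas}\subseteq\upword{\lang{\buechiU}}$ match the paper's proof, but your argument for $\upword{\lang{\buechiL}}\subseteq\upword{\fdfas}$ contains a genuine gap. You assert that ``a word in $\underline N_{(u,v)} = \lang{\underline P_{(u,v)}}^\omega$ is an $\omega$-power of a single word $w$ accepted by $\underline P_{(u,v)}$.'' This is false: the $\omega$-power of a \emph{language} consists of all infinite concatenations $w_1 w_2 w_3\cdots$ of possibly \emph{distinct} words $w_i\in\lang{\underline P_{(u,v)}}$, so the pointwise inclusion $\underline N_{(u,v)}\subseteq N_{(u,v)}$ you rely on does not hold (indeed $\lang{\underline P_{(u,v)}}^\omega$ even contains non-ultimately-periodic words, while $N_{(u,v)}$ contains only pure powers). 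The whole difficulty of this direction is to take an ultimately periodic word of $\lang{\machine^{\initState}_{u}}\cdot\lang{\underline P_{(u,v)}}^\omega$ whose given factorization uses different blocks and \emph{redecompose} it as $x y^\omega$ with $x\in\lang{\machine^{\initState}_{u}}$ and $y\in\lang{\underline P_{(u,v)}}$. The paper does this by observing that the three-fold product makes $\lang{\underline P_{(u,v)}}$ closed under concatenation, i.e.\ $(\lang{\underline P_{(u,v)}})^+=\lang{\underline P_{(u,v)}}$ and $\lang{\machine^{\initState}_{u}}\cdot(\lang{\underline P_{(u,v)}})^*=\lang{\machine^{\initState}_{u}}$, and then invoking Lemma~5 of Calbrix et al.~\cite{Calbrix1993} (in essence a pigeonhole argument on cut points at equal phases of the period) to produce the decomposition $(x,y)$ accepted by $\fdfas$. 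You correctly identify that the third factor $(\proDFA^u)^v_v$ ``rules out mismatched concatenations,'' but you use it in the wrong place: its role is precisely to give the concatenation closure that legitimizes the redecomposition step you skipped, not to force single-word powers.

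Your plan for the canonical case also deviates from what can actually be proved. You aim to show $N_{(u,v)}\subseteq\underline N_{(u,v)}$ \emph{for the same pair} $(u,v)$, claiming the run of $w^\omega$ ``can be arranged to pass through $v$ cyclically.'' In general it cannot: for $y^\omega\in N_{(u,v)}$ one needs a power $y^k$ that simultaneously runs $s_u\to v$ and $v\to v$ in $\proDFA^u$, and the state visited cyclically along the run of $y, y^2, y^3,\dots$ may be a \emph{different} accepting state. The paper's actual route (Lem.~\ref{lem:word-fdfa-buechiL-preserve} together with Prop.~\ref{prop:periodic-word-implication}) first uses canonicity to show that if $(u,y)$ is accepted then $(u,y^k)$ is accepted for \emph{every} $k\geq 1$ (this is where the right congruences of Def.~\ref{def:cano-fdfas} enter, and it is the precise form of the ``saturation'' you gesture at), and then applies a pigeonhole argument to find some accepting state $f'$ with $f'=\proDFA^{u}(y^k)=\proDFA^{u}(y^{2k})$, concluding $uy^\omega\in\upword{\lang{\buechiL}}$ via the component for the pair $(u,f')$ --- which is why the union over all accepting states in Def.~\ref{def:fdfa-to-buechi} is essential. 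So your high-level architecture is the paper's, but both nontrivial directions are missing their key combinatorial step: the Calbrix-style redecomposition for soundness, and the power-acceptance-plus-pigeonhole argument (with a change of accepting state) for completeness on canonical $\FDFA$s.
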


The properties below will be used later in analyzing counterexamples.

\begin{restatable}{lemma}{underword}\label{lem:word-fdfa-buechiL-preserve}
Given an $\FDFA$ $\fdfas = (\machine, \{\proDFA^u\})$, and $\buechiL$ the BA constructed from $\fdfas$
by Def.~\ref{def:fdfa-to-buechi}. If $(u, v^k)$ is accepted by $\fdfas$ for every $k\geq 1$, then
$uv^\omega \in \upword{\lang{\buechiL}}$.
\end{restatable}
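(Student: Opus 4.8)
The plan is to exploit the explicit decomposition
$\upword{\lang{\buechiL}} = \bigcup_{p\in\states,\, v'\in\acc_p} \lang{\machine^{\initState}_p}\cdot \underline N_{(p,v')}$
established just before Def.~\ref{def:fdfa-to-buechi}, and to place $uv^\omega$ inside a single term of this union. Write $\stateWord u = \machine(u)$. Since $\machine(u)=\stateWord u$ we get $u\in\lang{\machine^{\initState}_{\stateWord u}}$ for free, so it suffices to produce an accepting state $v'\in\acc_{\stateWord u}$ of the progress automaton $\proDFA^{\stateWord u}$ together with an exponent $m\ge 1$ such that $v^m\in\lang{\underline P_{(\stateWord u,v')}}$. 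Granting that, $v^\omega=(v^m)^\omega\in\lang{\underline P_{(\stateWord u,v')}}^\omega=\underline N_{(\stateWord u,v')}$ (here I use that $v\in\poswords$, so each block $v^m$ is nonempty and the $\omega$-power is well defined), whence $uv^\omega\in\lang{\machine^{\initState}_{\stateWord u}}\cdot\underline N_{(\stateWord u,v')}\subseteq\upword{\lang{\buechiL}}$.

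First I would unpack the hypothesis. For every $k\ge1$, acceptance of $(u,v^k)$ by $\fdfas$ means $\eqWith{\machine}{uv^k}{u}$ and $v^k\in\lang{\proDFA^{\stateWord u}}$. The first equality yields $\trans(\stateWord u,v^k)=\machine(uv^k)=\stateWord u$ for all $k\ge1$, i.e. the leading component loops at $\stateWord u$ under any power of $v$. Writing $q_k:=\proDFA^{\stateWord u}(v^k)=\trans_{\stateWord u}(s_{\stateWord u},v^k)$, the second condition gives $q_k\in\acc_{\stateWord u}$ for every $k\ge1$; in particular any state of the form $q_m$ with $m\ge1$ that I end up choosing will automatically be accepting.

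Next I would locate the right $m$ via an idempotent power. Let $f:\states_{\stateWord u}\to\states_{\stateWord u}$ be $f(q)=\trans_{\stateWord u}(q,v)$, so that $q_k=f^k(s_{\stateWord u})$. Because $\states_{\stateWord u}$ is finite, the powers $f,f^2,\dots$ are eventually periodic and some power is idempotent: there is $m\ge1$ with $f^{2m}=f^m$. Set $v':=q_m$. I would then verify the three synchronized membership conditions defining $\lang{\underline P_{(\stateWord u,v')}}=\lang{\machine^{\stateWord u}_{\stateWord u}\times(\proDFA^{\stateWord u})^{s_{\stateWord u}}_{v'}\times(\proDFA^{\stateWord u})^{v'}_{v'}}$ for the word $v^m$: (i) $\trans(\stateWord u,v^m)=\stateWord u$, which is the leading-loop fact from the previous paragraph; (ii) $\trans_{\stateWord u}(s_{\stateWord u},v^m)=q_m=v'$, true by the definition of $v'$; and (iii) $\trans_{\stateWord u}(v',v^m)=v'$, which unfolds to $f^m(q_m)=f^m(f^m(s_{\stateWord u}))=f^{2m}(s_{\stateWord u})=f^m(s_{\stateWord u})=q_m=v'$ using idempotency. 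Since $m\ge1$ we also have $v'=q_m\in\acc_{\stateWord u}$. Hence $v^m\in\lang{\underline P_{(\stateWord u,v')}}$, which closes the argument sketched in the first paragraph.

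The real content, and the step I expect to be the only nontrivial one, is condition (iii): the \emph{same} finite word $v^m$ must simultaneously drive the progress automaton from $s_{\stateWord u}$ to $v'$ and close a self-loop at $v'$, so that the three product components of $\underline P_{(\stateWord u,v')}$ all accept $v^m$ and the iterated loop in $\buechiL$ keeps revisiting the accepting state. Choosing $v'$ as the state reached by an \emph{idempotent} power of the ``read-$v$'' map is precisely what forces this self-looping; everything else (the leading loop, membership of $v^m$ in $\lang{\proDFA^{\stateWord u}}$, and the final placement into one union term of $\upword{\lang{\buechiL}}$) is routine bookkeeping.
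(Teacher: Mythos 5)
Your proof is correct and takes essentially the same route as the paper's: both reduce the claim to producing an exponent $m\geq 1$ and an accepting state $v'$ of $\proDFA^{\stateWord{u}}$ such that $v^m$ loops on $\stateWord{u}$ in $\machine$, reaches $v'$ from $s_{\stateWord{u}}$, and self-loops on $v'$, and then place $u(v^m)^\omega$ into the term $\lang{\machine^{\initState}_{\stateWord{u}}}\cdot\bigl(\lang{\underline{P}_{(\stateWord{u},v')}}\bigr)^\omega$ of $\upword{\lang{\buechiL}}$. The only difference is cosmetic: the paper extracts the key condition $\proDFA^{\stateWord{u}}(v^k)=\proDFA^{\stateWord{u}}(v^{2k})$ by an explicit pigeonhole argument with a two-case choice of $k$, while you get it at once from an idempotent power $f^{2m}=f^m$ of the read-$v$ transformation --- a cleaner packaging of the same finite-state periodicity argument.
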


\begin{restatable}{lemma}{overword}\label{lem:lang-buechiU-period}
Given an $\omega$-word $w \in \upword{\lang{\overline B}}$,
there exists a decomposition $(u, v)$ of $w$ and $n\geq 1$ such that
$v = v_1\cdot v_2\cdots v_n$ and for all $i \in [1\cdots n]$, $v_i\in\lang{\proDFA^{\machine(u)}}$ and $\eqWith{\machine}{uv_i}{u}$.
\end{restatable}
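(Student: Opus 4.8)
The plan is to unfold the language of $\overline B$ into a factorization of $w$ through a single progress automaton, and then exploit the ultimate periodicity of $w$ to regroup those factors into a genuine periodic decomposition. I would start from the identity $\upword{\lang{\overline B}}=\bigcup_{q\in\states,\,f\in\acc_q}\lang{\machine^{\initState}_q}\cdot\overline N_{(q,f)}$ recorded just before Def.~\ref{def:fdfa-to-buechi}. Since $w\in\upword{\lang{\overline B}}$, there are a state $q\in\states$ and $f\in\acc_q$ with $w=x\cdot y$, where $x\in\lang{\machine^{\initState}_q}$ (so $\machine(x)=q$) and $y\in\overline N_{(q,f)}=\lang{\overline P_{(q,f)}}^\omega$. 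Hence $y=z_1 z_2 z_3\cdots$ with each $z_j\in\lang{\overline P_{(q,f)}}=\lang{\machine^q_q}\cap\lang{(\proDFA^q)^{s_q}_f}$; as $y$ is infinite, infinitely many factors are nonempty, so after discarding the empty ones I may assume every $z_j\neq\emptyword$. By the definition of $\overline P_{(q,f)}$, each $z_j$ loops at $q$ in $\machine$ (i.e.\ $\trans(q,z_j)=q$) and reaches $f\in\acc_q$, so $z_j\in\lang{\proDFA^q}$.

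Next I would turn this infinite factorization into a period. Set the checkpoint positions $t_0=\size{x}$ and $t_j=\size{x z_1\cdots z_j}$; these strictly increase and tend to infinity. As $w$ is ultimately periodic, fix a decomposition $w=\alpha\beta^\omega$ and let $P=\size{\beta}$. Infinitely many checkpoints exceed $\size{\alpha}$, so by the pigeonhole principle two of them, $t_a<t_b$ with $t_a,t_b\geq\size{\alpha}$, satisfy $t_a\equiv t_b\pmod P$. Inside the periodic suffix, equal residues modulo $P$ give equal tails, so the suffix of $w$ from position $t_a+1$ coincides with that from position $t_b+1$; writing $g=w[t_a+1\cdots t_b]$ this forces $w=w[1\cdots t_a]\cdot g^\omega$.

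Finally I would set $u=w[1\cdots t_a]=x z_1\cdots z_a$, $v=g=z_{a+1}\cdots z_b$, $n=b-a\geq 1$, and $v_i=z_{a+i}$ for $i\in[1\cdots n]$, so that $w=uv^\omega$ and $v=v_1\cdots v_n$. Because $x$ reaches $q$ and each $z_j$ loops at $q$, we get $\machine(u)=q$, hence $\proDFA^{\machine(u)}=\proDFA^q$; therefore each $v_i=z_{a+i}\in\lang{\proDFA^q}=\lang{\proDFA^{\machine(u)}}$, and $\machine(uv_i)=\trans(q,v_i)=q=\machine(u)$, i.e.\ $\eqWith{\machine}{uv_i}{u}$. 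This yields exactly the claimed decomposition.

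I expect the main obstacle to lie in the regrouping step rather than the verification. The two conditions in the statement ($v_i\in\lang{\proDFA^{\machine(u)}}$ and $\eqWith{\machine}{uv_i}{u}$) come almost for free from the product with $\machine^q_q$ inside $\overline P_{(q,f)}$, which pins every block to the same leading state $q$ and the same progress automaton $\proDFA^q$. The real work is converting the \emph{a priori} non-periodic infinite factorization $y=z_1z_2\cdots$ into a finite period: this is precisely where ultimate periodicity of $w$ is essential, since it lets me cut the block sequence at two checkpoints congruent modulo the period so that the stretch between them is a legitimate period assembled from whole blocks.
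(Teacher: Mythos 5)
Your proof is correct, and at the top level it follows the same route as the paper's: unfold $\upword{\lang{\buechiU}}=\bigcup_{u\in\states,\,p\in\acc_u}\lang{\machine^{\initState}_u}\cdot(\lang{\overline{P}_{(u,p)}})^\omega$, factor the periodic tail into blocks accepted by $\overline{P}_{(u,p)}=\machine^u_u\times(\proDFA^u)^{s_u}_p$, and read both conditions off the product structure, exactly as in your final paragraph. The difference worth noting is where the burden of proof falls. The paper \emph{asserts} the regrouping step --- that an ultimately periodic $w$ in this union ``can be given by a decomposition $(u,v)$'' with $u\in\lang{\machine^{\initState}_{\stateWord{u}}}$ and $v\in(\lang{\overline{P}_{(\stateWord{u},p)}})^+$ --- without argument (the analogous step in the proof of Lem.~\ref{lem:fdfa-to-buechi-inclusion} is discharged by citing Lem.~5 of \cite{Calbrix1993}); your checkpoint-and-pigeonhole argument on block boundaries modulo the period proves precisely this assertion from first principles, including the minor care about empty blocks, so on this point your write-up is more self-contained than the paper's. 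Conversely, the second half of the paper's proof supplies something your purely existential argument does not: an effective procedure for computing the witnesses --- intersect $\autdollar_{u\$v}$ with an automaton for $\bigcup_{u\in\states,\,p\in\acc_u}\lang{\machine^{\initState}_u}\cdot \$ \cdot(\lang{\machine^u_u\times(\proDFA^u)^{s_u}_p})^+$ to pin down $(u,p)$, then run $v$ through $\overline{P}_{(u,p)}$ to cut out the blocks $v_i$ --- together with time and space bounds in $\mathcal{O}(nk(n+nk)\cdot\size{v}(\size{v}+\size{u}))$ and $\mathcal{O}((n+nk)\cdot\size{v}(\size{v}+\size{u}))$. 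That constructive content is what the counterexample analysis in case O3 of Sec.~\ref{sec:ce-translation} actually consumes, so if your proof were to replace the paper's, a remark on effectiveness would still be needed there.
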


Fig.~\ref{fig:example-fdfa-to-buechi} depicts the BAs $\buechiU$ and $\buechiL$ constructed from the $\FDFA$ $\fdfas$ in Fig.~\ref{fig:fdfa-example}.
In the example, we can see that the $b^\omega\in\upword{\fdfas}$ while $b^\omega\notin\upword{\lang{\buechiL}}$.

\begin{figure}
\centering
\begin{tikzpicture}[shorten >=1pt,node distance=1.5cm,on grid,auto,framed]

    \begin{scope}
       \node[initial,state, inner sep=3pt,minimum size=0pt] (q0)      {$q_0$};
       \node[state, inner sep=3pt,minimum size=0pt]  (q1) [right =of q0]     {$q_1$};

       \node[state, inner sep=3pt,minimum size=0pt] (q2) [right =of q1] {$q_2$};
       \node[state, accepting, inner sep=1.5pt,minimum size=0pt] (e3) [below right =of q1] {$q'_2$};
       \node[] at ($(q0) + (-1, 0.7)$) {$\buechiU$};

       \path[->]
           (q0) edge [loop above] node {a} (q0)
                edge [loop below] node {b} (q0)
                edge node {$\epsilon$}   (q1)
           (q1) edge node {a, b} (q2)
           (q2) edge [loop above] node {a} (q2)
                edge [bend left] node {b} (q1)
                edge node {$\epsilon$} (e3)
           (e3) edge [bend left] node {$\epsilon$} (q1)
           ;
    \end{scope}
    \begin{scope}[xshift=5.7cm]
           \node[initial,state, inner sep=3pt,minimum size=0pt] (q0)      {$q_0$};
       \node[state, inner sep=3pt,minimum size=0pt]  (q1) [right =of q0]     {$q_1$};

       \node[state, inner sep=3pt,minimum size=0pt] (q2) [right =of q1] {$q_2$};
       \node[state, inner sep=3pt,minimum size=0pt] (q3) [below =of q1] {$q_3$};
       \node[state, accepting, inner sep=1.5pt,minimum size=0pt] (e3) [right =of q2] {$q'_2$};

       \node[state, inner sep=3pt,minimum size=0pt] (q4) [below =of q2] {$q_4$};
       \node[] at ($(q0) + (-1, 0.7)$) {$\buechiL$};
       \path[->]
           (q0) edge [loop above] node {a} (q0)
                edge [loop below] node {b} (q0)
                edge node {$\epsilon$}   (q1)
           (q1) edge node {a} (q2)
                edge node {b} (q3)
           (q2) edge [loop above] node {a} (q2)
                edge [bend right] node {b} (q4)
                edge node {$\epsilon$} (e3)
           (q3) edge node {a} (q2)
                edge [bend left] node {b} (q1)
           (q4) edge [bend right] node[above] {a,b} (q2)
%                edge node {b} (q6)
           (e3) edge [bend right = 70] node[above] {$\epsilon$} (q1)

           ;
    \end{scope}
\end{tikzpicture}
\caption{NBA $\buechiU$ and $\buechiL$ for $\fdfas$ in Fig.~\ref{fig:fdfa-example}}\label{fig:example-fdfa-to-buechi}
\end{figure}

\section{Counterexample Analysis for FDFA Teacher}\label{sec:ce-translation}
During the learning procedure, if we failed the equivalence query for the BA $B$, the BA teacher will return a counterexample $uv^\omega$ to the FDFA teacher.

\begin{definition}[Counterexample for the FDFA Teacher]\label{def:ce-for-fdfa-teacher}
Given the conjectured BA $B\in\{\buechiL,\buechiU\}$, the target language $\omegaRegLang$, we say that
\begin{itemize}
\item
$uv^\omega$ is a \emph{positive counterexample} if $uv^\omega\in \upword{\omegaRegLang}$ and $uv^\omega\not\in \upword{\lang{B}}$,
\item
$uv^\omega$ is a \emph{negative counterexample} if $uv^\omega\not\in \upword{\omegaRegLang}$ and $uv^\omega\in \upword{\lang{B}}$.
\end{itemize}
\end{definition}

Obviously, the above is different to the counterexample for the FDFA learner in Def.~\ref{def:ce-for-fdfa-learner}.
Below we illustrate the necessity of the counterexample analysis by an example.

\begin{example}
Again, consider the conjectured $\FDFA$ $\fdfas$
depicted in Fig.~\ref{fig:fdfa-example} for $\omegaRegLang = a^\omega + b^\omega$.
Suppose the BA teacher returns a negative counterexample $(ba)^\omega$.
In order to remove $(ba)^\omega \in \upword{\fdfas}$, one has to
find a decomposition of $(ba)^\omega$ that $\fdfas$ accepts, which
is the goal of the counterexample analysis. Not all decompositions of $(ba)^\omega$ are accepted by $\fdfas$.
For instance, $(ba, ba)$ is accepted
while $(bab, ab)$ is not.

\end{example}

A positive (respectively negative) counterexample $uv^\omega$ for the $\FDFA$ teacher is \emph{spurious} if $uv^\omega \in \upword{\fdfas}$ (respectively $uv^\omega \not\in \upword{\fdfas}$).
Suppose we use the under-approximation method to construct the BA $\buechiL$ from $\fdfas$ depicted in Fig.~\ref{fig:example-fdfa-to-buechi}. The BA teacher returns a spurious positive counterexample $b^\omega$, which is in $\upword{\fdfas}$ but not in $\upword{\lang{\buechiL}}$. We show later that in such a case, one can always find a decomposition, in this example $(b, bb)$, as the counterexample for the $\FDFA$ learner.

\begin{figure}
\centering
\subfigure[Under-Approximation]
{
\begin{tikzpicture}[shorten >=1pt,node distance=1.7cm,on grid,auto]
    \draw[use as bounding box] (0,1.5) rectangle (5.7,4.5);
    % F
    \draw[draw=green,thick, fill=green!20,fill opacity=0.5] (1.8,3) ellipse (1.5cm and 0.9cm);
    \node[draw=blue,thick, fill=blue!20,fill opacity=0.5, anchor=north,rounded corners, text width=1.8cm,minimum height=0.8cm] (buechiL) at (1.6,3.4) {};
    \draw[draw=red,thick,fill=red!20,fill opacity=0.5] (3.7,3) ellipse (1.5cm and 0.9cm);
    \node[] at (4.2, 3) {$\omegaRegLang$};
    \node[] at (1.7, 3) {$\buechiL$};
    \node[] at (1.5, 3.7) {$\fdfas$};
    %\draw[very thin, red] (3.9,2.7) -- node{.} (4. 2.7);
    \node[font =\tiny] at (3.6,2.7) {$uv^\omega$};
    \draw[red,fill=red] plot [mark=*, mark size=1] coordinates{(3.6,2.5)};

    \node[font =\tiny, text width=0.4cm,minimum height=0.1cm] at (1,3.1) {$uv^\omega$};
    \draw[blue,fill=blue] plot [mark=*, mark size=1] coordinates{(1,2.9)};

    \node[font =\tiny, text width=0.4cm,minimum height=0.1cm] at (2.9,3.2) {$uv^\omega$};
    \draw[green,fill=green] plot [mark=*, mark size=1] coordinates{(2.9,3)};

\end{tikzpicture}
}
\subfigure[Over-Approximation]
{
\begin{tikzpicture}[shorten >=1pt,node distance=1.7cm,on grid,auto]
    \draw[use as bounding box] (0,1.5) rectangle (5.7,4.5);
    \node[draw=blue!120,thick, fill=blue!20, fill opacity=0.5, anchor=north,rounded corners,  text width=3cm,minimum height=1.8cm] (buechiU) at (2,3.9) {};
    \draw[draw=red, thick, fill=red!20,fill opacity=0.5] (4,3) ellipse (1.2cm and 0.8cm);
    \draw[draw=green,thick, fill=green!20, fill opacity=0.5] (2,3) ellipse (1.2cm and 0.8cm);
    \node[] at (4.4, 3) {$\omegaRegLang$};
    \node[] at (2, 3) {$\fdfas$};
    \node[] at (0.9, 3.7) {$\buechiU$};

    \node[font =\tiny] at (4,2.7) {$uv^\omega$};
    \draw[red,fill=red] plot [mark=*, mark size=1] coordinates{(4,2.5)};

    \node[font =\tiny, text width=0.4cm,minimum height=0.1cm] at (1.6,3.4) {$uv^\omega$};
    \draw[green,fill=green] plot [mark=*, mark size=1] coordinates{(1.6,3.2)};

    \node[font =\tiny, text width=0.4cm,minimum height=0.1cm] at (0.7,2.4) {$uv^\omega$};
    \draw[blue,fill=blue] plot [mark=*, mark size=1] coordinates{(0.7,2.2)};
\end{tikzpicture}
}
\caption{The Case for Counterexample Analysis}\label{fig:case-ce-analysis}
\end{figure}
Given $\FDFA$ $\fdfas = (\machine, \{\proDFA^u\})$,
in order to analyze the counterexample $uv^\omega$ , we define:
\begin{itemize}
\item
an FA $\autdollar_{u\$v}$ with $\lang{\autdollar_{u\$v}} = \{u'\$v' \mid u'\in\finwords, v'\in\poswords, uv^\omega = u'v'^\omega\}$,
\item an FA $\autdollarfdfa$ with $\lang{\autdollarfdfa} = \{ u \$ v \mid u\in\finwords, v\in\finwords
, \eqWith{\machine}{uv}{u}, v\in\lang{\proDFA^{\machine(u)}}\}$, and
\item an FA $\autdollarfdfaneq$
with $\lang{\autdollarfdfaneq} = \{ u \$ v \mid u\in\finwords, v\in\finwords
, \eqWith{\machine}{uv}{u}, v\notin\lang{\proDFA^{\machine(u)}}\}$.
\end{itemize}
Here $\$$ is a letter not in $\Sigma$.
Intuitively, $\autdollar_{u\$v}$ accepts every possible decomposition $(u', v')$ of $uv^\omega$,
$\autdollarfdfa$ recognizes every decomposition $(u', v')$ which is accepted by $\fdfas$ and
$\autdollarfdfaneq$ accepts every decomposition $(u', v')$ which is not accepted by $\fdfas$
yet $\eqWith{\machine}{u'v'}{u'}$.

Given a BA $\buechiL$ constructed by the under-approximation method to construct a BA $\buechiL$ from $\fdfas$, we have that $\upword{\lang{\buechiL}} \subseteq \upword{\fdfas}$. Fig.~\ref{fig:case-ce-analysis}(a)
depicts all possible cases of $uv^\omega\in \upword{\lang{\buechiL}}\oplus \upword{\omegaRegLang}$.
% where
%$\upword{\buechiL}\oplus \upword{\omegaRegLang} = (\upword{\buechiL}\setminus\upword{\omegaRegLang}) \cup (\upword{\omegaRegLang}\setminus\upword{\buechiL})$.
\begin{itemize}
\item[U1]: $uv^\omega \in \upword{\omegaRegLang}\land uv^\omega\notin\upword{\fdfas}$ (Point in red). The word $uv^\omega$ is a positive
counterexample, one has to find a decomposition $(u', v')$ such that $\eqWith{\machine}{u'v'}{u'}$
and $u'v'^\omega = uv^\omega$. This can be easily done by taking a word $u'\$v' \in \lang{\autdollar_{u\$v}}\cap \lang{\autdollarfdfaneq}$.
\item[U2]: $uv^\omega \notin \upword{\omegaRegLang}\land uv^\omega\in\upword{\fdfas}$ (Point in blue). The word $uv^\omega$ is a negative counterexample, one needs to find a decomposition $(u', v')$ of  $uv^\omega$ that is accepted by $\fdfas$. This can be done by taking a
word $u'\$v' \in \lang{\autdollar_{u\$v}}\cap \lang{\autdollarfdfa}$.
\item[U3]: $uv^\omega \in \upword{\omegaRegLang}\land uv^\omega\in\upword{\fdfas}$ (Point in green).
The word $uv^\omega$ is a spurious positive counterexample.
Suppose the decomposition $(u, v)$ of $uv^\omega$ is accepted by $\fdfas$, according to Lem.~\ref{lem:word-fdfa-buechiL-preserve},
there must exist some $k\geq 1$ such that $(u, v^k)$ is not accepted by $\fdfas$. Thus, we can also use the same
method in U1 to get a counterexample $(u', v')$.
\end{itemize}

We can also use the over-approximation construction to get a BA $\buechiU$ from $\fdfas$ such that $\upword{\fdfas}\subseteq\upword{\lang{\buechiU}}$,
and all possible cases for a counterexample $uv^\omega\in\upword{\lang{\buechiU}}\oplus \upword{\omegaRegLang}$ is depicted in Fig.~\ref{fig:case-ce-analysis}(b).
\begin{itemize}
\item[O1]: $uv^\omega \in \upword{\omegaRegLang}\land uv^\omega\notin\upword{\fdfas}$ (Point in red).
The word $uv^\omega$ is a positive counterexample that can be dealt with the same method for case U1.
\item[O2]: $uv^\omega \notin \upword{\omegaRegLang}\land uv^\omega\in\upword{\fdfas}$ (Point in green). The word $uv^\omega$ is a negative
counterexample that can be dealt with the same method for case U2.
\item[O3]: $uv^\omega \notin \upword{\omegaRegLang}\land uv^\omega\notin\upword{\fdfas}$ (Point in blue).
In this case, $uv^\omega$ is a spurious negative counterexample. In such a case it is possible that we cannot find a valid decomposition of $uv^\omega$
to refine $\fdfas$.
By Lem.~\ref{lem:lang-buechiU-period}, we can find a decomposition $(u', v')$ of $uv^\omega$ such that
$v' = v_1v_2\cdots v_n$, $\eqWith{\machine}{u'v_i}{u'}$, and $v_i\in\lang{\proDFA^{\machine(u')}}$ for some $n\geq 1$.
It follows that $(u',v_i)$ is accepted by $\fdfas$.
If we find some $i\in [1\cdots n]$ such that $u'v_i^\omega \notin \upword{\omegaRegLang}$, then we return $(u',v_i)$, otherwise, the algorithm aborts with an error.
%Clearly, the decomposition $(u',v_i)$ obtained from above analyses is a valid counterexample for the FDFA learner as defined in Def.~\ref{def:ce-for-fdfa-learner}.
\iffalse
2) we find some $i \geq 1$ such that $u\cdot(v_i\cdots v_n)^\omega \notin \omegaRegLang$ and
$u\cdot(p\cdot v_{i+1}\cdots v_n)^\omega \in \omegaRegLang$, then we let $u' = u$ and $v' = v_i\cdots v_n$, and

3) we find some $i\geq 1$ such that $u(v_i\cdot v_i)^\omega \in\omegaRegLang$ while $u\cdot(p\cdot v_i) \notin \omegaRegLang$,
then we let $u' = u$ and $v' = v_i\cdot v_i$.
\fi

\end{itemize}

Finally, we note that determining whether $uv^\omega\in\upword{\omegaRegLang}$ can be done by posing a membership query $\mathsf{Mem^{BA}}(uv^\omega)$, and checking
whether $uv^\omega\in\upword{\fdfas}$ boils down to checking the emptiness of $\lang{\autdollar_{u\$v}}\cap \lang{\autdollarfdfa}$.
The construction for $\autdollar_{u\$v}$,  $\autdollarfdfa$, and $\autdollarfdfaneq$, and the correctness proof of
counterexample analysis are given in Appendix~\ref{app:fa-construction}.

\section{Complexity}

We discuss the complexity of tree-based $\FDFA$ learning algorithms in Sec.~\ref{sec:fdfa-learner-tree}.
Let $\fdfas = (\machine, \{\proDFA^u\})$ be the corresponding periodic $\FDFA$ of the $\omega$-regular language $\omegaRegLang$,
and let $n$ be the number of states in the leading automaton $\machine$ and $k$ be the number of states in the largest
progress automaton $\proDFA^u$.
We remark that $\fdfas$ is uniquely defined for $\omegaRegLang$ and
the table-based algorithm needs the same amount of equivalence queries as the tree-based one in the worst case. Given a counterexample $(u,v)$ returned from the $\FDFA$ teacher, we define its \emph{length} as $|u|+|v|$.

\begin{restatable}[Query Complexity]{theorem}{treecomplexity}\label{thm:algo-tree-halt-correctness}
Let $(u, v)$ be the longest counterexample returned from the $\FDFA$ teacher.
The number of equivalence queries needed for
the tree-based $\FDFA$ learning algorithm to learn the periodic $\FDFA$ of $\omegaRegLang$
is in $\mathcal{O}(n + nk)$, while the number of membership queries is in
$\mathcal{O}((n + nk)\cdot (\size{u} + \size{v} + (n+k)\cdot\size{\alphabet}))$.

For the syntactic and recurrent $\FDFA$s, the number of equivalence queries needed for
the tree-based $\FDFA$ learning algorithm is in $\mathcal{O}(n + n^3k)$, while
the number of membership queries is in
$\mathcal{O}((n + n^3k)\cdot (\size{u} + \size{v} + (n+nk)\cdot\size{\alphabet}))$.
\end{restatable}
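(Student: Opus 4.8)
The plan is to charge each equivalence query to a structural event in the classification trees and then bound the number of such events. The key observation is that every equivalence query answered ``no'' triggers exactly one refinement (Sec.~\ref{sec:sub:fdfa-learner-tree-algo}), and every refinement replaces a terminal node by an internal node carrying the new experiment together with an additional terminal node; it therefore creates exactly one new state, in either the leading automaton $\machine$ or in some progress automaton. (For the $K$-ary syntactic trees a new terminal node may also be discovered while constructing $\delta$, but such a node is still a newly identified state and is counted the same way.) Consequently the number of equivalence queries is bounded by one plus the total number of states ever created across all trees during the run, and the whole argument reduces to counting created states.

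For the periodic $\FDFA$ this count is immediate. The leading tree stabilises at $n$ terminal nodes since $\size{\canoEq}=n$. Crucially, the periodic progress experiment $\func{TE}(x,e)=\true$ iff $u(xe)^\omega\in\omegaRegLang$ does \emph{not} refer to $\machine$, so a refinement of the leading tree never invalidates a periodic progress tree; each $\learnT_u$ grows monotonically to at most $k$ terminal nodes and no state is ever discarded. With at most $n$ progress trees the total is $n+nk$ created states, yielding $\mathcal{O}(n+nk)$ equivalence queries.

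For the syntactic and recurrent cases I would first prove a size lemma tying their progress automata to the periodic parameters: the congruence identifying $x,y$ whenever $\deqWithu{P}{x}{y}$ and $\eqWith{L}{ux}{uy}$ refines both $\syntacticEq$ and $\recurrentEq$, which is a short check against Def.~\ref{def:cano-fdfas} using that $\eqWith{L}{ux}{uy}$ propagates to $\eqWith{L}{uxv}{uyv}$ and hence $\eqWith{L}{uxv}{u}\Leftrightarrow\eqWith{L}{uyv}{u}$. Since this refining congruence has index at most $k\cdot n$, every syntactic or recurrent progress automaton has at most $nk$ states. The extra factor of $n$ relative to the periodic case comes from the $\machine$-dependence of their experiments: the syntactic and recurrent $\func{TE}$ test $\machine(ux)$ and the predicate $u=\machine(uxe)$, so each of the at most $n$ refinements of the leading tree can change these values and force the progress trees to be re-examined. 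Charging, for each of the at most $n$ versions of $\machine$, up to $n$ progress trees of at most $nk$ states each gives at most $n\cdot n\cdot nk=n^3k$ created progress states, hence $\mathcal{O}(n+n^3k)$ equivalence queries. This paragraph is the main obstacle: both the bound $\size{\syntacticEq},\size{\recurrentEq}\le nk$ and the claim that one leading-tree refinement costs at most a further factor $n$ in re-created progress states (rather than something larger) have to be established carefully, since this is exactly where the $\machine$-dependence of the progress trees interacts with counterexample-guided refinement.

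For the membership queries I would multiply the equivalence-query bound by the per-iteration cost. Processing a counterexample $(u,v)$ walks the run of $\machine$ over $u$, or of a progress automaton over $v$, and evaluates one $\func{TE}$ per position to locate the smallest distinguishing index $j$, costing $\mathcal{O}(\size{u}+\size{v})$ membership queries. Re-deriving the transitions of the refined automaton sifts each state–letter pair through the relevant tree; over the leading automaton ($\le n$ states) together with the affected progress automaton ($\le k$ states for periodic, $\le nk$ for syntactic and recurrent) this contributes $\mathcal{O}((n+k)\size{\alphabet})$, respectively $\mathcal{O}((n+nk)\size{\alphabet})$, queries. Adding the two contributions and multiplying by the corresponding equivalence-query bound gives exactly the stated membership-query complexities; this final accounting is routine once the state counts above are in hand.
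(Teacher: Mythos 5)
Your skeleton matches the paper's proof: Coro.~\ref{coro:ce-state-increase} charges each failed equivalence query to one newly created state, the periodic bound $n+nk$ is just the size of the target periodic $\FDFA$ (your observation that leading-tree refinements never invalidate periodic progress trees is correct and is implicit in the paper, which re-initializes progress trees only in the syntactic/recurrent case), and the syntactic/recurrent bound comes from at most $n$ rounds of relearning progress automata of size at most $nk$; the paper does the slightly finer sum $\sum_{m=1}^{n}(1+m^2 k)\in\mathcal{O}(n+n^3k)$ where you use $n\cdot n\cdot nk$, which is asymptotically the same. One caveat on your size lemma: it bounds the \emph{canonical} congruences $\syntacticEq$ and $\recurrentEq$, but during learning $\machine$ is generally not consistent with $\canoEq$, and what must be bounded is the $\machine$-relative congruence actually induced by the progress trees. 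The paper's Lem.~\ref{lem:forc-current-leading} and Lem.~\ref{lem:forc-current-recurrent-leading} prove $\size{\proEq_{S'}},\size{\proEq_{R'}}\leq\size{\states}\cdot\size{\periodicEq}\leq mk$ for the current $\machine$ with $m$ states, via exactly your pairing idea, $x\mapsto(\machine(ux),\class{x}_{\periodicEq})$. You flagged this as the main obstacle; the repair is your own argument applied to $\proEq_{S'}$ rather than $\syntacticEq$, so this is a local fix, not a wrong turn.

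The one step that fails as written is the membership-query accounting. You propose to re-derive the transitions of the refined automaton by sifting \emph{each} state--letter pair through the tree, and claim this contributes $\mathcal{O}((n+k)\size{\alphabet})$ queries. But a sift through a classification tree of height $h$ costs up to $h$ membership queries (one $\func{TE}$ evaluation per internal node on the path), and $h$ can be $\Theta(m)$; re-sifting all $m\size{\alphabet}$ pairs therefore costs $\Theta(m^2\size{\alphabet})$ per round, which would inflate the periodic bound to $\mathcal{O}\bigl((n+nk)\cdot(\size{u}+\size{v}+(n+k)^2\size{\alphabet})\bigr)$, not the stated one. The paper's accounting avoids the extra height factor: after replacing the terminal node $q$ by an internal node with experiment $e$ and fresh terminal $p$, only the $\size{\alphabet}$ transitions out of $p$ are computed by full sifts (at most $\size{\alphabet}\cdot h\leq\size{\alphabet}\cdot m$ queries), while each transition currently entering $q$ is re-routed by a \emph{single} evaluation $\func{TE}(s,e)$ on the new experiment, at most $\size{\alphabet}\cdot m$ queries in total, giving $2m\size{\alphabet}\in\mathcal{O}(m\size{\alphabet})$ per round with $m\leq n+k$ (periodic) resp.\ $m\leq n+nk$ (syntactic/recurrent). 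With that repair, and the $\proEq_{S'}$/$\proEq_{R'}$ substitution above, your proof coincides with the paper's.
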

The learning of syntactic and recurrent $\FDFA$s requires more queries since once their leading automata have been modified, they need to redo the learning of all progress automata from scratch.

\begin{restatable}[Space Complexity]{theorem}{treespacecomplexity}
	For all tree-based algorithms, the space required to learn the leading automaton is in $\mathcal{O}(n)$. For learning periodic $\FDFA$, the space required for each progress automaton is in $\mathcal{O}(k)$, while for syntactic and recurrent $\FDFA$s, the space required is in $\mathcal{O}(nk)$.
	For all table-based algorithms, the space required to learn the leading automaton is in $\mathcal{O}((n+n \cdot\ \size{\alphabet}) \cdot n)$. For learning periodic $\FDFA$, the space required for each progress automaton is in $\mathcal{O}((k+k \cdot\ \size{\alphabet}) \cdot k)$, while for syntactic and recurrent $\FDFA$s, the space required is in $\mathcal{O}((nk+nk \cdot\ \size{\alphabet}) \cdot nk)$.
\end{restatable}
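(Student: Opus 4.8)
The plan is to reduce every bound to a single quantity: the number $m$ of states of the automaton currently being learned, since both the classification tree and the observation table have sizes that are elementary functions of $m$ and $\size{\alphabet}$. First I would record the three relevant values of $m$: the leading automaton has $n$ states by definition; each periodic progress automaton has at most $k$ states by definition of $k$; and for the syntactic and recurrent progress automata I will prove below that $m \le nk$. Once these three values are in hand, the two halves of the statement follow by bounding, respectively, the number of nodes of a classification tree and the number of cells of an observation table in terms of $m$.

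For the tree-based bound, consider a progress or leading tree whose terminal nodes are exactly the $m$ states of the learned automaton. Every internal node is labelled by an experiment that realises at least two distinct values of $\func{TE}$, hence has at least two children; since in any tree the number of edges is one less than the number of nodes while being at least twice the number of internal nodes, there are at most $m-1$ internal nodes and therefore $O(m)$ nodes in total. Storing $L_e$ sparsely costs $O(1)$ per existing edge and there are $O(m)$ edges, so the whole tree occupies $O(m)$ space, \emph{even} for the $K$-ary syntactic tree with $K=3n$, because only realised children are stored. Substituting $m=n$, $m=k$, and $m\le nk$ then yields $O(n)$, $O(k)$, and $O(nk)$ respectively. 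For the table-based bound, an observation table over $\alphabet$ for an automaton with $m$ states has its rows indexed by $S\cup S\alphabet$ with $\size{S}=m$, i.e. at most $m+m\cdot\size{\alphabet}$ rows, and its columns indexed by a set of experiments whose size never exceeds $m$ in the worst case (each refinement contributes at most one experiment and the final index is $m$). Hence the table has $O((m+m\cdot\size{\alphabet})\cdot m)$ cells, and substituting the three values of $m$ gives exactly the three stated table bounds.

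The hard part is the claim $m\le nk$ for the syntactic and recurrent progress automata, i.e. $\size{\syntacticEq},\size{\recurrentEq}\le nk$ at every leading state $u$. I would prove it through the product congruence induced by the pairing $x\mapsto(\machine(ux),\class{x}_{\periodicEq})$, which has index at most $n\cdot k$. It suffices to show this pairing is constant on $\syntacticEq$-classes, equivalently that $\machine(ux)=\machine(uy)$ together with $\deqWithu{P}{x}{y}$ implies $\deqWithu{S}{x}{y}$: the first conjunct gives $\eqWith{L}{ux}{uy}$, and since $\deqWithu{P}{x}{y}$ demands $u(xv)^\omega\in\omegaRegLang\Longleftrightarrow u(yv)^\omega\in\omegaRegLang$ for \emph{all} $v$, it a fortiori supplies the same biconditional for the returning words $v$ with $\eqWith{L}{uxv}{u}$ required by $\syntacticEq$. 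Thus $\syntacticEq$ is coarser than the product congruence, giving $\size{\syntacticEq}\le nk$; and since the extra conjunct $\eqWith{L}{ux}{uy}$ forces $\syntacticEq$ to refine $\recurrentEq$ (every $v$ sends $ux$ and $uy$ to the same $\canoEq$-class), we also obtain $\size{\recurrentEq}\le\size{\syntacticEq}\le nk$. I expect the only delicate point to be checking these two refinement directions against Def.~\ref{def:cano-fdfas}, since the quantifier structure of the three congruences differs subtly and it is easy to invert a ``refines''/``coarser than'' relation; everything else is routine node-counting and table-dimension arithmetic.
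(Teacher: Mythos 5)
Your overall skeleton is the paper's: reduce everything to the state count $m$ of each component learner, bound a classification tree by $\mathcal{O}(m)$ nodes and an observation table by $\mathcal{O}((m+m\cdot\size{\alphabet})\cdot m)$ cells, then substitute $m=n$, $m=k$, $m\le nk$. Your tree and table counting is fine (indeed slightly more careful than the paper, which asserts ``at most $m-1$ internal nodes'' without your edge-counting argument or the sparse-storage remark for the $K$-ary syntactic tree). The genuine gap is in the hard part. You bound the indices of the \emph{canonical} congruences $\syntacticEq$ and $\recurrentEq$, but the theorem concerns the space used \emph{during} learning, where each progress tree is built relative to the \emph{current} conjectured leading automaton $\machine$, which need not be consistent with $\canoEq$. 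Your step ``the first conjunct gives $\eqWith{L}{ux}{uy}$'' is exactly where this breaks: for an intermediate $\machine$, $\machine(ux)=\machine(uy)$ does not imply $\eqWith{L}{ux}{uy}$, and by Lem.~\ref{lem:eq-class-division-s-r} an inconsistent $\machine$ can cause the progress tree to separate two $\syntacticEq$-equivalent words into distinct terminal nodes. Hence $\size{\syntacticEq}\le nk$ by itself does not bound the number of terminals of an intermediate progress tree. The paper instead bounds the congruences $\proEq_{S'}$ and $\proEq_{R'}$, obtained by replacing $\eqWith{L}{uxv}{u}$ with $\machine(uxv)=u$ in Def.~\ref{def:cano-fdfas}, which is what the tree actually learns for a fixed $\machine$ (Lem.~\ref{lem:forc-current-leading} and Lem.~\ref{lem:forc-current-recurrent-leading}).

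The repair is short and uses your own device, which is in fact the paper's proof of Lem.~\ref{lem:forc-current-leading}: the pairing $x\mapsto(\machine(ux),\class{x}_{\periodicEq})$ refines $\proEq_{S'}$ directly --- the first component is literally the first conjunct of $\proEq_{S'}$, and $\periodicEq$ supplies the biconditional for \emph{all} $v$, a fortiori for those $v$ with $\machine(uxv)=u$ --- with no appeal to consistency of $\machine$ with $\canoEq$; the implication from $\proEq_{S'}$ to $\proEq_{R'}$ then transfers your $\size{\recurrentEq}\le\size{\syntacticEq}$ step to the primed congruences. To finish you also need two facts left implicit in your proposal: (i) the current leading automaton never exceeds $n$ states, which follows from Lem.~\ref{lem:eq-class-division} applied to the leading tree since $n$ is the index of $\canoEq$, so the first component of the pair ranges over at most $n$ values; and (ii) the progress trees are re-initialized from scratch whenever $\machine$ changes, so the per-tree bound for a fixed $\machine$ is the right measure of space. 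One further slip, which you yourself anticipated: ``the pairing is constant on $\syntacticEq$-classes'' is the converse of what is needed (it yields a well-defined map from classes to pairs, not an injective one); what you need, and what your displayed implication actually proves, is that the product congruence refines $\syntacticEq$ (respectively $\proEq_{S'}$), i.e., equal pairs imply equivalence, whence $\syntacticEq$ is coarser and has index at most $nk$.
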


\begin{restatable}[Correctness and Termination]{theorem}{batermination}\label{thm:termination-buechi}
The BA learning algorithm based on the under-approximation method always terminates and returns a BA
recognizing the unknown $\omega$-regular language $\omegaRegLang$ in polynomial time.
If the BA learning algorithm based on the over-approximation method
terminates without reporting an error, it returns a BA recognizing $\omegaRegLang$.
\end{restatable}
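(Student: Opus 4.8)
The plan is to split the claim into three parts: (i) termination of the under-approximation-based BA learner, (ii) correctness of its output, and (iii) the conditional correctness of the over-approximation-based learner. The backbone of the argument is the fact that the tree-based \FDFA{} learner is itself a terminating, correct learner for the canonical (periodic/syntactic/recurrent) \FDFA{} of $\omegaRegLang$: each counterexample returned to the \FDFA{} learner strictly refines some classification tree by splitting a terminal node (adding a genuinely new state), and by Def.~\ref{def:cano-fdfas} the canonical \FDFA{} has finitely many states (the indices of $\canoEq$ and of $\proEq_P,\proEq_S,\proEq_R$ are all finite), so only finitely many such refinements are possible. I would first establish this as the inner invariant, relying on Theorem~\ref{thm:algo-tree-halt-correctness} for the finite query bounds.

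For termination and correctness of the under-approximation learner, I would argue that every equivalence query $\mathsf{Equ^{BA}}(\buechiL)$ either returns ``yes'' — in which case $\lang{\buechiL}=\omegaRegLang$ and we are done — or returns a counterexample $uv^\omega$. The key step is to show that the counterexample analysis of Sec.~\ref{sec:ce-translation} (cases U1, U2, U3) always succeeds in producing a \emph{valid} decomposition $(u',v')$ that is a positive or negative counterexample for the \FDFA{} learner in the sense of Def.~\ref{def:ce-for-fdfa-learner}, and that this decomposition forces a genuine refinement. Cases U1 and U2 are handled by emptiness/intersection arguments on $\autdollar_{u\$v}$, $\autdollarfdfa$, $\autdollarfdfaneq$; the spurious case U3 is precisely where Lem.~\ref{lem:word-fdfa-buechiL-preserve} is invoked, since $uv^\omega\in\upword{\fdfas}$ but $uv^\omega\notin\upword{\lang{\buechiL}}$ guarantees (by the contrapositive of that lemma) some $k\geq 1$ with $(u,v^k)$ \emph{not} accepted by $\fdfas$, reducing U3 to U1. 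Because each such counterexample triggers a refinement and the canonical \FDFA{} is finite, the \FDFA{} learner converges after finitely many steps to a canonical $\fdfas$; then Lem.~\ref{lem:fdfa-to-buechi-inclusion} gives $\upword{\lang{\buechiL}}=\upword{\fdfas}=\upword{\omegaRegLang}$, and by Theorem~1 (ultimately periodic words) the final equivalence query must answer ``yes''. The polynomial-time bound follows by combining the finite-state size of the canonical \FDFA{} with the query complexity of Theorem~\ref{thm:algo-tree-halt-correctness} and the polynomial sizes from Lem.~\ref{lem:fdfa-to-buechi-inclusion}.

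For the over-approximation learner, the logic is weaker: here $\upword{\fdfas}\subseteq\upword{\lang{\buechiU}}$, and the analysis cases are O1, O2, O3. Cases O1 and O2 reduce to U1 and U2 and yield valid refinements exactly as before. The essential point is that \emph{whenever the algorithm does not abort}, every counterexample it processes produces a valid $(u',v')$ refining some tree, so by the same finiteness argument it can only run for finitely many refinements before the \FDFA{} learner reaches the canonical $\fdfas$; at that point any ``no'' answer from the BA teacher would have to fall into a case that either refines further (impossible once canonical) or aborts. Hence if it terminates without error, the returned BA satisfies $\lang{B}=\omegaRegLang$ by the same Theorem~1 argument. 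I would emphasize that no termination claim is made here because O3 (spurious negative counterexample) can legitimately cause an abort when no index $i$ with $u'v_i^\omega\notin\upword{\omegaRegLang}$ exists among the decomposition supplied by Lem.~\ref{lem:lang-buechiU-period}.

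The main obstacle I anticipate is showing that the produced decomposition in each analysis case is not merely valid but genuinely \emph{progress-making} — i.e.\ that it cannot be endlessly returned without refining the trees. This requires checking that a counterexample satisfying Def.~\ref{def:ce-for-fdfa-learner} always exposes a distinguishing experiment in the leading or progress tree (via the $s_{j-1},s_j$ splitting construction of Sec.~\ref{sec:sub:fdfa-learner-tree-algo}), so that the state count of the conjectured automaton strictly increases and stays bounded by the index of the corresponding canonical right congruence. Tying this monotone-growth-with-finite-ceiling argument cleanly to the three counterexample cases, and doing so uniformly for periodic, syntactic, and recurrent variants, is the delicate part of the proof.
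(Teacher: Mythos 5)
Your proposal is correct and takes essentially the same approach as the paper: the paper's own (very terse) proof reduces the theorem to the worst-case convergence of the \FDFA{} learner to a canonical \FDFA{} together with Lem.~\ref{lem:language-fdfa-to-ba} and Lem.~\ref{lem:fdfa-to-buechi-inclusion}, exactly as you do. The supporting structure you sketch --- validity of the extracted counterexamples via $\autdollar_{u\$v}$, $\autdollarfdfa$, $\autdollarfdfaneq$, handling U3 by the contrapositive of Lem.~\ref{lem:word-fdfa-buechiL-preserve}, the O3 abort via Lem.~\ref{lem:lang-buechiU-period}, and the strict-progress-with-finite-ceiling argument (Coro.~\ref{coro:ce-state-increase}, Lem.~\ref{lem:forc-current-leading}, Lem.~\ref{lem:forc-current-recurrent-leading}) --- is precisely the machinery the paper develops in its appendices to justify this theorem.
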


Given a canonical $\FDFA$ $\fdfas$, the under-approximation method produces a BA $\buechiL$ such that
$\upword{\fdfas} = \upword{\lang{\buechiL}}$, thus in the worst case, $\FDFA$ learner learns a canonical $\FDFA$ and terminates. In practice, the algorithm very often finds a BA recognizing $\omegaRegLang$
before converging to a canonical $\FDFA$.

%\section{PAC-based Optimization}
%Use sampling before we choose equivalence check or make full use of each counterexample.
\section{Experimental results}

The $\roll$ library (\url{http://iscasmc.ios.ac.cn/roll}) is implemented in JAVA.
The DFA operations in $\roll$ are delegated to the \emph{dk.brics.automaton} package, and we use the RABIT tool~\cite{AbdullaCCHHMV10,AbdullaCCHHMV11} to check the equivalence of two BAs.
We evaluate the performance of $\roll$ using the smallest BAs corresponding to all the 295 LTL specifications available in B\"uchiStore\cite{Tsay2011BSO}, where the numbers of states in
the BAs range over 1 to 17 and transitions range over 0 to 123.
The machine we used for the experiments is a 2.5 GHz Intel Core i7-6500 with 4 GB RAM. We set the timeout period to 30 minutes.

\begin{table}[!h]
   \caption{Overall experimental results. We show the results of 285 cases where all algorithms can finish the BA learning within the timeout period and list the number of cases cannot be solved (\#Unsolved). The mark $n^*/m$ denotes that there are $n$ cases terminate with an error (in the over-approximation method) and it ran out of time for $m-n$ cases. The rows \#St., \#Tr., \#MQ, and \#EQ, are the numbers of states, transitions, membership queries, and equivalence queries. $\text{Time}_{eq}$ is the time spent in answering equivalence queries and $\text{Time}_{total}$ is the total execution time.}
  \label{tab:table-norm}
  \centering
  \scalebox{0.9}{
  \begin{tabular}{l|rr|rr|rr|rr|rr|rr|rr}
\toprule
Models & \multicolumn{2}{c|}{$\Ldollar$} & \multicolumn{4}{c|}{$\Lperiodic$} &
\multicolumn{4}{c|}{$\Lsyntactic$} & \multicolumn{4}{c}{$\Lrecurrent$}\\
\hline
\multirow{2}{1cm}{Struct.\& Approxi.} & \multirow{2}{*}{Table} & \multirow{2}{*}{Tree} & \multicolumn{2}{c|}{Table} & \multicolumn{2}{c|}{Tree} & \multicolumn{2}{c|}{Table} & \multicolumn{2}{c|}{Tree} & \multicolumn{2}{c|}{Table} & \multicolumn{2}{c}{Tree}\\
\cline{4-15}

 & & &  under& \multicolumn{1}{c|}{over} &  under& over&  under& \multicolumn{1}{c|}{over} &  under& over& under& \multicolumn{1}{c|}{over} &  under& over\\
\hline
\#Unsolved     & 4 & 2   & 3    & 0/2   & 2    & 0/1  & 1    & 4*/5 & 0    & 3*/3  & 1    & 0/1  & 1 & 0/1\\
\#St.       &3078 &3078  &2481  & 2468   &2526  &2417  &2591  &2591  &\textbf{2274}   &\textbf{2274}   &2382  &2382   &2400 &2400 \\
\#Tr.       &10.6k&\textbf{10.3k} &13.0k  &13.0k  &13.4k &12.8k &13.6k &13.6k &12.2k  &12.2k  &12.7k &12.7k  &12.8k &12.8k \\
\#MQ        &105k &114k  &86k   &85k     &69k   &\textbf{67k}   &236k  &238k  &139k   &139k   &124k  &124k   &126k &126k \\
\#EQ        &\textbf{1281} &2024  &1382  &1351    &1950  &1918  &1399  &1394  &2805   &2786   &1430  &1421   &3037 &3037 \\
$\text{Time}_{eq}$(s)      &146 &817    & 580  &92     &186   &159    &111   &115   &\textbf{89}     &91    &149   &149     &462 &465 \\
$\text{Time}_{total}$(s)      &183 &861    &610   &114    &213   &186    &140   &144   &\textbf{118}    &120    &175   &176    &499 &501 \\
EQ(\%)                     &79.8 &94.9  &95.1  &80.7   &87.3  &85.5   &79.3  &79.9  &\textbf{75.4}  &75.8   &85.1  &84.6    &92.6 &92.8 \\
%Mem(MB)     &25k  &28k   &25k &24k  &26k &25k &26k &26k &26k &26k  &25k &24k  &28k &27k \\

\bottomrule
\end{tabular}
  }
\end{table}

The overall experimental results are given in Tab.~\ref{tab:table-norm}. In this section, we use $\Ldollar$ to denote the $\omega$-regular learning algorithm in~\cite{Farzan2008},
and $\Lperiodic$, $\Lsyntactic$, and $\Lrecurrent$ to represent the periodic,  syntactic, and recurrent $\FDFA$ learning algorithm introduced in Sec.~\ref{sec:fdfa-learner-table} and~\ref{sec:fdfa-learner-tree}. From the table, we can find the following facts:
(1) The BAs learned from $\Ldollar$ have more states but fewer transitions
than their $\FDFA$ based counterpart.
(2) $\Lperiodic$ uses fewer membership queries
comparing to $\Lsyntactic$ and $\Lrecurrent$. The reason is that $\Lsyntactic$ and $\Lrecurrent$ need to restart the learning of all progress automata from scratch when the leading automaton has been modified.
(3) Tree-based algorithms always solve more learning tasks than their table-based counterpart. In particular, the tree-based $\Lsyntactic$ with the under-approximation method solves all 295 learning tasks.

\begin{wrapfigure}{l}{5cm}
	\vspace*{-0.8cm}
	\centering
\pgfplotsset{compat=1.3}
\usetikzlibrary{calc,trees,shapes,intersections,arrows,automata,patterns,plotmarks}
\tikzset{astate/.style={draw,circle,inner sep=0pt,minimum size=8mm}}
\tikzset{astateq/.style={draw,rectangle,inner sep=3pt,rounded corners}}
\tikzset{adistr/.style={draw,circle,fill,minimum size=1mm,inner sep=0mm}}

\colorlet{colorcuddmtbdd}{brown}
\colorlet{colorcuddbdd}{blue}
\colorlet{colorbuddy}{green!60!black}
\colorlet{colorcacbdd}{violet}
\colorlet{colorjdd}{red}
\colorlet{colorbeedeedee}{cyan}
\colorlet{colorsylvan1}{teal}
\colorlet{colorsylvan7}{black}

\pgfplotsset{
	/pgfplots/ybar legend/.style={
		/pgfplots/legend image code/.code={%
		\draw[##1,/tikz/.cd,bar width=3pt,yshift=-0.3em,bar shift=0pt,draw=none]
			plot coordinates {(0mm,0.8em)};},
	},
	every axis y label/.style={at={(ticklabel cs:0.5)}, rotate=90,anchor=near ticklabel},
	every axis x label/.style={at={(ticklabel cs:0.5)}, anchor=near ticklabel},
	every axis legend/.append style={draw=none,column sep=4mm,semithick},
	every axis/.append style={clip =false},
}

%\begin{document}
\resizebox{150pt}{!}{
\begin{tikzpicture}
\scriptsize
\pgfplotsset{every axis legend/.append style={at={(axis description cs:0.5,-0.15)},anchor=north,legend columns=4,font=\large}}
\pgfplotsset{tick label style={font=\large}}
\pgfplotsset{title style={font=\Large}}
\pgfplotsset{label style={font=\large}}
	\node (cluster) at (0,0) {%
	\resizebox{40mm}{!}{
	\begin{tikzpicture}
	\begin{axis}[
		xlabel={Number of Step},
		ylabel={Number of States},
		width=100mm,
		minor x tick num=4,
		]
	\addplot+[mark=+,mark options={draw=none,thick,fill=blue},blue] coordinates { % L doallr, tab
		(1, 1)
		(2, 5)
		(3, 9)
		(4, 16)
		(5, 17)
		(6, 17)
		(7, 25)
		(8, 36)
		(9, 65)
		(10, 109)
		(11, 156)
		(12, 177)
		(13, 187)
		(14, 202)
	};

    \addlegendentryexpanded{\kern-3mm $\Ldollar$}
	\addplot+[mark=+,mark options={draw=none,thick,fill=green!60!black},green!60!black] coordinates { % p-tab-u
		(1, 4)
		(2, 9)
		(3, 17)
		(4, 16)
		(5, 13)
		(6, 18)
		(7, 15)
		(8, 16)
		(9, 17)
		(10, 19)
		(11, 19)
		(12, 39)
	}; \addlegendentryexpanded{\kern-3mm $\Lperiodic$}
	\addplot+[mark=+,mark options={draw=none,thick,fill=violet},violet] coordinates { % s-tab-u
     (1, 4)
     (2, 10)
     (3, 17)
     (4, 16)
     (5, 13)
     (6, 18)
     (7, 24)
     (8, 31)
     (9, 37)
     (10, 37)
     (11, 44)
     (12, 44)
     (13, 49)
     (14, 57)
     (15, 57)
     (16, 65)
     (17, 86)
     (18, 86)
     (19, 86)
     (20, 86)
     (21, 89)
     (22, 19)
     (23, 19)
     (24, 19)
     (25, 22)
     (26, 28)
     (27, 35)
     (28, 38)
     (29, 38)
     (30, 48)
     (31, 48)
     (32, 48)
     (33, 60)
     (34, 60)
     (35, 60)
     (36, 63)
     (37, 26)
     (38, 26)
     (39, 29)
     (40, 29)
     (41, 31)
     (42, 38)
     (43, 39)
     (44, 45)
     (45, 45)
     (46, 48)
     (47, 43)
     (48, 57)
	}; \addlegendentryexpanded{\kern-3mm $\Lsyntactic$}
	\addplot+[mark=+,mark options={draw=none,thick,fill=red},red] coordinates { % N = 2304; T1: 20.5013263225
		(1, 4)
		(2, 9)
		(3, 12)
		(4, 11)
		(5, 9)
		(6, 9)
		(7, 10)
		(8, 10)
		(9, 11)
		(10, 11)
		(11, 15)
		(12, 17)
	}; \addlegendentryexpanded{\kern-3mm $\Lrecurrent$}
	\end{axis}
	\end{tikzpicture}
	}};

\end{tikzpicture}
}

	%\vspace*{-1cm}
	\caption{Growth of state counts in BA}
	\label{fig:growth-number-states-buchi-step}
	\vspace*{-0.6cm}
\end{wrapfigure}

In the experiment, we observe that table-based $\Ldollar$ has 4 cases cannot be finished within the timeout period, which is the largest number amount all learning algorithms\footnote{Most of the unsolved tasks using the over-approximation method are caused by the situation that the $\FDFA$ teacher cannot find a valid counterexample for refinement. }.
We found that for these 4 cases, the average time required for $\Ldollar$ to get an equivalence query result is much longer than the $\FDFA$ algorithms. Under scrutiny, we found that the growth rate of the size (number of states) of the conjectured BAs generated by table-based $\Ldollar$ is much faster than that of table-based $\FDFA$ learning algorithms. In Fig.~\ref{fig:growth-number-states-buchi-step}, we illustrate the growth rate of the size (number of states) of the BAs generated by each table-based learning algorithm using one learning task that cannot be solved by $\Ldollar$ within the timeout period. The figures of the other three learning tasks show the same trend and hence are omitted.
Another interesting observation is that the sizes of BAs generated by $\Lsyntactic$ can decrease in some iteration because the leading automaton is refined and thus the algorithms have to redo the learning of all progress automata from scratch.

It is a bit surprise to us that, in our experiment, the size of BAs $\overline B$ produced by the over-approximation method is not much smaller than the BAs $\underline B$ produced by the under-approximation method.
Recall that the progress automata of $\overline B$ comes from the product of three DFAs $\machine^u_u\times (\proDFA^{u})^{s_u}_{v}\times (\proDFA^{u})^v_{v}$ while those for $\underline B$ comes from the product of only two DFAs $\machine^u_u\times (\proDFA^{u})^{s_u}_{v}$ (Sec.~\ref{sec:buechi-builder}).
We found the reason is that very often the language of the product of three DFAs is equivalent to the language of the product of two DFAs
, thus we get the same DFA after applying DFA minimizations.
Nevertheless, the over-approximation method is still helpful for $\Lperiodic$ and $\Lrecurrent$. For $\Lperiodic$, the over-approximation method solved more learning tasks than the under-approximation method. For $\Lrecurrent$, the over-approximation method solved one tough learning task that is not solved by the under-approximation method.

As we mentioned at the end of Sec.~\ref{sec:sub:fdfa-learner-tree-algo},
a possible optimization is to reuse the counterexample and to avoid equivalence query as much as possible.
The optimization helps the learning algorithms to solve nine more cases that were not solved before.

\section{Discussion and Future works}

%*provide a tree-based algorithm which stores less membership queries.

%*it will be interesting that if one can use family of NFAs to recognize
%$\omega$-regular language.

Regarding our experiments, the BAs from LTL specifications are in general simple; the average sizes of the learned BAs are around 10 states. From our experience of applying DFA learning algorithms, the performance of tree-based algorithm is significantly better than the table-based one when the number of states of the learned DFA is large, say more than 1000. We believe this will also apply to the case of BA learning. Nevertheless, in our current experiments, most of the time is spent in answering equivalence queries. One possible direction to improve the scale of the experiment is to use a PAC (probably approximately correct) BA teacher~\cite{Angluin1988QCL} instead of an exact one, so the equivalence queries can be answered faster because the BA equivalence testing will be replaced with a bunch of BA membership testings.

There are several avenues for future works. We believe the algorithm and library of learning BAs should be an interesting tool for the community because it enables the possibility of many applications. For the next step, we will investigate the possibility of applying BA learning to the problem of reactive system synthesis, which is known to be a very difficult problem and learning-based approach has not been tried yet.

There are learning algorithms for residual NFA~\cite{Bollig2009ALN}, which is a more compact canonical representation of regular languages than DFA. We think maybe one can also generalize the learning algorithm for family of DFAs to family of residual NFAs (FRNFA). To do this, one needs to show FRNFAs also recognize $\omega$-regular language and finds the corresponding right congruences.

\bibliographystyle{abbrv}
\bibliography{notes}

\newpage
\appendix
\section*{Appendix}

In this section, we first show that although our acceptance condition defined in Sec.~\ref{sec:preliminaries}
is different from the original one defined
in \cite{Angluin2014}, but the ultimately periodic words of the $\FDFA$ will be preserved.
Then, we give the refinement for the progress trees of syntactic and recurrent $\FDFA$s in Sec.~\ref{app:ref_pt}.
In Sec.~\ref{app:approx}, we present the proofs of the lemmas given in Sec.~\ref{sec:buechi-builder}.
In Sec.~\ref{app:fa-construction}, we provide the constructions for the FAs $\autdollar_{u\$v}$, $\autdollarfdfa$ and $\autdollarfdfaneq$
as well as the correctness proof of counterexample analysis.
We also give the correctness proof and complexity of the tree-based learning algorithm in Sec.~\ref{app:correctness-and-termination-tree-algo}.
%Finally, we will provide some supplemental experiment results in Sec.\ref{app:supp-exp-res}.

\section{Language Preservation under Different Acceptance Conditions}\label{app:lang-preservation}

Recall that the original acceptance condition for periodic $\FDFA$ in \cite{Angluin2014} is that $(u, v)$ is accepted by $\fdfas$ if
$v\in\lang{\proDFA^{\stateWord{u}}}$ where $\stateWord{u} = \machine(u)$.
While the original acceptance conditions for syntactic and recurrent $\FDFA$ in \cite{Angluin2014} are the same
as the one defined in this paper. More specifically, $(u, v)$ is accepted by $\fdfas$ if $\machine(uv)=\machine(u)$
and $v\in\lang{\proDFA^{\machine(u)}}$.
The set of ultimately periodic words of an $\FDFA$ $\fdfas$ is defined as $\upword{\fdfas} = \{uv^\omega \mid (u, v)\text{ is accepted by } \fdfas\}$.
The acceptance condition for periodic $\FDFA$ used in this paper is different from the original one in \cite{Angluin2014}.
We prove that the acceptance condition does not change the ultimately periodic words of the periodic $\FDFA$s.
\begin{lemma}\label{lem:ultimately-period-word-preservation}
Let $\fdfas$ be a periodic (syntactic, recurrent) $\FDFA$ under the acceptance condition in \cite{Angluin2014},
then $\upword{\fdfas}$ is preserved under the acceptance condition defined in this paper.
\end{lemma}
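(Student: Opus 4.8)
First I would note that the only case carrying any content is the periodic one: for syntactic and recurrent $\FDFA$s the two acceptance conditions coincide by construction, so $\upword{\fdfas}$ is unchanged and there is nothing to prove. Fix then the canonical periodic $\FDFA$ $\fdfas=(\machine,\{\proDFA^u\})$ of $\omegaRegLang$, and write $U_{\mathrm{o}}$ and $U_{\mathrm{n}}$ for the set of ultimately periodic words of $\fdfas$ under the acceptance condition of \cite{Angluin2014} and under the one of this paper, respectively. Since the new condition is the old one together with the extra requirement $\machine(uv)=\machine(u)$, any decomposition accepted under the new condition is accepted under the old one, which yields the inclusion $U_{\mathrm{n}}\subseteq U_{\mathrm{o}}$ for free. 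The whole task is therefore to prove $U_{\mathrm{o}}\subseteq U_{\mathrm{n}}$.

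The plan for the reverse inclusion is to exploit the defining property of the canonical periodic $\FDFA$ from Def.~\ref{def:cano-fdfas}: the progress automaton $\proDFA^{\machine(u)}$ accepts $v$ if and only if $uv^\omega\in\omegaRegLang$. Consequently a decomposition $(u,v)$ is accepted under the old condition precisely when $uv^\omega\in\omegaRegLang$, so $U_{\mathrm{o}}\subseteq\upword{\omegaRegLang}$. Take any $w\in U_{\mathrm{o}}$ with old-accepted decomposition $(u,v)$, so $w=uv^\omega\in\omegaRegLang$. I would then manufacture from $(u,v)$ a \emph{new}-accepted decomposition of the same word $w$. To force the leading automaton to close a loop on the period, apply a pigeonhole argument to the run of $\machine$ on $w$: since $\machine$ has finitely many states, two of the states $\machine(u),\machine(uv),\machine(uv^2),\dots$ must coincide, i.e.\ there exist $0\le i<j$ with $\machine(uv^i)=\machine(uv^j)$.

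Setting $u'=uv^{i}$ and $v'=v^{\,j-i}$ (note $j-i\ge 1$, so $v'\in\poswords$), I would verify the three conditions for new-acceptance. First, $u'v'^\omega=uv^{i}(v^{\,j-i})^\omega=uv^\omega=w$, so $(u',v')$ is a genuine decomposition of $w$. Second, $\machine(u'v')=\machine(uv^{j})=\machine(uv^{i})=\machine(u')$, so the added leading constraint holds. Third, since $u'v'^\omega=w\in\omegaRegLang$, the canonical property gives $v'\in\lang{\proDFA^{\machine(u')}}$. Hence $(u',v')$ is accepted under the new condition and $w\in U_{\mathrm{n}}$, giving $U_{\mathrm{o}}\subseteq U_{\mathrm{n}}$ and thus $U_{\mathrm{o}}=U_{\mathrm{n}}$.

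The step I expect to require the most care is the passage from the acceptance of $(u,v)$ to that of $(u',v')$. It rests entirely on the canonical ``saturation'' property that progress-language membership is governed only by whether $uv^\omega\in\omegaRegLang$, which is insensitive to replacing the period by a power since $(v^{\,j-i})^\omega=v^\omega$; this is exactly what keeps the pigeonhole-chosen decomposition accepted, and it is where canonicity is indispensable (the statement may fail for a non-canonical periodic $\FDFA$). The remaining obstacle is merely bookkeeping: making sure the index inequality $i<j$ is strict so that $v'$ is a nonempty period and $(u',v')$ is a legitimate decomposition.
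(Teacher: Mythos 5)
Your proof is correct and takes essentially the same route as the paper's: the new-implies-old direction is immediate, and for the converse both arguments pass to a normalized decomposition of the form $(uv^i, v^{j-i})$ that closes a loop in $\machine$ and remains accepted because the canonical periodic $\FDFA$ is saturated. The only difference is presentational: you derive the two ingredients directly (a pigeonhole argument for the normalized factorization, and Def.~\ref{def:cano-fdfas} for the fact that $v\in\lang{\proDFA^{\machine(u)}}$ iff $uv^\omega\in\omegaRegLang$), whereas the paper cites both facts from \cite{Angluin2014}.
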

\begin{proof}
We only need to prove the preservation of ultimately periodic words for the periodic $\FDFA$s.
Given a periodic $\FDFA$ $\fdfas$,
the original acceptance condition of periodic $\FDFA$ requires that $(u, v)$ is accepted by $\fdfas$ if
$v\in\lang{\proDFA^{\stateWord{u}}}$ where $\stateWord{u} = \machine(u)$.
Clearly, the acceptance condition defined in this paper implies the original acceptance condition for the periodic $\FDFA$.
Therefore, we only need prove that if $(u, v)$ satisfies the original acceptance condition, then
there exists some decomposition $(x, y)$ of $\omega$-word $uv^\omega$ which satisfies our acceptance condition.
To achieve this, we first find a normalized formalization $(x, y)$ of $(u, v)$ such that $x = uv^i, y=v^j$ and $\eqWith{\machine}{xy}{x}$ for some $i\geq 0, j\geq 1$
according to \cite{Angluin2014}. Further, it is known that periodic $\FDFA$ is \emph{saturated} in the sense that
 under the original acceptance condition, if $(u, v)$
is accepted by $\fdfas$,
then every decomposition of $uv^\omega$ is accepted by $\fdfas$.
Therefore we have that $(x, y)$ is accepted by $\fdfas$, which means that $y\in\lang{\proDFA^{\stateWord{x}}}$
where $\stateWord{x} = \machine(x)$.
It follows that $(x, y)$ is accepted by $\fdfas$ under our acceptance condition.$\qed$
\end{proof}

We remark that in~\cite{Angluin2014}, they also define an acceptance condition called \emph{normalized acceptance condition},
which is able to make the syntactic and recurrent $\FDFA$s saturated in the sense that if $(u, v)$
is accepted by the $\FDFA$, then every decomposition of $uv^\omega$ is accepted by the $\FDFA$.
Since our goal is to learn a BA in this paper, we do not require the saturation property for all decompositions
of accepted $\omega$-word. Thus, we do not use the normalized acceptance condition.

\section{Refinement of the Progress Trees}\label{app:ref_pt}
Suppose $\stateWord{u} \cdot v^\omega \notin \upword{\omegaRegLang}$ for negative counterexample $(u, v)$,
we thus need refine the progress tree $\learnT_{\stateWord{u}}$.
Let $\size{v} = n$ and $h = s_0 s_1 \cdots s_n$ be the corresponding run of $v$ over $\proDFA^{\stateWord{u}}$.
At the beginning, we have $s_0 = \emptyword$ and $s_n = \stateWord{v}$ where $\stateWord{v} = \proDFA^{\stateWord{u}}(v)$
and $\stateWord{v}$ is an accepting state in $\proDFA^{\stateWord{u}}$, which implies that $\stateWord{u}(\stateWord{v})^\omega \in\upword{\omegaRegLang}$.
Our job here is to find the smallest $j\in [1\cdots n]$ such that
$\func{TE}(s_{j-1}, v[j\cdots n]) \neq \func{TE}(s_j, v[j+1\cdots n])$ so that we can use the experiment $e=v[j+1\cdots n]$ to differentiate $pa = s_{j-1}v[j]$ and $q=s_j$
since currently $s_j = \trans(s_{j-1}, v[j])$.

Afterwards, the progress tree $\learnT_{\stateWord{u}}$
can be refined by replacing the terminal node labeled with $s_j$ by a tree such that (i) its root is labeled by $e = v[j+1\cdots n]$,
(ii) its $\func{TE}(s_j, v[j+1\cdots n])$-subtree is a terminal node labeled by $s_j$,
and (iii) its $\func{TE}(s_{j-1}v[j], v[j+1\cdots n])$-subtree is a terminal node labeled by $s_{j-1}v[j]$.

In order to establish above result, we have to prove that $\func{TE}(s_0, v) \neq \func{TE}(s_n, \emptyword)$ to
ensure that there exists some $j\in[1\cdots n]$ such that $\func{TE}(s_{j-1}, v[j\cdots n]) \neq \func{TE}(s_j, v[j+1\cdots n])$. The proof is as follows.
\begin{itemize}
\item For periodic $\FDFA$, we have $\func{TE}(\emptyword, v)= \false$ since $\stateWord{u}(\emptyword\cdot v)^\omega\notin\upword{\omegaRegLang}$. Since $\stateWord{v}$ is an accepting state, we have $\func{TE}(\stateWord{v}, \emptyword) = \true$.
\item For syntactic $\FDFA$,
we notice that the counterexample requires $\eqWith{\machine}{uv}{u}$, that is,
$\stateWord{u} = \machine(uv) = \machine(u) = \machine(\stateWord{u}v)$.

First, we have $\func{TE}(\emptyword, v) = (\machine(\stateWord{u}\cdot \emptyword), \B) = (\stateWord{u}, \B)$,
where $\B$ is obtained here
since $\stateWord{u} = \machine(\stateWord{u}\cdot \emptyword\cdot v)$ and $\stateWord{u}(\emptyword\cdot v)^\omega\notin\upword{\omegaRegLang}$ according to the definition of $\func{TE}$ in syntactic $\FDFA$.

Since $\stateWord{v}$ is an accepting state in syntactic $\FDFA$, it follows that $\stateWord{u}=\machine(\stateWord{u}\stateWord{v})$ and $\stateWord{u}(\stateWord{v})^\omega\in\omegaRegLang$ according to Def.~\ref{def:cano-fdfas}.
Thus, we have $\func{TE}(\stateWord{v}, \emptyword) = (\machine(\stateWord{u}\stateWord{v}), \A) = (\stateWord{u}, \A)$
where $\A$ is obtained since $\stateWord{u}=\machine(\stateWord{u}\cdot\stateWord{v}\cdot\emptyword)$ and $\stateWord{u}(\stateWord{v}\cdot \emptyword)^\omega\in\upword{\omegaRegLang}$.

\item For recurrent $\FDFA$,
similar as in syntactic $\FDFA$, we have $\func{TE}(\emptyword, v) = \false$
and $\func{TE}(\stateWord{v}, \emptyword) = \true$.
\end{itemize}

We remark that, if the target is syntactic or recurrent $\FDFA$, as long as the leading automaton $\machine$ changes,
we need to initialize the classification tree $\learnT_u$ again for every state $u$ in leading automaton
since the labels on the edges depend on current leading automaton $\machine$.

\section{Proofs of Lem.~\ref{lem:fdfa-to-buechi-inclusion}, Lem~\ref{lem:word-fdfa-buechiL-preserve} and Lem~\ref{lem:lang-buechiU-period} }\label{app:approx}

\underword*
\begin{proof}
From the assumption, we have $\eqWith{\machine}{uv^k}{u}$
and $v^k \in \lang{\proDFA^{\stateWord{u}}}$ for any $k\geq 1$ where $\stateWord{u} = \machine(u)$.
%\footnote{In periodic $\FDFA$, there is no constraint $\eqWith{L}{uv}{v}$ on the accepting condition
%, but adding that constraint will not change the language of the $\FDFA$.}.
It must be the case that some accepting state, say $f$ in $\proDFA^{\stateWord{u}}$, will
be visited twice after we read $v^n$ from initial state for some $n > \size{\proDFA^{\stateWord{u}}}$ with $f = \proDFA^{\stateWord{u}}(v^n)$
since $\proDFA^{\stateWord{u}}$ is a finite automaton.
In other words, there is a loop in the run of $v^n$ over $\proDFA^{\stateWord{u}}$.
Without loss of generality, suppose there exist $i, j\geq 1$ with $i + j = n$ such that
 $f = \proDFA^{\stateWord{u}}(v^i) = \proDFA^{\stateWord{u}}(v^{i+j})$.

In the following, our goal is to find some accepting state $f'$ such that $f'=\proDFA^{\stateWord{u}}(v^k) = \proDFA^{\stateWord{u}}(v^{2k})$
for some $k\geq 1$.
Fig.~\ref{fig:vk-progress} depicts how to find the accepting state $f'$ along the loop path in following two cases.
 \begin{itemize}
 \item $j\geq i$. Let $k=j$.

 \item $j< i$. Let $k = l\times j$ such that $k\geq i$ with the smallest $l\geq 1$.
 \end{itemize}
\begin{figure}
\centering
\begin{tikzpicture}[shorten >=1pt,node distance=1.5cm,on grid,auto,framed]

    \begin{scope}
       \node[initial,state, inner sep=3pt,minimum size=0pt] (q0)      {$s_{\stateWord{u}}$};
       \node[state, accepting, inner sep=3pt,minimum size=0pt]  (q1) [right =of q0]     {$f$};

       \node[state, accepting, inner sep=2.6pt,minimum size=0pt] (q2) [right =of q1] {$f'$};
       \node[] at ($(q0) + (-1, 0.7)$) {$j\geq i$};

       \path[->]
           (q0) edge node {$v^i$}   (q1)
           (q1) edge [bend left] node {$v^{j-i}$} (q2)
           (q2) edge [bend left] node {$v^{i}$} (q1)
           ;
    \end{scope}
    \begin{scope}[xshift=5.7cm]
           \node[initial,state, inner sep=3pt,minimum size=0pt] (q0)      {$s_{\stateWord{u}}$};
       \node[state, accepting, inner sep=3pt,minimum size=0pt]  (q1) [right =of q0]     {$f$};

       \node[state, accepting, inner sep=2.6pt,minimum size=0pt] (q2) [right =of q1] {$f'$};

       \node[] at ($(q0) + (-1, 0.7)$) {$j < i$};
       \path[->]
           (q0) edge node {$v^i$}   (q1)
           (q1) edge [bend left] node {$v^{c}$} (q2)
           (q2) edge [bend left] node {$v^{j-c}$} (q1)
           ;
    \end{scope}
\end{tikzpicture}
\caption{Finding $v^k$. If $j \geq i$, we let $k = j$, otherwise let $c = (l\cdot j-i)\% j \geq 0$ where $k = l\cdot j\geq i$ for some $l\geq 1$}\label{fig:vk-progress}
\end{figure}

It is easy to check that $\proDFA^{\stateWord{u}}(v^k) = \proDFA^{\stateWord{u}}(v^{2k})$
since progress automaton $\proDFA^{\stateWord{u}}$ is deterministic
and the corresponding $f'$ is an accepting state.

It follows that
$v^k$ is accepted by the product $\underline{P}_{(\stateWord{u},f')}$ of three automata $\machine^{\stateWord{u}}_{\stateWord{u}}$
 , $(\proDFA^{\stateWord{u}})^{s_{\stateWord{u}}}_{f'}$ and $(\proDFA^{\stateWord{u}})^{f'}_{f'}$ where $s_{\stateWord{u}}$ is the initial state of $\proDFA^{\stateWord{u}}$.
In other words, $\omega$-word $uv^\omega$ will be accepted in $\buechiL$
since $u\cdot(v^k)^\omega \in \lang{\machine^{\initState}_{\stateWord{u}}}\cdot (\lang{\underline{P}_{(\stateWord{u},f')}})^\omega$.
$\qed$
\end{proof}

\overword*
\begin{proof}
Here we only consider ultimately periodic $\omega$-words in $\buechiU$, so every $\omega$-word can be given by a decomposition.

Since $\upword{\lang{\buechiU}} = \bigcup_{u \in \states, p\in\acc_u} \lang{\machine^{\initState}_{u}} \cdot (\lang{\overline{P}_{(u, p)}})^\omega$,
suppose $\omega$-word $w = uv^\omega \in \upword{\lang{\buechiU}}$, then $w$ can be given by a decomposition $(u, v)$
such that $u\in\lang{\machine^{\initState}_{\stateWord{u}}}$ and $v \in (\lang{\overline{P}_{(\stateWord{u}, p)}})^+$ for some $p \in \acc_{\stateWord{u}}$
where $\stateWord{u} = \machine(u)$.
Thus, we have $v = v_1\cdots v_n$ for some $n\geq 1$ such that $v_i\in\lang{\overline{P}_{(\stateWord{u}, p)}}$ for every $1\leq i\leq n$.
In addition, since $\overline{P}_{(\stateWord{u}, p)} = \machine^{\stateWord{u}}_{\stateWord{u}} \times (\proDFA^{\stateWord{u}})^{s_{\stateWord{u}}}_p$,
we conclude that $\eqWith{\machine}{uv}{u}$ and $v_i\in\lang{(\proDFA^{\stateWord{u}})^{s_{\stateWord{u}}}_p}$ for every  $1\leq i\leq n$
where $s_{\stateWord{u}}$ is the initial state in $\proDFA^{\stateWord{u}}$.

Observe that $p$ is the only accepting state of $(\proDFA^{\stateWord{u}})^{s_{\stateWord{u}}}_p$ and $(\proDFA^{\stateWord{u}})^{s_{\stateWord{u}}}_p$ is
obtained from $\proDFA^{\stateWord{u}}$ by setting $p\in\acc_{\stateWord{u}}$ as its only accepting state,
we have that $p = (\proDFA^{\stateWord{u}})^{s_{\stateWord{u}}}_p(v_i) = \proDFA^{\stateWord{u}}(v_i)$ for every $1\leq i\leq n$
and $p$ is an accepting state in $\proDFA^{\stateWord{u}}$.

The remaining job is how to find the accepting state $p$ in $\proDFA^{\stateWord{u}}$. Suppose we
have the counterexample $uv^\omega$ given by the decomposition $(u, v)$,
from which we construct the FA $\autdollar_{u\$v}$ by the method in Sec.~\ref{sec:ce-translation-dfa-omega-word}.
The number of states in $\autdollar_{u\$v}$ is in $\mathcal{O}(\size{v}(\size{v}+\size{u}))$.
In addition, we can construct an FA $\mathcal{A}$ such that
$\lang{\mathcal{A}} = \bigcup_{u\in\states, p\in\acc_u}\lang{\machine^{\initState}_u}\cdot \$ \cdot (\lang{\machine^u_u\times(\proDFA^u)^{s_u}_p})^+$
where $s_u$ is the initial state of $\proDFA^u$.
By fixing $u$ and $p$, we get $\lang{\mathcal{A}_{(u, p)}} = \lang{\machine^{\initState}_u}\cdot \$ \cdot (\lang{\machine^u_u\times(\proDFA^u)^{s_u}_p})^+ = \lang{\machine^{\initState}_u}\cdot \$ \cdot (\lang{\overline P_{(u, p)}})^+ $.
We get the corresponding $u$ and $p$ such that $\lang{\mathcal{A}_{(u, p)}\times \autdollar_{u\$v}} \neq \emptyset$.
There must exist such $u$ and $p$ otherwise $uv^\omega$ will not be accepted by $\buechiU$.
To get all the fragment words $v_i$ from $v$, one only needs to run the finite word $v$ over $\overline P_{(u, p)}$.
The time and space complexity of this procedure are in $\mathcal{O}(nk(n + nk)\cdot (\size{v}(\size{v}+\size{u})))$ and $\mathcal{O}((n + nk)\cdot (\size{v}(\size{v}+\size{u})))$ respectively
where $n$ is the number of states in the leading automaton and $k$ the number of states in the largest progress automaton.
Thus we complete the proof.$\qed$
\end{proof}

\bainclusion*
\begin{proof}
In the following, we prove the lemma by following cases.
\begin{itemize}
\item Sizes of $\buechiL$ and $\buechiU$.
In the under approximation construction, for every state $u$ in $\machine$, there is a progress automaton $\proDFA^u$ of size at most $k$.
It is easy to conclude that the automaton $\underline{P}_{(u, v)}$ is of size $nk^2$ for every $v\in\acc_u$, so
$\buechiL$ is of size $n+ nk\cdot nk^2 \in \mathcal{O}(n^2k^3)$.
The over-approximation method differs in the construction of the automaton $\overline{P}_{(u, v)}$ from
the under-approximation method.
It is easy to conclude that the automaton $\overline{P}_{(u, v)}$ is of size $nk$ for every $v\in\acc_u$, so
$\buechiU$ is of size $n+ nk\cdot nk \in \mathcal{O}(n^2k^2)$.

\item $\upword{\lang{\buechiL}}\subseteq \upword{\fdfas}$. Suppose ultimately periodic $\omega$-word $w$ is accepted by $\buechiL$, there must be an accepting run in $\buechiL$
in form of $\initState\xrightarrow{u}\stateWord{u}\xrightarrow{\emptyword} s_{\stateWord{u}, v}
\xrightarrow{v_1} f_v \xrightarrow{\emptyword} f'_v \xrightarrow{\emptyword} s_{\stateWord{u}, v} \cdots$.
Then the $\omega$-word $w$ can be divided into the form of $u\cdot \emptyword \cdot v_1 \cdot\emptyword\cdot \emptyword\cdot v_2 \cdots$
by $\epsilon$-transitions.
According to the construction of $\buechiL$, we have $u \in \lang{\machine^{\initState}_{\stateWord{u}}}$
and $v_i \in\lang{\overline{P}_{(\stateWord{u}, v)}}$ for any $i \geq 1$. Moreover, since $\overline{P}_{(\stateWord{u}, v)}$
is the product of three automata $\machine^{\stateWord{u}}_{\stateWord{u}}$, $(\proDFA^{\stateWord{u}})^{s_{\stateWord{u}}}_{v}$
and $(\proDFA^{\stateWord{u}})^{v}_{v}$ where $s_{\stateWord{u}}$ is the initial state in $\proDFA^{\stateWord{u}}$. It follows that
$\lang{\machine^{\initState}_{\stateWord{u}}}
\cdot (\lang{\overline{P}_{(\stateWord{u}, v)}})^*
= \lang{\machine^{\initState}_{\stateWord{u}}} $
and $(\lang{\overline{P}_{(\stateWord{u}, v)}})^+ = \lang{\overline{P}_{(\stateWord{u}, v)}}$.

By Lem.5 in\cite{Calbrix1993}, there exist two words $x \in \lang{\machine^{\initState}_{\stateWord{u}}}$
and $y \in \lang{\overline{P}_{(\stateWord{u}, v)}}$ such that $w = x\cdot y^\omega$.
In other words, we have $\stateWord{u} = \machine(x)$, $\eqWith{\machine}{xy}{x}$ and $y \in \lang{\proDFA^{\stateWord{u}}}$
, which implies that $w$ is accepted by $\fdfas$.

\item $\upword{\fdfas} \subseteq \upword{\lang{\buechiU}}$. Suppose an $\omega$-word $w \in\upword{\fdfas}$, then there exists a decomposition $(u, v)$ of $w$
such that $\eqWith{\machine}{uv}{u}$ and $\stateWord{v}$ is an accepting state where $\stateWord{u} = \machine(u)$
and $\stateWord{v} = \proDFA^{\stateWord{u}}(v)$. It follows that $v\in\lang{\overline{P}_{(\stateWord{u}, \stateWord{v})}}$ according
to Def.~\ref{def:fdfa-to-buechi}. In addition, we have $u \in \lang{\machine^{\initState}_{\stateWord{u}}}$,
which follows that $u \cdot v^\omega \in \lang{\machine^{\initState}_{\stateWord{u}}} \cdot (\lang{\overline{P}_{(\stateWord{u}, \stateWord{v})}})^\omega = \upword{\lang{\buechiU}}$.

\item $\upword{\lang{\buechiL}} = \upword{\fdfas}$ if $\fdfas$ is a canonical $\FDFA$. For any $\FDFA$ $\fdfas$,
we have $\upword{\lang{\buechiL}} \subseteq \upword{\fdfas}$. Thus, the remaining job is to prove that $\upword{\fdfas} \subseteq \upword{\lang{\buechiL}}$ if $\fdfas$ is a canonical $\FDFA$,
which follows from Prop.~\ref{prop:periodic-word-implication} and Lem.~\ref{lem:word-fdfa-buechiL-preserve}. Thus, we complete the proof.

\end{itemize}
$\qed$
\end{proof}

We present Prop.~\ref{prop:periodic-word-implication}, which follows from Def.~\ref{def:cano-fdfas} of the canonical $\FDFA$s.
\begin{proposition}\label{prop:periodic-word-implication}
Let $\omegaRegLang$ be an $\omega$-regular language,
$\fdfas=(\machine, \{\proDFA^{u}\})$ the corresponding periodic (syntactic, recurrent) $\FDFA$ and $u, v\in\finwords$.
We have that if $(u, v)$ is accepted by $\fdfas$ then
$(u, v^k)$ is also accepted by $\fdfas$ for any $k\geq 1$.
\end{proposition}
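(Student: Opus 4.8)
The plan is to show, for all three canonical types simultaneously, that the acceptance of $(u,v^{k})$ reduces to the two conditions $\eqWith{L}{uv^{k}}{u}$ and $u(v^{k})^{\omega}\in\omegaRegLang$, and then to derive both from the acceptance of $(u,v)$. Recall that $(x,y)$ is accepted by $\fdfas$ iff $\machine(xy)=\machine(x)$ and $y\in\lang{\proDFA^{\machine(x)}}$, and that $\machine$ is induced by the canonical right congruence $\canoEq$, so $\machine(xy)=\machine(x)$ coincides with $\eqWith{L}{xy}{x}$. Writing $\stateWord{u}=\machine(u)$, the acceptance of $(u,v)$ therefore yields $\eqWith{L}{uv}{u}$ together with $\class{v}\in\acc_{K}$ in $\proDFA^{\stateWord{u}}$. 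Inspecting the three definitions of $\acc_{K}$ in Def.~\ref{def:cano-fdfas} shows that in each case this imposes exactly the constraints $\eqWith{L}{uv}{u}$ and $uv^{\omega}\in\omegaRegLang$ on $v$: for $K\in\{S,R\}$ the relation $\eqWith{L}{uv}{u}$ is built into $\acc_{K}$, while for $K=P$ it is supplied by the leading automaton.

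First I would dispatch the leading-automaton condition. Since $\singleEq$ is a right congruence and $\eqWith{L}{uv}{u}$ holds, appending $v^{k-1}$ gives $\eqWith{L}{uv^{k}}{uv^{k-1}}$; a routine induction on $k$ then establishes $\eqWith{L}{uv^{k}}{u}$, equivalently $\machine(uv^{k})=\machine(u)=\stateWord{u}$, for every $k\geq1$.

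For the progress-automaton condition the key observation is simply $(v^{k})^{\omega}=v^{\omega}$, whence $u(v^{k})^{\omega}=uv^{\omega}$. Thus the membership $uv^{\omega}\in\omegaRegLang$ obtained above is literally $u(v^{k})^{\omega}\in\omegaRegLang$, and together with $\eqWith{L}{uv^{k}}{u}$ from the first step it is precisely what the definition of $\acc_{K}$ requires in order to place $\class{v^{k}}$ in $\acc_{K}$, for each of the periodic, syntactic and recurrent progress congruences. Hence $v^{k}\in\lang{\proDFA^{\stateWord{u}}}$ and $\machine(uv^{k})=\machine(u)$, so $(u,v^{k})$ is accepted by $\fdfas$.

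I do not anticipate a genuine obstacle: the whole argument is structural, resting only on the right-congruence property of $\singleEq$ and on the identity $(v^{k})^{\omega}=v^{\omega}$. The single point deserving care is the bookkeeping across the three accepting sets of Def.~\ref{def:cano-fdfas}, namely confirming that each $\acc_{K}$ depends on $v$ only through $uv^{\omega}\in\omegaRegLang$ and (for $K\in\{S,R\}$) $\eqWith{L}{uv}{u}$, both of which are invariant under replacing $v$ by $v^{k}$.
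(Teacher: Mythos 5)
Your proposal is correct and takes essentially the same route as the paper's proof: both derive $\eqWith{L}{uv^k}{u}$ from the right-congruence property of $\canoEq$ (using that $\singleEq_\machine$ agrees with $\canoEq$ in the canonical $\FDFA$), and both obtain the membership condition from the identity $(v^k)^\omega=v^\omega$. The class-invariance check you defer as ``bookkeeping'' is exactly the step the paper carries out explicitly via the congruence transfers $v\doubleEq^{\tilde{u}}_{K}\stateWord{v}$ and $v^{k}\doubleEq^{\tilde{u}}_{K}\stateWord{v^{k}}$ for $K\in\{P,S,R\}$ (it follows by instantiating the universally quantified word in each definition of $\proEq_{K}$ at $\emptyword$), so your sketch completes to a full proof along the paper's lines.
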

\begin{proof}
Let $\stateWord{u}=\machine(u)$ and $\stateWord{v^k} = \proDFA^{\stateWord{u}}(v^k)$
, then we have that $v^k \doubleEq^{{\tilde{u}}}_K\stateWord{v^k}$ for every $k\geq 1$
where $K\in\{P,S,R\}$.
This is because $\stateWord{v^k} = \proDFA^{\stateWord{u}}(\stateWord{v^k}) = \proDFA^{\stateWord{u}}(v^k)$
which makes $v^k$ in the equivalence class $\class{\stateWord{v^k}}$.
%We fix the state $\stateWord{u}$ in the leading automaton $\machine$ here,
%so we omit the superscript $\stateWord{u}$ of $\doubleEq^{\tilde{u}}_K$ in the proof.
Our goal is to prove that $(u, v^k)$ is also accepted by $\fdfas$, that is,
$\eqWith{\machine}{uv^k}{u}$ and
$\stateWord{v^k}$ is an accepting state for every $k\geq 1$.
Since $\singleEq_\machine$ and $\canoEq$ is consistent in the three canonical $\FDFA$s,
so from the fact that $(u, v)$ is accepted by $\fdfas$, we have that $\eqWith{\machine}{uv}{u}$,
i.e., $\eqWith{L}{uv}{u}$. It follows that $\eqWith{L}{uv^k}{u}$ for every $k\geq 1$.
Thus, the remaining proof is to prove that $\stateWord{v^k}$ is an accepting state for every $k\geq 1$
in the three canonical $\FDFA$s.
\begin{itemize}
\item
For periodic $\FDFA$, since $(u, v)$ is accepted by $\fdfas$, i.e, $\stateWord{v}$ is an accepting state in $\proDFA^{\stateWord{u}}$,
then we have $\stateWord{u}(\stateWord{v})^\omega\in\omegaRegLang$ according to Def.~\ref{def:cano-fdfas}.
By definition of $\doubleEq^{\tilde{u}}_P$ and the fact that $\stateWord{v}\doubleEq^{\tilde{u}}_P v$,
we have that $\stateWord{u}(v)^\omega\in\omegaRegLang$, i.e., $\stateWord{u}(v^k)^\omega\in\omegaRegLang$ for every $k\geq 1$.
Similarly, since $\stateWord{u}(v^k)^\omega\in\omegaRegLang$ and $v^k \doubleEq^{\tilde{u}}_P \stateWord{v^k}$,
we conclude that $\stateWord{u}(\stateWord{v^k})^\omega\in\omegaRegLang$, which means that the state $\stateWord{v^k}$
is an accepting state in $\proDFA^{\stateWord{u}}$ for every $k\geq 1$.

\item
By the definition of $\doubleEq^{\tilde{u}}_R$, if $x \doubleEq^{\tilde{u}}_R y$, then
we have $\eqWith{L}{\stateWord{u}x}{\stateWord{u}}
\land \stateWord{u}x^\omega \in\omegaRegLang \Longleftrightarrow \eqWith{L}{\stateWord{u}y}{\stateWord{u}} \land \stateWord{u}y^\omega \in\omegaRegLang$
for any $x, y\in\finwords$.
Since $x\doubleEq^{\tilde{u}}_S y$ implies $x \doubleEq^{\tilde{u}}_R y$, we also have above result if $x \doubleEq^{\tilde{u}}_S y$.
In the following, $\doubleEq^{\tilde{u}}_K$ can be replaced by $\doubleEq^{\tilde{u}}_S$ and $\doubleEq^{\tilde{u}}_R$.

For syntactic $\FDFA$ and recurrent $\FDFA$, if $(u, v)$ is accepted
by $\fdfas$, then $\eqWith{L}{\stateWord{u}\stateWord{v}}{\stateWord{u}}$
and $\stateWord{u}(\stateWord{v})^\omega\in\omegaRegLang$ according to Def.~\ref{def:cano-fdfas}.
By the fact that $v \doubleEq^{\tilde{u}}_K \stateWord{v}$, if we set $x = v$ and $y = \stateWord{v}$,
then we have that $\eqWith{L}{\stateWord{u}v}{\stateWord{u}}$
and $\stateWord{u}(v)^\omega\in\omegaRegLang$, which implies that $\eqWith{L}{\stateWord{u}v^k}{\stateWord{u}}$
and $\stateWord{u}(v^k)^\omega\in\omegaRegLang$ for every $k\geq 1$.

Similarly, as $v^k \doubleEq^{\tilde{u}}_K \stateWord{v^k}$, if we set $x= v^k$ and $y = \stateWord{v^k}$, we have that
$\eqWith{L}{\stateWord{u}\stateWord{v^k}}{\stateWord{u}}$
and $\stateWord{u}(\stateWord{v^k})^\omega\in\omegaRegLang$, which follows that
$\stateWord{v^k}$ is an accepting state in $\proDFA^{\stateWord{u}}$ for every $k\geq 1$. $\qed$

\end{itemize}

\end{proof}

\section{Finite Automaton Construction and Correctness for Counterexample Analysis}\label{app:fa-construction}
\subsection{Construction for $\autdollar_{u\$v}$}\label{sec:ce-translation-dfa-omega-word}
In \cite{Calbrix1993}, they presented a canonical representation
$L_{\$} = \{u\$v \mid u\in\finwords, v\in\alphabet^+, uv^\omega\in\omegaRegLang\}$
for a regular $\omega$-language $\omegaRegLang$.
Theoretically, we can apply their method to obtain the $\autdollar_{u\$v}$ automaton for an $\omega$-word $uv^\omega$
where the number of states in $\autdollar_{u\$v}$ is in $\mathcal{O}(2^{\size{u}+\size{v}})$.
In this section, we introduce a more effective way
to build an automaton $\autdollar_{u\$v}$ such that
$L(\autdollar_{u\$v}) = \{u\$v \mid u\in\finwords, v\in\alphabet^+, uv^\omega = w\}$
for a given $\omega$-word $w$ with the number of states in $\mathcal{O}(\size{v}(\size{v}+\size{u}))$.
A similar construction for $\autdollar_{u\$v}$ has been proposed in \cite{Farzan2008}, which first computes the regular
expression to represent all possible decompositions of $uv^\omega$ and then constructs a DFA from the regular expression.
In this section, we give a direct construction for $\autdollar_{u\$v}$ of $uv^\omega$ as well as the complexity of the construction.

Fig.~\ref{fig:example-aut-single-omega-word} depicts an example automaton $\autdollar_{u\$v}$ for $\omega$-word $(ab)^\omega$.
From the example, we can find that both decompositions $(aba, ba)$ and $(ababa, bababa)$ have the same periodic
word $(ba)^\omega$, which means that the second finite word of a decomposition can be simplified as long as we do not change the periodic word.
\begin{figure}
\centering
\begin{tikzpicture}[shorten >=1pt,node distance=1.5cm,on grid,auto,framed]
      \node[initial,state] (q0)      {$q_0$};
      \node[state]         (q1) [right =of q0]  {$q_1$};
      \node[state]         (q2) [right =of q1]  {$q_2$};
      \node[state, accepting]         (q3) [right =of q2]  {$q_3$};

      \node[state] (q4) [below =of q0] {$q_4$};
      \node[state] (q5) [below =of q1] {$q_5$};
      \node[state] (q6) [below =of q2] {$q_6$};
      \node[state, accepting] (q7) [below =of q3] {$q_7$};

%      \node[draw] at ($(q0) + (-1, -1.0)$) {$\proDFA^{\emptyword}$};
%
      \path[->] (q0)  edge node {$\$$} (q1)
                      edge [bend left] node {$a$}  (q4)
                (q1)  edge node {$a$} (q2)
                (q2)  edge [bend left ]node {$b$} (q3)
                (q3)  edge [bend left] node {$a$} (q2)

                (q4) edge [bend left] node {$b$} (q0)
                     edge node {$\$$} (q5)
                (q5) edge node {$b$} (q6)
                (q6) edge [bend left] node {$a$} (q7)
                (q7) edge [bend left] node {$b$} (q6);
%                (q1)  edge [loop above]  node {a} (q1)
%                      edge [bend left]  node {b} (q0);
\end{tikzpicture}
\caption{$\autdollar_{u\$v}$ for $\omega$-word $(aba, ba)$}\label{fig:example-aut-single-omega-word}
\end{figure}

Formally, we give the definition of a \emph{smallest period} in an $\omega$-word $w$ given by
its decomposition $(u, v)$ where $v\in\poswords$.
To that end, we need more notations.
We use $u\preOfeq v$ to represent that there exists some $j\geq 1$ such that $u=v[1\cdots j]$
, and we say $u$ is a prefix of $v$. We use $u\preOfneq v$ if $u\preOfeq v$ and $u\neq v$.
\begin{definition}[Smallest period]\label{def:smallest-period}
For any $\omega$-word $w$ given by $(u, v)$, we say $r$ is the smallest period of $(u, v)$
if $r \preOfeq v, r^\omega = v^\omega$ and for any $t \preOfneq r$, we have $t^\omega \neq r^\omega$.
\end{definition}
Take the $\omega$-word $(ab)^\omega$ as an example, $ab$ and $ba$ are the smallest periods
of decomposition $(ab, ab)$ and $(aba, ba)$ respectively.
It is interesting to see that $\size{ab} = \size{ba}$ and $ab$ can be transformed to $ba$ by
shift the first letter of $ab$ to its tail.
In general, given $\omega$-word $w$, the length of the smallest period is fixed no matter
how $w$ is decomposed which is justified by Lem.~\ref{lem:len-small-period}.

\begin{lemma}\label{lem:len-small-period}
Given an $\omega$-word $w$, $(u, v)$ and $(x, y)$ are different decompositions of $w$ and their corresponding
smallest periods are $r$ and $t$, respectively. Then $\size{r} = \size{t} = n$ and either there exists $j\geq 2$
such that $r = t[j\cdots n]\cdot t[1\cdots j-1]$ or $r = t$.
\end{lemma}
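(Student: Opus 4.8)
The plan is to recognize this as the classical statement that the primitive root of an ultimately periodic word is unique up to cyclic rotation, and to route the two required ingredients---namely $\size{r}=\size{t}$ and the rotation relation---through the common periodic tail shared by the two decompositions $(u,v)$ and $(x,y)$ of $w$.

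First I would record that the smallest period $r$ of $(u,v)$ is \emph{primitive}: if $r=s^m$ with $m\geq 2$, then $s\preOfneq r$ and $s^\omega=r^\omega$, contradicting the minimality clause of Def.~\ref{def:smallest-period}; the same holds for $t$. Next, since $uv^\omega=xy^\omega=w$, both $v^\omega=r^\omega$ and $y^\omega=t^\omega$ are suffixes of the single infinite word $w$. Writing $N=\max(\size{u},\size{x})$ and letting $z$ be the suffix of $w$ beginning at position $N+1$, the word $z$ is simultaneously a suffix of $r^\omega$ and of $t^\omega$. In particular, because every index of $z$ exceeds both $\size{u}$ and $\size{x}$, the relations $w[N+i]=w[N+i+\size{r}]$ and $w[N+i]=w[N+i+\size{t}]$ hold for all $i\geq 1$, so $z$ has both $\size{r}$ and $\size{t}$ as periods when viewed as an infinite word.

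Then I would establish $\size{r}=\size{t}$ as follows. By the classical Fine--Wilf property for infinite words (an infinite word with periods $a$ and $b$ has period $\gcd(a,b)$), the word $z$ has period $g=\gcd(\size{r},\size{t})$. But $z$ is a suffix of $r^\omega$, so $z=\rho^\omega$ where $\rho$ is the appropriate cyclic rotation of $r$; since a cyclic rotation of a primitive word is again primitive, $\rho$ is primitive and hence the minimal period of $z$ is exactly $\size{\rho}=\size{r}$. Thus $\size{r}\leq g\leq\size{r}$, forcing $g=\size{r}$ and therefore $\size{r}\mid\size{t}$. By the symmetric argument $\size{t}\mid\size{r}$, so $\size{r}=\size{t}=:n$.

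Finally, with equal lengths I would read off the rotation. Since $v^\omega$ and $y^\omega$ are the suffixes of $w$ beginning at offsets $\size{u}$ and $\size{x}$, the word $y^\omega$ is $r^\omega$ shifted by $\size{x}-\size{u}$; using $w[\size{u}+i]=w[\size{u}+i+n]$, this shift may be reduced to $\delta=(\size{x}-\size{u})\bmod n$. Hence the length-$n$ prefix of $y^\omega$ is the cyclic rotation $r[\delta+1\cdots n]\cdot r[1\cdots\delta]$, and because $t$ is the primitive length-$n$ period of $y^\omega$ it must coincide with this prefix. Inverting the rotation yields $r=t$ when $\delta=0$, and $r=t[j\cdots n]\cdot t[1\cdots j-1]$ with $j=n-\delta+1\geq 2$ otherwise, which is exactly the claim. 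The main obstacle I anticipate is purely bookkeeping: handling the shift when $\size{x}<\size{u}$ (a ``negative'' offset, resolved by reducing modulo $n$) and keeping the $1$-based index arithmetic consistent with the convention $t[j\cdots n]\cdot t[1\cdots j-1]$ of the statement.
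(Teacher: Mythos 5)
Your proof is correct, and it takes a genuinely different route from the paper's. The paper argues by contradiction: assuming w.l.o.g.\ $\size{r} > \size{t}$, it does an explicit case analysis on $\size{u}$ versus $\size{x}$ (with shift offsets computed modulo $\size{t}$) to exhibit a cyclic rotation $z$ of $t$ with $z \preOfneq r$ and $z^\omega = r^\omega$, contradicting the minimality clause of the smallest-period definition; the rotation statement itself falls out of the same shift bookkeeping once the lengths agree. You instead prove $\size{r}=\size{t}$ directly, importing two classical combinatorics-on-words facts: the smallest period is primitive (which you correctly derive from Def.~\ref{def:smallest-period}, whose minimality clause is precisely the statement that $\size{r}$ is the minimal period of $r^\omega$), and the Fine--Wilf periodicity property for infinite words, applied to the common suffix $z$ of $r^\omega$ and $t^\omega$ past position $\max(\size{u},\size{x})$. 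In fact your gcd detour is slightly more than needed---since $z$ is a rotation power on both sides, its minimal period is simultaneously $\size{r}$ and $\size{t}$, giving equality in one step---but the argument as written is sound, including the reduction of a possibly negative offset $\size{x}-\size{u}$ modulo $n$ and the index inversion $j = n-\delta+1 \geq 2$. What each approach buys: yours is a clean, direct proof that makes the underlying structure (primitivity plus the periodicity lemma) explicit and avoids the paper's somewhat sketchy two-case figure-driven contradiction; the paper's is more elementary and self-contained, using only deterministic index arithmetic and no external periodicity lemma, which fits its appendix style.
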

\begin{proof}
According to Def.~\ref{def:smallest-period}, $w = uv^\omega = ur^\omega = xy^\omega = xt^\omega$.
We prove it by contradiction. Without loss of generality, suppose $\size{r} > \size{t}$.
If $\size{u} = \size{x}$, then $r^\omega = t^\omega$, we then conclude that $r$ is not a
smallest period of $(u, v)$ since $t \preOfneq r$.
Otherwise if $\size{u} \neq \size{x}$, we can either prove that $r = t$ or
find some $j\geq 2$ such that $z = t[j\cdots n]\cdot t[1\cdots j-1] \preOfneq r$
and $z^\omega = r^\omega$ in following cases.
\begin{itemize}
\item $\size{u} > \size{x}$. Let $k = (\size{u} - \size{x})\% \size{t} + 1$. If $k = 1$, then $z = t$, otherwise $j = k$;
\item $\size{x} > \size{u}$. Let $k = (\size{r} - (\size{x} - \size{u})\% \size{r})\% \size{t} + 1$.
If $k = 1$, then $z = t$, otherwise $j = k$;
\end{itemize}
We depict the situation where $\size{u} > \size{x}$ in the following.
 \begin{figure}\label{fig:smallest-length}
  \centering
	\begin{tikzpicture}[->, >=stealth',shorten >=2pt,auto]
	
	\node (ur) at (-1, 0.5) {$(u, r)$};
	\node (s0) at ($(ur)+(4, 0)$) {$u[1]u[2]\cdots u[k]u[k+1] \cdots u[m] \cdot r\cdot r \cdot r \cdots $};

    \node (xt) at (-1, 0) {$(x, t)$};
	\node (q0) at ($(xt)+(4, 0)$) {$x[1]x[2]\cdots x[k]t[1] \cdots\cdot\cdot t[j-1] \cdot z\cdot z \cdot z \cdots $};
    	
	\end{tikzpicture}
\end{figure}

From the assumption $\size{t} < \size{r}$, we have that $z \preOfneq r$.
However, since $z^\omega = r^\omega$, we conclude that $r$ is not the smallest period of
$(u, v)$. Contradiction. Thus we complete the proof. $\qed$
\end{proof}

Lem.~\ref{lem:len-small-period} shows that if the size of the smallest period of an $\omega$-word $w$ is $n$,
then there are exactly $n$ different smallest periods for $w$.
In the following, we define the shortest form for a decomposition of an $\omega$-word.
\begin{lemma}\label{lem:decomposition-rewrite}
For any decomposition $(u, v)$ of an $\omega$-word $w$, and $y$ is its corresponding smallest
period, then we can rewrite $u = xy^i$ and $v=y^j$ for some $i\geq 0, j\geq 1$ such that
for any $x' \preOfeq u$ with $u=x'y^k$ for some $0\leq k\leq i$, we have $x' = xy^{i-k}$.
We say such $(x, y)$ is the \emph{shortest form} for $(u, v)$.
\end{lemma}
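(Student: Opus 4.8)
The plan is to split the claim into two essentially independent pieces: first that the smallest period $y$ divides $v$ exactly, so that $v = y^{j}$, and second that factoring out the maximal power of $y$ from the right end of $u$ produces the promised $x$ and $i$, with the stated characterization then following from a one-line cancellation of words.

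First I would record that the smallest period $y$ is \emph{primitive}. Note $y$ is nonempty since $v\in\poswords$ and $y\preOfeq v$. If we had $y = q^{m}$ with $m\geq 2$ and $q$ nonempty, then $q$ would be a proper prefix of $y$ with $q^\omega = y^\omega$, i.e.\ $q\preOfneq y$ and $q^\omega=y^\omega$, contradicting the minimality clause of Def.~\ref{def:smallest-period}. Now, since $y\preOfeq v$ and $v$ is a prefix of $v^\omega = y^\omega$, the number $\size{v}$ is a period of the purely periodic word $y^\omega$. A standard fact from combinatorics on words shows that every period of a purely periodic word with primitive root $y$ is a multiple of $\size{y}$; hence $\size{y}$ divides $\size{v}$, and since $v$ is a length-$\size{v}$ prefix of $y^\omega$ we conclude $v = y^{j}$ with $j = \size{v}/\size{y}\geq 1$. (If one prefers a self-contained argument, write $v = y^{q}\,y[1\cdots s]$ with $0\leq s<\size{y}$; comparing $v^\omega$ and $y^\omega$ past the first copy of $v$ forces $y = y[s+1\cdots\size{y}]\cdot y[1\cdots s]$, so the two blocks $y[1\cdots s]$ and $y[s+1\cdots\size{y}]$ commute and are powers of a common word, making $y$ a proper power unless $s=0$; primitivity then gives $s=0$.)

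Next I would construct $x$ and $i$ and verify the property. Let $i$ be the largest integer such that $y^{i}$ is a suffix of $u$; this exists because $i=0$ always works and $i\leq\size{u}/\size{y}$, and writing $u = x\,y^{i}$ for the corresponding prefix $x$ gives precisely the \emph{shortest form}, the maximality of $i$ forcing $x$ to be as short as possible. For the characterization, take any prefix $x'\preOfeq u$ with $u = x'\,y^{k}$ and $0\leq k\leq i$. Then $x'\,y^{k} = u = x\,y^{i} = (x\,y^{i-k})\,y^{k}$, and right-cancelling the common suffix $y^{k}$ (two finite words that are equal overall and both end in the block $y^{k}$ must share the preceding prefix) yields $x' = x\,y^{i-k}$, exactly as required.

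The only genuinely non-trivial point is the divisibility step $\size{y}\mid\size{v}$; everything else is elementary. I would therefore spend the care there, either invoking the period structure of primitive words directly (via Fine--Wilf) or giving the short conjugacy argument sketched above. The maximal-power factorization of $u$ and the final suffix cancellation are routine and carry no hidden difficulty.
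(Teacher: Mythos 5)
Your proof is correct, and it is actually more careful than the paper's own argument at precisely the point you single out. The paper proves this lemma constructively: it searches the prefixes $y = v[1\cdots k]$ of $v$ in increasing length and returns the first one with $v = y^j$ for some $j\geq 1$, then repeatedly strips trailing copies of $y$ from $u$ to obtain the shortest $x$ with $u = xy^i$ --- the same maximal peeling you perform in your second half. However, Def.~\ref{def:smallest-period} only demands $y \preOfeq v$ and $y^\omega = v^\omega$, \emph{not} that $v$ be an exact power of $y$; the paper's search examines only exact power-divisors of $v$ and simply asserts that the first hit ``must be the smallest period,'' silently assuming the divisibility fact $\size{y} \mid \size{v}$. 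Your proof supplies exactly what is missing: primitivity of the smallest period (from the minimality clause of the definition), and then the Fine--Wilf/conjugacy argument --- if $v = y^q\, y[1\cdots s]$ with $0 < s < \size{y}$, comparing the two expansions of $v^\omega = y^\omega$ forces the blocks $y[1\cdots s]$ and $y[s+1\cdots \size{y}]$ to commute, making $y$ a proper power, a contradiction --- whence $v = y^j$. Your closing cancellation step ($x'y^k = xy^{i-k}y^k$ implies $x' = xy^{i-k}$ by right cancellation in the free monoid) is likewise a clean justification of what the paper dismisses with ``one can easily conclude.'' One small observation: the characterization as literally stated in the lemma holds for \emph{any} factorization $u = xy^i$ by right cancellation alone; it is the maximality of $i$ --- which both you and the paper's stripping procedure enforce --- that makes $x$ shortest and so makes $(x,y)$ deserve the name \emph{shortest form}, and your remark attributing this role to maximality is exactly right.
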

\begin{proof}
This can be proved by Def.~\ref{def:smallest-period} and the fact that $y^\omega = v^\omega$,
which can be further illustrated by the procedure of constructing $(x, y)$.
To find the shortest form of $(u, v)$, we need to first find the smallest period $y$ of $(u, v)$,
which is illustrated by following procedure. At first we initialize $k = 1$.
\begin{itemize}
\item Step 1. Let $y = v[1\cdots k]$, we recursively check whether there exists some $j\geq 1$ such that
$v = y^j$. If there exists such $j$, we return $y$ as the smallest period. Otherwise we go to Step 2.
\item Step 2. We increase $k$ by $1$ and go to Step 1.
\end{itemize}
Since $k$ starts at $1$, then $y$ must be the smallest period of $(u, v)$ such that $v^\omega = y^\omega$.

We find the above $x$ of the shortest form in the following procedure.
\begin{itemize}
\item Step 1. Let $x = u$. If $x=\emptyword$, or $x=y$ then we return $\emptyword$.
Otherwise we check whether there exists some $k\geq 1$ such that $x = x[1\cdots k]\cdot x[k+1\cdots \size{x}]$
and $y = x[k+1\cdots \size{x}]$.
If there is no such $k$, we return $x$ as the final result. Otherwise we go to Step 2.
\item Step 2. We set $u = x[1\cdots k]$.
\end{itemize}
One can easily conclude that $x$ is the shortest prefix of $u$ such that $u = xy^i$ for some $i\geq 0$. $\qed$
\end{proof}

Following corollary is straightforward.
\begin{corollary}\label{coro:shortest-form-decomposition}
Given two decompositions $(u_1, v_1)$ and $(u_2, v_2)$ of $uv^\omega$. If $(u_1, v_1)$ and $(u_2, v_2)$ share the smallest
period $y$, then they also have the same shortest form $(x, y)$ where $u_1 = xy^i, u_2=xy^j$ for some
$i,j\geq 0$.
\end{corollary}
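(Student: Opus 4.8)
The plan is to reduce everything to the minimality built into the \emph{shortest form} of a decomposition (Lem.~\ref{lem:decomposition-rewrite}) together with one elementary fact from combinatorics on words. First I would apply Lem.~\ref{lem:decomposition-rewrite} to each of $(u_1, v_1)$ and $(u_2, v_2)$; since the two decompositions share the smallest period $y$, this yields shortest forms $(x_1, y)$ and $(x_2, y)$ with $u_1 = x_1 y^{i}$ and $u_2 = x_2 y^{j}$ for suitable $i, j \geq 0$, while $v_1$ and $v_2$ are powers of $y$. Because $v_1^\omega = v_2^\omega = y^\omega$, both decompositions describe the same word, giving $x_1 y^\omega = u v^\omega = x_2 y^\omega$. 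The goal is then precisely to show $x_1 = x_2$, after which one sets $x := x_1 = x_2$ and the exponents $i, j$ are exactly those supplied by the shortest forms.

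The one auxiliary fact I would establish is that the smallest period $y$ is \emph{primitive}: if $y = p^m$ with $m \geq 2$, then $p \preOfneq y$ and $p^\omega = y^\omega$, contradicting the minimality clause of Def.~\ref{def:smallest-period}. With primitivity in hand, I would argue by comparing lengths. Both $x_1$ and $x_2$ are prefixes of the common word $w = x_1 y^\omega = x_2 y^\omega$, so, assuming without loss of generality $\size{x_1} \leq \size{x_2}$, the case $\size{x_1} = \size{x_2}$ forces $x_1 = x_2$ immediately. For the strict case $\size{x_1} < \size{x_2}$, I would write $x_2 = x_1 z$ with $z \in \poswords$ a prefix of $y^\omega$; cancelling the common prefix $x_1$ in $x_1 y^\omega = x_2 y^\omega = x_1 z y^\omega$ yields $z y^\omega = y^\omega$. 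Iterating this identity gives $z^n y^\omega = y^\omega$ for every $n \geq 1$, whence $z^\omega = y^\omega$, and primitivity of $y$ then forces $z = y^{t}$ for some $t \geq 1$.

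Finally I would derive the contradiction that closes the strict case. From $z = y^{t}$ we get $x_2 = x_1 y^{t}$, so $u_2 = x_2 y^{j} = x_1 y^{t + j}$; this exhibits $x_1$ as a strictly shorter prefix of $u_2$ of the form $x_1 y^{k}$ with $k \geq 0$, contradicting the minimality of $x_2$ guaranteed by the shortest form of $(u_2, v_2)$. Hence $\size{x_1} < \size{x_2}$ is impossible, so $x_1 = x_2$ and the claim follows. The step I expect to be the main obstacle is the combinatorial implication $z y^\omega = y^\omega \Rightarrow z = y^{t}$: it is where primitivity of $y$ is essential and must be stated carefully, since without primitivity one could only conclude that $z$ is a power of the primitive root of $y$, which would not by itself contradict the minimality built into the shortest form.
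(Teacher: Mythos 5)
Your proof is correct, and it goes well beyond what the paper itself provides: the paper's own justification of this corollary is a one-sentence proof sketch (``if we assume they have different shortest forms, they should not be two decompositions of the same $\omega$-word''), i.e., it asserts the contradiction without proving it. You follow the same contradiction skeleton but supply the two ingredients the sketch silently relies on: (i) the smallest period $y$ of Def.~\ref{def:smallest-period} is \emph{primitive}, which indeed follows at once from the minimality clause ($y=p^m$ with $m\geq 2$ would give $p\preOfneq y$ and $p^\omega=y^\omega$); and (ii) the combinatorial core: writing $x_2=x_1 z$ with $z\in\poswords$, cancelling $x_1$ in $x_1y^\omega=x_2y^\omega$ gives $zy^\omega=y^\omega$, iterating gives $z^\omega=y^\omega$, and primitivity of $y$ (by Fine--Wilf, or the standard fact that $z^\omega=y^\omega$ iff $z$ and $y$ are powers of a common primitive word) forces $z=y^t$, contradicting the minimality of $x_2$. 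You also correctly identify the one point of danger: without primitivity, $z^\omega=y^\omega$ would only make $z$ a power of the primitive root of $y$, and it is exactly Def.~\ref{def:smallest-period} that excludes this. One formal remark: in the last step you contradict the minimality of $x_2$ via the representation $u_2=x_1y^{t+j}$, whose exponent $t+j$ exceeds the exponent $j$ of the shortest form, whereas the minimality statement in Lem.~\ref{lem:decomposition-rewrite} is literally quantified only over exponents $0\leq k\leq i$; to be airtight, appeal instead to the characterization established at the end of that lemma's proof, namely that $x_2$ is the shortest prefix of $u_2$ with $u_2\in x_2y^*$, which your equation contradicts directly. This is a presentational fix, not a gap; your argument buys a rigorous replacement for a proof the paper only gestures at.
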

\begin{proof}[Sketch]
If we assume they have different shortest forms, they should not be two decompositions of the same $\omega$-word. $\qed$
\end{proof}

By Coro.~\ref{coro:shortest-form-decomposition}, we can represent all decompositions of an $\omega$-word $w$ which share the same smallest period
$y$ with $(xy^i, y^j)$ with some $i\geq 0, j\geq 1$.
In addition, since the number of different smallest periods is $\size{y}$, we can thus denote all
the decompositions of $w$ by the set $\bigcup_{k=1}^{\size{y}}\{(x_k y_k^i, y_k^j) \mid i\geq 0, j\geq 1\}$ where $(x_k, y_k)$
is the $k$-th shortest form of $w$. Therefore, we provide the construction of $\autdollar_{u\$v}$ as follows.

\subsubsection{Construction of $\autdollar_{u\$v}$}
Now we are ready to give the construction of $\autdollar_{u\$v}$ for a single $\omega$-word $w$
given by $(u, v)$. Suppose $(x, y)$ is the shortest form of $(u, v)$, then we
construct $\autdollar_{u\$v}$ as follows. Let $k = 1$, $n = \size{y}$, and we first construct
an automaton $D_1$ such that $\lang{D_1} = xy^*\$y^+$.
\begin{itemize}
\item Step 1. If $k = n$, then we construct the $\autdollar_{u\$v}$ such that
$\lang{\autdollar_{u\$v}} = \bigcup^n_{i=1} \lang{D_i}$, otherwise, we go to Step 2.
\item Step 2. We first increase $k$ by $1$.
Let $u' = x\cdot y[1]$ and $y' = y[2\cdots n]\cdot y[1]$.
We then get the shortest form $(x', y')$ of $(u', y')$ where the second element is $y'$ since $y'$ is the smallest
period of $(u', y')$ according to Lem.~\ref{lem:len-small-period}.
We then construct an automaton $D_k$ such that $\lang{D_k} = x'y'^*\$y'^+$
and let $x= x', y = y'$ and go to Step 1.

\end{itemize}

Suppose $\size{x} = m$ and $\size{y} = n$, the DFA $\aut$ that accepts $xy^*\$y^+$ can be constructed as follows.
\begin{itemize}
\item If $m = 0$, then we construct a DFA $\aut = (\alphabet, \{q_0, \cdots, q_{2n}\}, q_0, \{q_{2n}\}, \trans)$
where we have that $\trans(q_{k-1}, y[k]) = q_k$ when $1\leq k\leq n-1$, $\trans(q_{n-1}, y[n]) = q_0$, $\trans(q_{0}, \$) = q_{n}$,
 $\trans(q_{n-1+k}, y[k]) = q_{n+k}$ when $1\leq k\leq n$, and $\trans(q_{2n}, y[1]) = q_{n+1}$.
\item Otherwise $m\geq 1$, then we construct a DFA $\aut = (\alphabet, \{q_0, \cdots, q_{2n+m}\}, q_0, \{q_{m+2n}\}, \trans)$ where
we have that $\trans(q_{k-1}, x[k]) = q_k$ when $1\leq k\leq m$, $\trans(q_{m-1+k}, y[k]) = q_{m+k}$ when $1\leq k\leq n-1$,
$\trans(q_{m+n-1}, y[n]) = q_{m}$, $\trans(q_{m}, \$) = q_{m+n}$, $\trans(q_{m+n+k-1}, y[k]) = q_{m+n+k}$ when $1\leq k\leq n$,
and $\trans(q_{m+2n}, y[1]) = q_{m+n+1}$.
\end{itemize}
One can validate that $\lang{\aut} = xy^*\$y^+$ and the number of states in $\aut$ is at most $\size{x} + 2\size{y} + 1$.
%Intuitively, the number of all possible smallest periods is $n$ according to Lem.~\ref{lem:len-small-period}.
%Moreover, if we have the smallest period $p$ of one decomposition $(u, v)$, we can get all smallest periods by
%shifting the symbols from the head to the rear of the smallest period.

\begin{proposition}\label{prop:lang-omega-dfa}
Let $\autdollar_{u\$v}$ be the DFA constructed from the decomposition $(u, v)$ of $\omega$-word $uv^\omega$,
then $\lang{\autdollar_{u\$v}} = \{u'\$ v' \mid u'\in\finwords, v'\in\poswords, u'v'^\omega = uv^\omega\}$.
\end{proposition}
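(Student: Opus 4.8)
The plan is to establish the two language inclusions separately, the key bridge being that the cyclic shift in Step~2 of the construction enumerates exactly the $n=\size{y}$ distinct smallest periods of $uv^\omega$ promised by Lem.~\ref{lem:len-small-period}. First I would record the invariant maintained by the loop: the component DFAs produced are $D_1,\dots,D_n$ with $\lang{D_k}=x_k y_k^*\$y_k^+$, where $y_1=y$ and $y_{k+1}=y_k[2\cdots n]\cdot y_k[1]$ range over the $n$ cyclic rotations of $y$, and each $(x_k,y_k)$ is the shortest form (Lem.~\ref{lem:decomposition-rewrite}) of a decomposition of $uv^\omega$ whose smallest period is $y_k$. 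This invariant is proved by induction on $k$: the shift preserves the $\omega$-word, since $x_k y_k^\omega = x_k y_k[1]\,(y_k[2\cdots n]\,y_k[1])^\omega = u_{k+1}y_{k+1}^\omega$ by the standard rotation identity, and by Lem.~\ref{lem:len-small-period} the rotation $y_{k+1}$ is again a smallest period of length $n$, so passing to shortest form yields $(x_{k+1},y_{k+1})$ with $x_{k+1}y_{k+1}^\omega = uv^\omega$. I would also invoke the already-stated correctness of the elementary automata, namely $\lang{\aut}=xy^*\$y^+$ for the DFA built at the end of the section, so that $\lang{\autdollar_{u\$v}}=\bigcup_{k=1}^n \lang{D_k}$ is exactly $\bigcup_{k=1}^n x_k y_k^*\$y_k^+$.

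For the inclusion $\lang{\autdollar_{u\$v}}\subseteq\{u'\$v'\mid u'v'^\omega=uv^\omega\}$, any accepted word lies in some $\lang{D_k}=x_k y_k^*\$y_k^+$, so $u'=x_k y_k^i$ and $v'=y_k^j$ for some $i\geq 0$, $j\geq 1$. Then $u'v'^\omega = x_k y_k^i (y_k^j)^\omega = x_k y_k^\omega = uv^\omega$ by the invariant, and $v'\in\poswords$ since $j\geq 1$. This direction is routine.

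For the converse, let $(u',v')$ be any decomposition of $uv^\omega$ with $v'\in\poswords$, and let $r$ be its smallest period (Def.~\ref{def:smallest-period}). By Lem.~\ref{lem:len-small-period} we have $\size{r}=\size{y}=n$ and $r$ is a cyclic rotation of $y$, hence $r=y_k$ for a unique $k\in[1\cdots n]$. Since $(u',v')$ and the decomposition underlying $(x_k,y_k)$ share the smallest period $y_k$, Coro.~\ref{coro:shortest-form-decomposition} gives that they have the same shortest form, so $u'=x_k y_k^i$ for some $i\geq 0$. It remains to show $v'=y_k^j$ for some $j\geq 1$: from $r=y_k\preOfeq v'$ and $r^\omega=v'^\omega$ together with the minimality of $\size{r}$, a periodicity argument forces $n$ to divide $\size{v'}$ and $v'$ to be the prefix-aligned power $y_k^{\,\size{v'}/n}$. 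Consequently $u'\$v'\in x_k y_k^*\$y_k^+=\lang{D_k}\subseteq\lang{\autdollar_{u\$v}}$.

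The main obstacle I anticipate is the bookkeeping that ties the purely syntactic cyclic-shift enumeration of Step~2 to the semantic statement of Lem.~\ref{lem:len-small-period}: one must check that the $n$ rotations generated are all genuine smallest periods and that every decomposition's smallest period occurs among them, so that the index $k$ in the converse direction is well defined (distinctness is automatic since a smallest period is primitive). The only genuinely non-bookkeeping point is the subsidiary claim that $v'$ is an \emph{exact} power of its smallest period rather than merely sharing the same $\omega$-power, and I would isolate that periodicity fact as a short separate lemma.
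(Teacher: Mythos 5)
Your proposal is correct and takes essentially the same route as the paper's own proof: the easy inclusion is handled identically, and the converse uses exactly the paper's ingredients, namely Lem.~\ref{lem:len-small-period} to identify the smallest period of an arbitrary decomposition with one of the $n$ cyclic rotations enumerated by the construction, and Coro.~\ref{coro:shortest-form-decomposition} to conclude that the shortest form, and hence the component $D_k$ with $\lang{D_k}=x_ky_k^*\$y_k^+$, is the right one. Your two refinements --- making the loop invariant explicit by induction on $k$, and isolating the periodicity fact that $v'$ is an \emph{exact} power of its smallest period --- only add rigor to steps the paper treats implicitly (the latter is already asserted in the statement of Lem.~\ref{lem:decomposition-rewrite}, whose proof the paper leaves somewhat procedural), so they constitute the same argument rather than a different one.
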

\begin{proof}
\item $\subseteq$.
This direction is easy since $\lang{\autdollar_{u\$v}}=\bigcup^n_{i=1}\lang{D_i}$, we only need to
prove that for any $1\leq i\leq n$, if $u'\$ v'\in \bigcup^n_{i=1}\lang{D_i}$, then $u'v'^\omega = uv^\omega$.
Suppose $\lang{D_i} = x_iy_i^*\$ y_i^+$, thus for any $u'\$v' \in\lang{D_i}$, we have $u' = x_iy_i^j$ and $v'=y_i^k$ for
some $j\geq 0, k\geq 1$. It follows that $u'v'^\omega = uv^\omega$ since $x_iy_i^\omega = uv^\omega$.
\item $\supseteq$.
For any decomposition $(u', v') $ of $uv^\omega$, we can get its shortest form $(x', y')$ where $y'$ is the smallest
period of $(u', v')$ according to Lem.~\ref{lem:decomposition-rewrite}.
Suppose $(x, y)$ is the first shortest form used in the $\autdollar_{u\$v}$ construction.
By Lem.~\ref{lem:len-small-period}, we prove $u'\$v'$ is accepted by $\autdollar_{u\$v}$ as follows.
\begin{itemize}
\item $y = y'$. We have that $u' = xy^i$ and $v'= y^j$ for some $i\geq 0, j\geq 1$, thus $u'\$v'\in\lang{D_1}\subseteq\lang{\autdollar_{u\$v}}$.
\item $y' = y[j\cdots n]y[1\cdots j-1]$ for some $j\geq 2$. We conclude that $\lang{D_j} = x'y'^*\$y'^+$
since the shortest form is unique if we fix the smallest period by Coro.~\ref{coro:shortest-form-decomposition},
which follows that $u'\$v'\in\lang{D_j}\subseteq \lang{\autdollar_{u\$v}}$.
\end{itemize}
Therefore, we complete the proof. $\qed$
\end{proof}

\begin{proposition}\label{prop:size-of-ddollar}
Given an $\omega$-word $w$ given by $(u, v)$, then the automaton
$\autdollar_{u\$v}$ has at most $\mathcal{O}(\size{v}(\size{u}+\size{v})$ of states.
\end{proposition}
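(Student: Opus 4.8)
The plan is to read off the size of $\autdollar_{u\$v}$ directly from its construction, which assembles it as a union $\lang{\autdollar_{u\$v}} = \bigcup_{k=1}^{n} \lang{D_k}$, where $n = \size{y}$ is the length of the smallest period $y$ of $(u,v)$ and $D_k$ is the DFA recognising $x_k y_k^* \$ y_k^+$ for the $k$-th shortest form $(x_k, y_k)$. The $n$ components can be combined into a single automaton with at most $\sum_{k=1}^{n}\size{D_k}$ states (as in Fig.~\ref{fig:example-aut-single-omega-word}, where the branches share one initial state, which only decreases the count). Thus it suffices to bound the number $n$ of components and each $\size{D_k}$ separately, and then sum.

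First I would bound $n$. By Lem.~\ref{lem:len-small-period} all smallest periods of $w$ have the same length $n = \size{y}$, so the construction produces exactly $n$ components. Moreover, the first shortest form gives $v = y^j$ for some $j \geq 1$, whence $\size{v} = j\cdot n \geq n$, i.e. $n \leq \size{v}$. For each component, the stated DFA construction for $x_k y_k^* \$ y_k^+$ yields at most $\size{x_k} + 2\size{y_k} + 1$ states, and since $y_k$ is a rotation of $y$ we have $\size{y_k} = n$ throughout.

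The crux is bounding the prefix lengths $\size{x_k}$, which I would do by induction along the period-shifting process. For the base case, $(x_1, y_1)$ is the shortest form of $(u,v)$, so $x_1 \preOfeq u$ by Lem.~\ref{lem:decomposition-rewrite} and $\size{x_1} \leq \size{u}$. For the inductive step, the construction passes from $(x_{k-1}, y_{k-1})$ to $(x_k, y_k)$ by forming $u' = x_{k-1}\cdot y_{k-1}[1]$ and then taking the shortest form of $(u', y')$; since $x_k$ is a prefix of $u'$, we get $\size{x_k} \leq \size{u'} = \size{x_{k-1}} + 1$. Hence $\size{x_k} \leq \size{u} + (k-1) \leq \size{u} + n - 1$. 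This prefix bound is the main obstacle: one must check that reducing to the shortest form never lengthens the prefix and that each shift appends exactly one letter; once this is established, the rest is arithmetic.

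Finally I would sum the component sizes:
\[
\sum_{k=1}^{n} \size{D_k} \;\leq\; \sum_{k=1}^{n}\bigl(\size{x_k} + 2n + 1\bigr) \;\leq\; \sum_{k=1}^{n}\bigl(\size{u} + (k-1) + 2n + 1\bigr) \;=\; n\bigl(\size{u} + 2n + 1\bigr) + \tfrac{n(n-1)}{2},
\]
which lies in $\mathcal{O}(n(\size{u} + n))$. Substituting $n \leq \size{v}$ then gives the claimed bound $\mathcal{O}(\size{v}(\size{u} + \size{v}))$.
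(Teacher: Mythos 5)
Your proposal is correct and follows essentially the same route as the paper: the paper likewise bounds the number of components $D_i$ by the length of the smallest period $r$ (at most $\size{v}$) and each component by $\mathcal{O}(\size{u}+\size{r})$ states, obtaining $\mathcal{O}(\size{r}(\size{r}+\size{u}))\subseteq\mathcal{O}(\size{v}(\size{u}+\size{v}))$. Your inductive bound $\size{x_k}\leq\size{u}+k-1$ on the prefix lengths merely makes explicit a step the paper asserts without proof in its per-component bound.
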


For every automaton $D_i$ such that $\lang{D_i} = xy^*\$y^+$, the number of states in $D_i$ is at most $\size{u} + 2\size{r} + 2$
where $r$ is the smallest period of $(u, v)$,
thus the number of states in $\autdollar_{u\$v}$ is in $\mathcal{O}(\size{r}\times(\size{r} + \size{u}))\in \mathcal{O}(\size{v}(\size{u}+\size{v})$.

\subsection{Construction of $\autdollarfdfa$ and $\autdollarfdfaneq$}\label{sec:ce-translation-dfa-fdfa}
In this section, given an $\FDFA$ $\fdfas = (\machine, \{\proDFA^u\})$, we provide the constructions for $\autdollarfdfa$
 and $\autdollarfdfaneq$. To ease the construction, we define two automata $N_u$ and $\overline{N_u}$ which will be used in the construction for every state $u$ in the leading automaton $\machine$. Assume that we have $\machine^u_u$, the corresponding
 progress automaton $\proDFA^{u} = ( \alphabet, \states^u, s^u, \acc^u, \trans^{u})$ and
 a DFA $\overline{\proDFA^u} = ( \alphabet, \states^u, s^u, \states^u\setminus\acc^u, \trans^{u})$ built from $\proDFA^u$
 such that $\lang{\overline {\proDFA^u}} = \finwords \setminus\lang{\proDFA^u}$. Note that the transition $\trans^{u}$ is complete
 in the sense that $\trans^{u}(s, a)$ is defined for every $s\in\states^u, a\in\alphabet$.
 \begin{itemize}
 \item For $\autdollarfdfa$, we have $N_u = \machine^u_u \times \proDFA^u$. Intuitively, we only keep the finite words which start at $u$ and can go back to $u$ in the leading automaton. In other words, $\lang{N_u} = \{ v\in\finwords \mid \eqWith{\machine}{uv}{u}, v\in\lang{\proDFA^u}\}$.

 \item For $\autdollarfdfaneq$, we have $\overline{N_u} = \machine^u_u \times \overline{\proDFA^u}$.
 Similarly, we have $\lang{\overline{N_u}} = \{ v\in\finwords \mid \eqWith{\machine}{uv}{u}, v\notin\lang{\proDFA^u}\}$.
 \end{itemize}

More precisely, The construction is as follows.
\begin{definition}\label{def:fdfa-to-dfa}
Let $\fdfas =\{\machine, \{\proDFA^u\}\}$ be an $\FDFA$ where we have
$\machine = ( \alphabet, \states, \initState, \trans)$
and for every $u \in \states$, the corresponding progress automaton
$\proDFA^{u} = ( \alphabet, \states^u, s^u, \acc^u, \trans^{u})$.
Let $N_{u}$ (and $\overline N_u$) be given by
$( \alphabet, \states_{u}, s_{u}, \acc_{u}, \trans_{u})$.
The DFA $\autdollarfdfa$ (and $\autdollarfdfaneq$) is defined as the tuple
$( \alphabet\cup\{\$\}, \states\cup \states_{Acc}, \initState, \acc, \trans \cup \trans_{Acc} \cup \trans_{\$})$
where
\[
\states_{Acc} = \bigcup_{u\in \states} \states_{u} \text{   and   }
\acc = \bigcup_{u\in \states} \acc_{u}
 \text{   and   }
\trans_{Acc} = \bigcup_{u\in\states} \trans_{u}
\]
\[
\trans_{\$} = \{(u, \$, s_u) \mid u\in\states\}
\]
where $\$$ is a fresh symbol.
\end{definition}

In Fig.~\ref{fig:example-fdfa-to-dfa}, we depict the DFA $\autdollarfdfa$ and $\autdollarfdfaneq$ constructed from $\fdfas$ in Fig.~\ref{fig:fdfa-example}.
\begin{figure}
\centering
\begin{tikzpicture}[shorten >=1pt,node distance=1.5cm,on grid,auto,framed]

    \begin{scope}
       \node[initial,state, inner sep=3pt,minimum size=0pt] (q0)      {$q_0$};
       \node[state, inner sep=3pt,minimum size=0pt]  (q1) [right =of q0]     {$q_1$};

       \node[state, accepting, inner sep=3pt,minimum size=0pt] (q2) [right =of q1] {$q_2$};
       \node[] at ($(q0) + (-1, 0.7)$) {$\autdollarfdfa$};

       \path[->]
           (q0) edge [loop above] node {a} (q0)
                edge [loop below] node {b} (q0)
                edge node {$\$$}   (q1)
           (q1) edge node {a, b} (q2)
           (q2) edge [loop above] node {a} (q2)
                edge [bend left] node {b} (q1)
           ;
    \end{scope}
    \begin{scope}[xshift=5.7cm]
           \node[initial,state, inner sep=3pt,minimum size=0pt] (q0)      {$q_0$};
       \node[state, accepting, inner sep=3pt,minimum size=0pt]  (q1) [right =of q0]     {$q_1$};

       \node[state, inner sep=3pt,minimum size=0pt] (q2) [right =of q1] {$q_2$};

       \node[] at ($(q0) + (-1, 0.7)$) {$\autdollarfdfaneq$};
       \path[->]
           (q0) edge [loop above] node {a} (q0)
                edge [loop below] node {b} (q0)
                edge node {$\$$}   (q1)
           (q1) edge node {a,b} (q2)
           (q2) edge [loop above] node {a} (q2)
                edge [bend left] node {b} (q1)
           ;
    \end{scope}
\end{tikzpicture}
\caption{$\autdollarfdfa$ and $\autdollarfdfaneq$ for $\fdfas$ in Fig.~\ref{fig:fdfa-example}}\label{fig:example-fdfa-to-dfa}
\end{figure}

\begin{proposition}\label{thm:language-dollar-dfa-fdfa}
Given an $\FDFA$ $\fdfas = (\machine, \{\proDFA^u\})$ and $\autdollarfdfa$ defined in Def.~\ref{def:fdfa-to-dfa}, then
$\lang{\autdollarfdfa} = \{ u \$ v \mid u\in\finwords, v\in\finwords
, \eqWith{\machine}{uv}{u}, \stateWord{u} = \machine(u), v\in\lang{\proDFA^{\stateWord{u}}} \}$.
\end{proposition}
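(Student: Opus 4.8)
The plan is to prove the two inclusions $\lang{\autdollarfdfa}\subseteq R$ and $R\supseteq\lang{\autdollarfdfa}$ separately, where $R$ denotes the right-hand set, after first pinning down the language recognized by each component automaton $N_u$. Recall from Def.~\ref{def:fdfa-to-dfa} that $N_u = \machine^u_u \times \proDFA^u$; since $\machine^u_u$ sets both the initial and the single accepting state to $u$, its language is $\{v \mid \trans(u, v) = u\}$, and invoking the product construction for intersection from Sec.~\ref{sec:preliminaries} together with the convention that the state $u$ is exactly $\machine(u)$, I would first record the characterization $\lang{N_u} = \{v \in \finwords \mid \eqWith{\machine}{uv}{u} \wedge v \in \lang{\proDFA^u}\}$. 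This is the workhorse fact; everything else is bookkeeping on the structure of runs of $\autdollarfdfa$.

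The key structural observation about $\autdollarfdfa$ is that it reads at most one $\$$ on any accepting run. Starting in $\initState \in \states$, only the transitions of $\trans$ (over $\alphabet$) and the single $\$$-transition $(u,\$,s_u) \in \trans_{\$}$ leave the states of $\states$; once the run enters $\states_{Acc}$ via $\$$, only the product transitions $\trans_{Acc}$ (again over $\alphabet$) are available, and these never return to $\states$ nor read $\$$. Since the accepting states $\acc = \bigcup_u \acc_u$ all lie in $\states_{Acc}$, I would conclude that every accepted word factors uniquely as $u \$ v$ with $u, v \in \finwords$, the prefix $u$ being consumed inside $\machine$ and the suffix $v$ inside some $N_{w}$.

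For the inclusion $\subseteq$, I would take $w \in \lang{\autdollarfdfa}$, write $w = u \$ v$ as above, and note that reading $u$ lands in state $\stateWord u = \machine(u)$, the $\$$ moves to $s_{\stateWord u}$, and $v$ is then run inside $N_{\stateWord u}$ ending in $\acc_{\stateWord u}$; hence $v \in \lang{N_{\stateWord u}}$, which by the characterization gives $\eqWith{\machine}{\stateWord u v}{\stateWord u}$ and $v \in \lang{\proDFA^{\stateWord u}}$. The only slightly delicate point is to upgrade $\eqWith{\machine}{\stateWord u v}{\stateWord u}$ to $\eqWith{\machine}{u v}{u}$, which follows because $\machine$ is deterministic and $\stateWord u = \machine(u)$, so $\machine(uv) = \trans(\machine(u), v) = \trans(\stateWord u, v) = \stateWord u = \machine(u)$. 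The reverse inclusion $\supseteq$ is the same computation read backwards: given $u \$ v$ from the right-hand set, I would run $u$ to $\stateWord u$, take the $\$$-edge to $s_{\stateWord u}$, and use $\eqWith{\machine}{uv}{u}$ (hence $\trans(\stateWord u, v) = \stateWord u$) together with $v \in \lang{\proDFA^{\stateWord u}}$ to conclude $v \in \lang{N_{\stateWord u}}$, so the run ends in $\acc_{\stateWord u}$ and $w$ is accepted.

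The main obstacle, such as it is, is not conceptual but lies in being careful with the abuse of notation identifying the finite word $u$ with the state $\machine(u)$ and in the translation between the leading-automaton loop condition $\trans(\stateWord u, v) = \stateWord u$ and the right-congruence statement $\eqWith{\machine}{uv}{u}$; once the characterization of $\lang{N_u}$ and the single-$\$$ run structure are in place, both inclusions are immediate.
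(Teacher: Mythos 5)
Your proof is correct and follows essentially the same route as the paper's: the paper likewise rests on the characterization $\lang{N_u} = \{v \mid \eqWith{\machine}{uv}{u},\ v\in\lang{\proDFA^u}\}$ and packages your run-structure observation as the identities $\autdollarfdfa(u) = \machine(u)$, $\autdollarfdfa(u\$v) = N_{\stateWord{u}}(v)$, and $\lang{\autdollarfdfa}\subseteq\finwords\$\finwords$, then argues both inclusions exactly as you do (your extra care about the single-$\$$ factorization and about upgrading $\eqWith{\machine}{\stateWord{u}v}{\stateWord{u}}$ to $\eqWith{\machine}{uv}{u}$ is a welcome tightening, not a deviation). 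One trivial slip: in your opening sentence the two stated ``inclusions'' $\lang{\autdollarfdfa}\subseteq R$ and $R\supseteq\lang{\autdollarfdfa}$ are the same statement, though the body correctly proves both directions.
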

\begin{proof}
By Def.~\ref{def:fdfa-to-dfa}, it is easy to conclude that for any $u\in\finwords$,
then we have $\stateWord{u} = \machine(u) = \autdollarfdfa(u)$. For any $u, v\in\finwords$,
we have that $N_{\stateWord{u}}(v) = \autdollarfdfa(u\$v)$ where $\stateWord{u} = \machine(u)$ since $\autdollarfdfa$ is a DFA.
By acceptance condition, $(u, v)$ is accepted by $\fdfas$ iff we have $\eqWith{\machine}{uv}{u}$ and $v\in\lang{\proDFA^{\stateWord{u}}}$ where $\stateWord{u}= \machine(u)$. Thus we just need to prove that $(u, v)$ is accepted by $\fdfas$ iff $u\$v$ is accepted by $\autdollarfdfa$.

\item $\supseteq$. $(u, v)$ is accepted by $\fdfas$, then $u\$v \in \lang{\autdollarfdfa}$.
By $\eqWith{\machine}{uv}{u}$ and $v\in\lang{\proDFA^{\stateWord{u}}}$,
we have that $v \in \lang{N_{\stateWord{u}}}$, which follows that $N_{\stateWord{u}}(v)$ is an accepting state.
Since $N_{\stateWord{u}}(v) = \autdollarfdfa(u\$v)$, we have that $\autdollarfdfa(u\$v)$ is an accepting state.
Therefore, $u\$v \in\lang{\autdollarfdfa}$.
\item $\subseteq$. First, we have that $\lang{\autdollarfdfa}\subseteq \finwords\$\finwords$ by Def.~\ref{def:fdfa-to-dfa}.
For any $u, v\in\finwords$, if $u\$v \in \lang{\autdollarfdfa}$, then $\autdollarfdfa(u\$v)$ is an accepting state.
It follows that $v \in \lang{N_{\stateWord{u}}}$ with $\stateWord{u} = \machine(u)$.
Since $N_{\stateWord{u}} = \machine^{\stateWord{u}}_{\stateWord{u}}\times \proDFA^{\stateWord{u}}$, we have that
$v\in \lang{\machine^{\stateWord{u}}_{\stateWord{u}}}$ and $v\in\lang{\proDFA^{\stateWord{u}}}$, which implies
that $\eqWith{\machine}{uv}{u}$ and $v\in\lang{\proDFA^{\stateWord{u}}}$. Thus, we conclude that $(u, v)$ is accepted by
$\fdfas$. $\qed$
\end{proof}

%By replacing all the accepting states with non-accepting states in $\proDFA^{u}$ for every $u \in\states$,
%we get the deterministic automaton $\autdollarfdfaneq$ for $\autdollarfdfa$.
\begin{proposition}\label{thm:language-neg-dollar-dfa-fdfa}
Given an $\FDFA$ $\fdfas$ and $\autdollarfdfaneq$ the corresponding deterministic automaton, then
$\lang{\autdollarfdfaneq} = \{ u \$ v \mid u\in\finwords, v\in\finwords
, \eqWith{\machine}{uv}{u}, \stateWord{u} = \machine(u), v\notin\lang{\proDFA^{\stateWord{u}}}\} $.
\end{proposition}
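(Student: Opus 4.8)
The plan is to mirror the proof of Prop.~\ref{thm:language-dollar-dfa-fdfa} almost verbatim, replacing the automaton $N_{\stateWord{u}} = \machine^{\stateWord{u}}_{\stateWord{u}} \times \proDFA^{\stateWord{u}}$ by $\overline{N_{\stateWord{u}}} = \machine^{\stateWord{u}}_{\stateWord{u}} \times \overline{\proDFA^{\stateWord{u}}}$ and the acceptance condition $v \in \lang{\proDFA^{\stateWord{u}}}$ by its negation $v \notin \lang{\proDFA^{\stateWord{u}}}$. Since $\autdollarfdfaneq$ is assembled from the $\overline{N_u}$ components by exactly the same recipe (Def.~\ref{def:fdfa-to-dfa}) that produced $\autdollarfdfa$ from the $N_u$ components, everything is structurally identical; the only genuinely new ingredient is to justify that $\overline{\proDFA^u}$ recognizes exactly $\finwords \setminus \lang{\proDFA^u}$.

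First I would record two structural facts about the deterministic automaton $\autdollarfdfaneq$. Reading a word $u \in \finwords$ from $\initState$ stays inside the leading copy and ends at $\stateWord{u} = \machine(u) = \autdollarfdfaneq(u)$, since the $\$$-transitions are the only edges leaving the leading component. Reading the separator $\$$ from $\stateWord{u}$ moves to the initial state $s_{\stateWord{u}}$ of $\overline{N_{\stateWord{u}}}$, and then reading $v$ deterministically traces the run of $\overline{N_{\stateWord{u}}}$ on $v$; hence $\autdollarfdfaneq(u\$v)$ is the state $\overline{N_{\stateWord{u}}}(v)$, and $\lang{\autdollarfdfaneq} \subseteq \finwords \$ \finwords$ because accepting states live only in the $\overline{N}$ copies.

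Second I would establish the complementation property. Because $\proDFA^{\stateWord{u}}$ is a DFA whose transition function $\trans^{\stateWord{u}}$ is \emph{complete}, the automaton $\overline{\proDFA^{\stateWord{u}}}$ obtained by taking $\states^{\stateWord{u}} \setminus \acc^{\stateWord{u}}$ as its accepting set has the same total deterministic runs but the swapped verdict, so $\lang{\overline{\proDFA^{\stateWord{u}}}} = \finwords \setminus \lang{\proDFA^{\stateWord{u}}}$. Taking the product with $\machine^{\stateWord{u}}_{\stateWord{u}}$ then yields $\lang{\overline{N_{\stateWord{u}}}} = \{v \in \finwords \mid \eqWith{\machine}{uv}{u},\ v \notin \lang{\proDFA^{\stateWord{u}}}\}$, exactly as claimed in the description of $\overline{N_u}$ preceding Def.~\ref{def:fdfa-to-dfa}.

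These two facts combine immediately into both inclusions. For $\supseteq$, if $\eqWith{\machine}{uv}{u}$ and $v \notin \lang{\proDFA^{\stateWord{u}}}$ then $v \in \lang{\overline{N_{\stateWord{u}}}}$, so $\autdollarfdfaneq(u\$v) = \overline{N_{\stateWord{u}}}(v)$ is accepting and $u\$v \in \lang{\autdollarfdfaneq}$. For $\subseteq$, any accepted word has the shape $u\$v$ by the first fact, and acceptance forces $\overline{N_{\stateWord{u}}}(v)$ to be accepting, i.e. $v \in \lang{\overline{N_{\stateWord{u}}}}$, which unfolds to $\eqWith{\machine}{uv}{u}$ and $v \notin \lang{\proDFA^{\stateWord{u}}}$. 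I expect no real obstacle: the sole point demanding care is the appeal to completeness of $\trans^{\stateWord{u}}$, without which swapping accepting states would fail to complement the progress language and the whole argument would collapse.
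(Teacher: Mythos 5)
Your proof is correct and takes essentially the same route as the paper's: both argue that $\autdollarfdfaneq(u\$v) = \overline{N}_{\stateWord{u}}(v)$ by determinism of the construction, then unfold the product $\overline{N}_{\stateWord{u}} = \machine^{\stateWord{u}}_{\stateWord{u}} \times \overline{\proDFA^{\stateWord{u}}}$ to obtain the two inclusions. Your explicit justification that completeness of $\trans^{\stateWord{u}}$ is what makes swapping accepting states yield $\lang{\overline{\proDFA^{\stateWord{u}}}} = \finwords \setminus \lang{\proDFA^{\stateWord{u}}}$ is a detail the paper states only in the construction rather than in the proof, but the argument is the same.
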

\begin{proof}
By Def.~\ref{def:fdfa-to-dfa}, it is easy to conclude that for any $u\in\finwords$,
then we have $\stateWord{u} = \machine(u) = \autdollarfdfaneq(u)$. For any $u, v\in\finwords$,
we have that $\overline{N}_{\stateWord{u}}(v) = \autdollarfdfaneq(u\$v)$ where $\stateWord{u} = \machine(u)$ since $\autdollarfdfaneq$ is a DFA.

\item $\supseteq$. Assume that we have $\eqWith{\machine}{uv}{u}$ and $v\notin\lang{\proDFA^{\stateWord{u}}}$ where $\stateWord{u}= \machine(u)$.
By $\eqWith{\machine}{uv}{u}$, we have that $v\in\lang{\machine^{\stateWord{u}}_{\stateWord{u}}}$.
Further, from $v\notin\lang{\proDFA^{\stateWord{u}}}$, we have that $v \in \lang{\overline{\proDFA^{\stateWord{u}}}}$.
It follows that $N_{\stateWord{u}}(v)$ is an accepting state.
Since $\overline{N}_{\stateWord{u}}(v) = \autdollarfdfaneq(u\$v)$, then $\autdollarfdfaneq(u\$v)$ is an accepting state.
Therefore, $u\$v \in \lang{\autdollarfdfaneq}$.

\item $\subseteq$. First, we have that $\lang{\autdollarfdfaneq}\subseteq \finwords\$\finwords$ by Def.~\ref{def:fdfa-to-dfa}.
For any $u, v\in\finwords$, if $u\$v \in \lang{\autdollarfdfaneq}$, then $\autdollarfdfaneq(u\$v)$ is an accepting state.
It follows that $v \in \lang{\overline{N}_{\stateWord{u}}}$ with $\stateWord{u} = \machine(u)$.
Since $\overline{N}_{\stateWord{u}} = \machine^{\stateWord{u}}_{\stateWord{u}}\times \overline{\proDFA^{\stateWord{u}}}$, we have that
$v\in \lang{\machine^{\stateWord{u}}_{\stateWord{u}}}$ and $v\in\lang{\overline{\proDFA^{\stateWord{u}}}}$, which implies
that $\eqWith{\machine}{uv}{u}$ and $v\notin\lang{\proDFA^{\stateWord{u}}}$. $\qed$
\end{proof}
\begin{proposition}
The numbers of states in $\autdollarfdfa$ and $\autdollarfdfaneq$ are both in $\mathcal{O}(n+n^2k)$.
\end{proposition}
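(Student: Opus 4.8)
The plan is to prove the bound by a direct counting argument on the components assembled in Def.~\ref{def:fdfa-to-dfa}, exploiting the fact that both $\autdollarfdfa$ and $\autdollarfdfaneq$ share the same structural skeleton. First I would recall that the state set of $\autdollarfdfa$ (and likewise $\autdollarfdfaneq$) is $\states \cup \states_{Acc}$, where $\states_{Acc} = \bigcup_{u\in\states} \states_u$ and $\states_u$ is the state set of the product automaton $N_u = \machine^u_u \times \proDFA^u$ (respectively $\overline{N_u} = \machine^u_u \times \overline{\proDFA^u}$). The transition relations $\trans_{Acc}$ and $\trans_\$$ add no new states, so it suffices to bound $\size{\states}$ and $\size{\states_{Acc}}$ separately and sum them.

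The first summand is immediate: $\size{\states} = n$ by the definition of $n$ as the number of states of the leading automaton $\machine$. For the second summand, the key observation is that $\machine^u_u$ is obtained from $\machine$ merely by redesignating its initial and accepting states, so it retains the full state set $\states$ and therefore has exactly $n$ states. The progress automaton $\proDFA^u$ has at most $k$ states by the definition of $k$ as the size of the largest progress automaton; and $\overline{\proDFA^u}$ has the same state set as $\proDFA^u$ (only the accepting set is complemented), so it too has at most $k$ states. Hence each product $N_u$ (resp. $\overline{N_u}$) has at most $n\cdot k$ states, i.e. $\size{\states_u}\le nk$ for every $u\in\states$. Since there are $n$ states $u\in\states$, we obtain $\size{\states_{Acc}} = \size{\bigcup_{u\in\states}\states_u}\le n\cdot nk = n^2 k$.

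Combining the two bounds yields a total of at most $n + n^2 k$ states, which is in $\mathcal{O}(n + n^2 k)$, and the identical reasoning applies verbatim to $\autdollarfdfaneq$ since its only difference from $\autdollarfdfa$ lies in the accepting sets of the fibre automata, not in their state counts. There is no genuine obstacle here: the statement is a routine size estimate, and the only point requiring a moment's care is the remark that $\machine^u_u$ carries all $n$ states of $\machine$ rather than a restricted reachable fragment, so that the per-fibre product is bounded by $nk$ and not merely by $k$; the summation over the $n$ fibres is what produces the quadratic factor $n^2$.
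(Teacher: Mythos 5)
Your proof is correct and takes essentially the same route as the paper, which in fact only asserts the bound in a single sentence following the proposition; the intended argument is exactly your counting: $n$ states for the leading component $\states$, plus at most $n\cdot k$ states per product $N_u = \machine^u_u \times \proDFA^u$ (resp.\ $\overline{N_u}$, which has the same state set since only acceptance is complemented), summed over the $n$ choices of $u$ to give $n^2k$. Your remark that $\machine^u_u$ retains all $n$ states of $\machine$ is the one point the paper leaves implicit, and you handle it correctly.
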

Suppose $n$ is the number of states in $\machine$ and $k$ is the number of states in the largest progress automaton,
then the number of states in $\autdollarfdfa$ ($\autdollarfdfaneq$) is in $\mathcal{O}(n+n^2k)$.

\subsection{Correctness of Counterexample Analysis for $\FDFA$ Teacher}\label{sec:sub:ce-translation-correctness}
Given the counterexample $uv^\omega$ for the $\FDFA$ teacher, we prove the decomposition $(u', v')$ is
a counterexample for $\FDFA$ learner defined in Def.~\ref{def:ce-for-fdfa-learner} by following cases:

\begin{itemize}
\item $uv^\omega\in\upword{\omegaRegLang} \land uv^\omega\notin \upword{\fdfas}$.
By Def.~\ref{def:ce-for-fdfa-learner}, we know that $uv^\omega$ is a positive counterexample
and we return a counterexample $(u', v')$ such that $u'\$v' \in \lang{\autdollar_{u\$v}}\cap \lang{\autdollarfdfaneq}$.
We first need to prove that $\lang{\autdollar_{u\$v}}\cap \lang{\autdollarfdfaneq}$ is not empty.
Since $uv^\omega\notin \upword{\fdfas}$, then any decomposition of $uv^\omega$, say $(u, v)$,
is not accepted by $\fdfas$. Since $\machine$ is a DFA, we can always find a decomposition
$x = uv^i$ and $y = v^j$ from some $i\geq 0, j\geq 1$ such that $\eqWith{\machine}{xy}{x}$ according to \cite{Angluin2014}.
Therefore $(x, y)$ is also a decomposition of $uv^\omega$ and it is not accepted by $\fdfas$, that is,
$y \notin\lang{\proDFA^{\stateWord{x}}}$ where $\stateWord{x} = \machine(x)$. It follows that
$x\$y \in \lang{\autdollarfdfaneq}$ according to Thm.~\ref{thm:language-neg-dollar-dfa-fdfa}. Thus, we conclude that
$\lang{\autdollar_{u\$v}}\cap \lang{\autdollarfdfaneq}$ is not empty. We let $u'= x$ and $v'=y$, and it
is easy to validate that $(u', v')$ is a positive counterexample for $\FDFA$ learner. This case is applied
for case U1 and O1.
\item $uv^\omega\in\upword{\omegaRegLang} \land uv^\omega\in \upword{\fdfas}$.
In this case, $uv^\omega$ is a spurious positive counterexample,
which happens when we use the under-approximation method to construct the B\"uchi automaton.
Here we also return a counterexample $(u', v')$ such that $u'\$v' \in \lang{\autdollar_{u\$v}}\cap \lang{\autdollarfdfaneq}$.
Since $uv^\omega\in \upword{\fdfas}$, then there exists some decomposition of $uv^\omega$, say $(u, v)$,
is accepted by $\fdfas$. We observe that $uv^\omega\notin\upword{\lang{\buechiL}}$, which follows that
there exists some $k\geq 1$ such that $(u, v^k)$ is not accepted by $\fdfas$ by Lem.~\ref{lem:word-fdfa-buechiL-preserve}.
By $\eqWith{\machine}{uv}{u}$, we also have that $\eqWith{\machine}{uv^k}{u}$ since $\machine$ is a DFA.
It follows that $u\$v^k \in \lang{\autdollarfdfaneq}$. Therefore, we conclude that
$\lang{\autdollar_{u\$v}}\cap \lang{\autdollarfdfaneq}$ is not empty and for every finite word $u'\$v' \in \lang{\autdollar_{u\$v}}\cap \lang{\autdollarfdfaneq}$, we have $(u', v')$ is a positive counterexample for $\FDFA$ learner.
This case is applied for U3.
\item $uv^\omega\notin\upword{\omegaRegLang} \land uv^\omega\in \upword{\fdfas}$.
In this case, $uv^\omega$ is a negative counterexample, one has to return a counterexample
$(u', v')$ such that $u'\$v' \in \lang{\autdollar_{u\$v}}\cap \lang{\autdollarfdfa}$.
We first need to prove that $\lang{\autdollar_{u\$v}}\cap \lang{\autdollarfdfa}$ is not empty.
Since $uv^\omega\in \upword{\fdfas}$, then there exists some decomposition $(u', v')$ of $uv^\omega$ is accepted by
$\fdfas$. It follows that $u'\$v' \in\lang{\autdollarfdfa}$ by Thm.~\ref{thm:language-dollar-dfa-fdfa}.
Thus we conclude that $\lang{\autdollar_{u\$v}}\cap \lang{\autdollarfdfa}$ is not empty. Moreover,
it is easy to validate that $(u', v')$ is a negative counterexample for $\FDFA$ learner. This case is applied for U2 and O2.

\item $uv^\omega\notin\upword{\omegaRegLang} \land uv^\omega\notin \upword{\fdfas}$.
In this case, $uv^\omega$ is a spurious negative counterexample,
which happens when we use the over-approximation method to construct the B\"uchi automaton.
It is possible that we cannot find a valid decomposition $(u', v')$ to refine $\fdfas$.
According to the proof of Lem.~\ref{lem:lang-buechiU-period}, one can construct a decomposition $(u, v)$ of $uv^\omega$ and $n\geq 1$ such that
$v = v_1\cdot v_2\cdots v_n$ and for all $i \in [1\cdots n]$, $v_i\in\lang{\proDFA^{\machine(u)}}$ and $\eqWith{\machine}{uv_i}{u}$.
If we find some $i\geq 1$ such that $uv_i^\omega \notin\upword{\omegaRegLang}$, then we let $u' = u$ and $v'=v_i$.
Clearly, $(u', v')$ is a negative counterexample for $\FDFA$ learner. This case is applied for O3.
\end{itemize}

\section{Correctness and Termination of Tree-based Algorithm}\label{app:correctness-and-termination-tree-algo}

In the following, we need the notion called normalized
factorization introduced in \cite{Angluin2014}.
Recall that given a decomposition $(u, v)$ of $\omega$-word $uv^\omega$ and the leading automaton $\machine$, we can get
its normalized factorization $(x, y)$ with respect to $\machine$ such that $x = uv^i, y = v^j$ and $\machine(xy) = \machine(x)$
for some smallest $i\geq 0, j\geq 1$ since $\machine$ is finite.
\subsection{Correctness of Tree-based Algorithm for $\FDFA$}
Lem.~\ref{lem:eq-class-division} establishes the correctness of our tree-based algorithm for periodic $\FDFA$.
\begin{lemma}\label{lem:eq-class-division}
For the leading tree in all three $\FDFA$s and the progress trees in the periodic $\FDFA$,
the tree-based algorithm will never classify two finite words which are in the same equivalence class into two
different terminal nodes.
\end{lemma}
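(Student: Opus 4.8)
The plan is to reduce the statement to a single \emph{soundness} property of the experiments stored in the internal nodes: for every experiment $e$ occurring in the tree, the value $\func{TE}(\cdot,e)$ is constant on each equivalence class. Once this is established, the lemma follows from how the classification procedure routes a word through the tree. Recall that to classify a word $w$ one starts at the root $r$ and, at each internal node $n$ with experiment $e=L_n(n)$, proceeds to the child $L_e(n,\func{TE}(w,e))$, stopping at a terminal node. Thus I would argue by induction on the length of the common prefix of the routing paths of two words $w_1,w_2$: both paths begin at $r$; whenever they have reached a common internal node $n$ with experiment $e$, the soundness property gives $\func{TE}(w_1,e)=\func{TE}(w_2,e)$, so both paths continue to the same child $L_e(n,\func{TE}(w_1,e))$. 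Hence the two paths coincide entirely and reach the same terminal node, which is exactly the claim.

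It then remains to prove the soundness property for the two tree types named in the statement, and here the argument is a direct match between the definition of $\func{TE}$ (Sec.~\ref{sec:sub:fdfa-learner-tree-data}) and the defining condition of the relevant right congruence (Def.~\ref{def:cano-fdfas}). For the leading tree, an experiment is a pair $(x,y)$ and $\func{TE}(w,(x,y))=\true$ iff $wxy^\omega\in\omegaRegLang$; if $w_1\canoEq w_2$ then by definition of $\canoEq$ we have $w_1\alpha\in\omegaRegLang\Longleftrightarrow w_2\alpha\in\omegaRegLang$ for all $\alpha\in\infwords$, and instantiating $\alpha=xy^\omega$ yields $\func{TE}(w_1,(x,y))=\func{TE}(w_2,(x,y))$. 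For the periodic progress tree of a state $u$, an experiment is a finite word $e$ and $\func{TE}(w,e)=\true$ iff $u(we)^\omega\in\omegaRegLang$; if $\deqWithu{P}{w_1}{w_2}$ then by definition of $\periodicEq$ we have $u(w_1v)^\omega\in\omegaRegLang\Longleftrightarrow u(w_2v)^\omega\in\omegaRegLang$ for all $v\in\finwords$, and instantiating $v=e$ gives the desired equality. Note that the same leading tree (and hence the same argument) is shared by all three $\FDFA$ variants, so only these two cases arise.

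The only point that needs care --- and what I regard as the main obstacle --- is to confirm that every internal node really is labelled by a \emph{well-formed} experiment of the expected shape (a pair $(x,y)$ for the leading tree, a finite word for the periodic progress tree), so that the soundness computation above is applicable at each node along a path. This is guaranteed by the refinement procedures, which insert experiments of precisely these forms (the leading-tree refinement adds $(u[j+1\cdots n],v)$ and the periodic progress-tree refinement adds $v[j+1\cdots n]$); I would therefore phrase the result as an invariant maintained across refinements, established together with the node-labelling invariant stated in Sec.~\ref{sec:sub:fdfa-learner-tree-data}. I emphasise that the lemma asserts only this one direction (equivalent words are never separated); the converse may fail for a partially refined tree, where two inequivalent words can still share a terminal node, and no claim is made about it here.
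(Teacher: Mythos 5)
Your proof is correct and takes essentially the same approach as the paper: the paper argues by contradiction at the least common ancestor of the two terminal nodes, instantiating the defining condition of $\canoEq$ with the infinite word $yv^\omega$ given by that node's experiment (respectively, instantiating $\periodicEq$ with $v=e$), which is precisely your soundness property in contrapositive form. Your explicit routing-path induction and the remark on well-formedness of experiments merely make explicit what the paper's sifting procedure and node-labelling invariant leave implicit.
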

\begin{proof}
We prove by contradiction. Suppose there are two finite word $x_1, x_2\in\finwords$ which are in the same equivalence
class but they are currently classified into different terminal nodes in classification tree $\learnT$.
\begin{itemize}
\item $\learnT$ is the leading tree. We assume that $x_1 \canoEq x_2$.
Suppose $x_1$ and $x_2$ have been assigned to terminal nodes
$t_1$ and $t_2$ respectively with $t_1\neq t_2$. Therefore, we can find the least common ancestor $n$ from $\learnT$,
where $L_n(n)=(y, v)$ is supposed to be an experiment to differentiate $x_1$ and $x_2$.
Without loss of generality, we assume that $t_1$ and $t_2$ are in the left and right subtrees of $n$ respectively.
Therefore, we have $\func{TE}(x_1, (y, v)) = \false$ and $\func{TE}(x_2, (y, v))=\true$. It follows
that $x_1(yv)^\omega\notin\upword{\omegaRegLang}$ and $x_2(yv)^\omega\in\upword{\omegaRegLang}$, which implies that
$x_1 \not\canoEq x_2$. Contradiction.

\item $\learnT=\learnT_{u}$ is a progress tree in periodic $\FDFA$.
We assume that $x_1 \proEq_P x_2$.
Similarly, suppose $x_1$ and $x_2$ have been assigned to terminal nodes
$t_1$ and $t_2$ of $\learnT_{u}$ respectively with $t_1\neq t_2$.
Therefore, we can find the least common ancestor $n$ from $\learnT_{u}$,
where $L_n(n)= v$ is supposed to be an experiment to differentiate $x_1$ and $x_2$.
Without loss of generality, we assume that $t_1$ and $t_2$ are in the left and right subtrees of $n$ respectively.
Therefore, we have $\func{TE}(x_1, v) = \false$ and $\func{TE}(x_2, v) = \true$. It follows
that $u(x_1v)^\omega\notin\upword{\omegaRegLang}$ and $u(x_2v)^\omega\in\upword{\omegaRegLang}$, which implies that
$x_1 \not\proEq_P x_2$. Contradiction.
\end{itemize}
$\qed$
\end{proof}

Lem.~\ref{lem:eq-class-division} cannot apply to the progress trees in syntactic and recurrent $\FDFA$s
as the progress trees heavily rely on the current leading automaton.
In the following, we prove the correctness of syntactic and recurrent $\FDFA$.
We say the leading automaton $\machine$ is consistent with $\canoEq$ iff for any $x_1, x_2\in\finwords$,
we have $\machine(x_1) = \machine(x_2) \Longleftrightarrow x_1\canoEq x_2$.

\begin{lemma}\label{lem:eq-class-division-s-r}
For the progress trees in the syntactic and recurrent $\FDFA$,
the tree-based algorithm will never classify two finite words which are in the same equivalence class into two
different terminal nodes if the leading automaton $\machine$ is consistent with $\canoEq$.

If the tree-based algorithm classifies two finite words which are in the same equivalence class into two
different terminal nodes, then $\machine$ is not consistent with $\canoEq$ currently.
\end{lemma}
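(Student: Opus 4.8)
The plan is to note first that the second assertion of the lemma is precisely the contrapositive of the first, so it suffices to prove the first: assuming $\machine$ is consistent with $\canoEq$, the algorithm never separates two words of the same progress congruence into distinct terminal nodes. I would argue by contradiction exactly in the style of the proof of Lem.~\ref{lem:eq-class-division}. Suppose $x_1 \syntacticEq x_2$ (resp. $x_1 \recurrentEq x_2$) yet $x_1, x_2$ are assigned to distinct terminal nodes $t_1 \neq t_2$ of the progress tree $\learnT_u$. Let $n$ be their least common ancestor and $e = L_n(n)$ the experiment labelling it; then $\func{TE}(x_1, e) \neq \func{TE}(x_2, e)$. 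The crucial device throughout is that consistency of $\machine$ with $\canoEq$ lets me convert each \emph{state-equality} condition appearing in the definition of $\func{TE}$ into a \emph{congruence} condition of $\canoEq$: since $\machine(u) = u$ and $\machine(w_1) = \machine(w_2) \iff w_1 \canoEq w_2$, we get $u = \machine(uxe) \iff \eqWith{L}{uxe}{u}$ and $\machine(ux_1) = \machine(ux_2) \iff \eqWith{L}{ux_1}{ux_2}$. These are exactly the predicates occurring in the definitions of $\recurrentEq$ and $\syntacticEq$ in Def.~\ref{def:cano-fdfas}.

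For the recurrent case the tree is binary, so WLOG $\func{TE}(x_1, e) = \false$ and $\func{TE}(x_2, e) = \true$. Unfolding the definition and applying the translation above, $\func{TE}(x_2, e) = \true$ gives $u(x_2 e)^\omega \in \omegaRegLang$ together with $\eqWith{L}{ux_2 e}{u}$. Instantiating the defining condition of $x_1 \recurrentEq x_2$ with $v = e$ then forces $\eqWith{L}{ux_1 e}{u}$ and $u(x_1 e)^\omega \in \omegaRegLang$, which re-translates to $\func{TE}(x_1, e) = \true$, a contradiction.

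For the syntactic case the tree is $K$-ary with labels in $\states \times \{\A, \B, \C\}$, so I would split on how $\func{TE}(x_1, e) = (\machine(ux_1), t_1)$ and $\func{TE}(x_2, e) = (\machine(ux_2), t_2)$ differ. If the first components differ, then $\machine(ux_1) \neq \machine(ux_2)$, i.e. $ux_1$ and $ux_2$ are not $\canoEq$-equivalent, contradicting the first clause of $x_1 \syntacticEq x_2$. Otherwise the first components agree and $t_1 \neq t_2$. If exactly one of $t_1, t_2$ equals $\C$, then $\eqWith{L}{ux_1 e}{u}$ holds for one word but fails for the other; yet the first clause $\eqWith{L}{ux_1}{ux_2}$ of $\syntacticEq$ together with the right-congruence of $\canoEq$ yields $\eqWith{L}{ux_1 e}{ux_2 e}$, so the two words cannot disagree on being $\canoEq$-equivalent to $u$, a contradiction. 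The remaining subcase is $\{t_1, t_2\} = \{\A, \B\}$: here $\eqWith{L}{ux_1 e}{u}$ holds and the two words disagree on whether $u(x_i e)^\omega \in \omegaRegLang$, which directly contradicts the second clause of $x_1 \syntacticEq x_2$ instantiated with $v = e$.

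I expect the main obstacle to be the bookkeeping of the syntactic case, and specifically the $\C$-versus-non-$\C$ subcase, where the contradiction is obtained not from the defining implication of $\syntacticEq$ itself but from combining its first clause with the right-congruence of $\canoEq$; getting the direction of that implication right, and confirming that consistency is genuinely the property needed (and is sufficient) to pass back and forth between $\machine$-states and $\canoEq$-classes, is where the argument must be handled with care.
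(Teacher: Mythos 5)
Your proposal is correct and takes essentially the same route as the paper's proof: a contradiction at the least common ancestor experiment, using consistency of $\machine$ with $\canoEq$ to translate the state-equality tests in $\func{TE}$ into $\canoEq$-statements, with the same case split on the syntactic tags (first components differ; exactly one tag is $\C$; tags in $\{\A,\B\}$) and the identical instantiation $v=e$ for the recurrent case. The only differences are cosmetic: you obtain the second assertion as the contrapositive of the first, whereas the paper re-derives it by a separate case analysis, and in the $\C$-subcase you argue via right-congruence of $\canoEq$ plus consistency where the paper directly invokes determinism of $\machine$ to get $\machine(ux_1e)=\machine(ux_2e)$ --- equivalent arguments.
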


\begin{proof}
Intuitively, the progress trees $\learnT_{u}$ in syntactic and recurrent $\FDFA$s are constructed with respect to the current leading automaton.
We prove the lemma in following cases.
\begin{itemize}
\item $\learnT_{u}$ is a progress tree in syntactic $\FDFA$.
We assume that $x_1\proEq_S x_2$.
Suppose $x_1$ and $x_2$ have been assigned to terminal node
$t_1$ and $t_2$ of $\learnT_{u}$ respectively.
Therefore, we can find the least common ancestor $n$ from $\learnT_{u}$,
where $L_n(n)= v$ is supposed to be an experiment to differentiate $x_1$ and $x_2$.
Thus,  by the definition of $\func{TE}$ in syntactic $\FDFA$, we can assume that $d_1: =\func{TE}(x_1, v) = (\machine(ux_1), m_1)$ and $d_2:=\func{TE}(x_2, v) = (\machine(ux_2), m_2)$
where $m_1, m_2 \in \{A, B, C\}$.
Since $t_1$ and $t_2$ are in different subtrees of $n$,
we thus have $d_1 \neq d_2$,
that is, $\machine(ux_1) \neq \machine(ux_2)$ or $m_1\neq m_2$.

1) First we assume that $\machine$ is consistent with $\canoEq$.
\begin{itemize}
\item $\machine(ux_1) \neq \machine(ux_2)$.
Since $x_1\proEq_S x_2$, we have $ux_1\canoEq ux_2$, which implies that $\machine(ux_1) = \machine(ux_2)$. Contradiction.
\item $m_1\neq m_2$.
Since $x_1\proEq_S x_2$, we have $ux_1\canoEq ux_2$, which follows that $\machine(ux_1) = \machine(ux_2)$ since
$\machine$ is consistent with $\canoEq$. Moreover, we have that $\machine(ux_1v) = \machine(ux_2v)$ since
$\machine$ is deterministic. We discuss the values of $m_1$ and $m_2$ in the following.
\begin{itemize}
\item $ u = \machine(ux_1 v)$. It follows that $\eqWith{\omegaRegLang}{ux_1v}{u}$
since $\machine$ is consistent with $\canoEq$,
which implies that $u(x_1v)^\omega \in\upword{\omegaRegLang}\Longleftrightarrow u(x_2v)^\omega \in\upword{\omegaRegLang}$.
Moreover, we have $u = \machine(ux_2 v)$ since $\eqWith{\omegaRegLang}{ux_1}{ux_2}$.
Therefore, we conclude that
$m_1, m_2 \in \{A, B\}$ by the definition of $\func{TE}$. Without loss of generality, we let $m_1 = A$
and $m_2 = B$, which implies that $u(x_1v)^\omega \in\upword{\omegaRegLang}$ while $u(x_2v)^\omega \notin\upword{\omegaRegLang}$.
Contradiction.
\item $ u\neq \machine(ux_1 v)$. Thus, we have $m_1 = m_2 = C$, which follows that $d_1 = d_n$
since $\machine(ux_1) = \machine(ux_2)$. Contradiction.
\end{itemize}

%In addition, we have $c_1 = c_2 = \true$, that is,
%$\machine(ux_1v) = \machine(u)$ and $\machine(ux_2v) = \machine(u)$, which implies that $ux_1v\canoEq u$
%since $\machine$ is consistent with $\canoEq$.
%It follows that $u(x_1v)^\omega \in \omegaRegLang\upword{\omegaRegLang} \Longleftrightarrow u(x_2v)^\omega \in \upword{\omegaRegLang}$,
%which contradicts with $m_1 \neq m_2$.

\end{itemize}
Therefore, $t_1$ and $t_2$
cannot be different terminal nodes.

2) In this case, $\machine$ is not necessarily consistent with $\canoEq$.
\begin{itemize}
\item $\machine(ux_1) \neq \machine(ux_2)$. Let $s_1 = \machine(ux_1)$ and $s_2 = \machine(ux_2)$.
We have that $s_1$ and $s_2$ are classified into different terminal nodes in the leading tree $\learnT$
since $s_1\neq s_2$ and they are two labels of the terminal nodes. It follows
that $s_1 \not\canoEq s_2$ by Lem.~\ref{lem:eq-class-division}.
By $x_1\proEq_S x_2$, we have $ux_1\canoEq ux_2$, which implies that
$s_1\not\canoEq ux_1$ or $s_2\not\canoEq ux_2$, otherwise we get $s_1\canoEq s_2$.
Without loss of generality, suppose $s_1\not\canoEq ux_1$, then there exists some experiment $(y, v)$
to differentiate them. However, $ux_1$ is currently assigned into the equivalence class of $s_1$ since
$s_1 = \machine(ux_1)$. It follows that $\machine$ is not consistent with $\canoEq$.
\iffalse
If $s_1\cdot yv^\omega \in \upword{\omegaRegLang} \land ux_1\cdot yv^\omega \notin \upword{\omegaRegLang}$, then
we can always get a normalization factorization $(x, e)$ of $(ux_1\cdot y, v)$ such that
$x = ux_1yv^i, e = v^j$ for some $i\geq 0, j\geq 1$ and $\machine(xe) = \machine(x)$ to refine the leading automaton
since $\func{TE}(s_1, (yv^i, v^j)) = \true$ while $\func{TE}(\emptyword, (x, e)) = \func{TE}(\emptyword, (ux_1\cdot yv^i, v^j)) = \false$.
The case for $s_1\cdot yv^\omega \notin \upword{\omegaRegLang} \land ux_1\cdot yv^\omega \in \upword{\omegaRegLang}$ is similar.
\fi

\item $m_1\neq m_2$.

1) We assume that $ux_1 v\canoEq u$, then we have $ux_2 v\canoEq u$ since $ux_1\canoEq ux_2$ by $x_1\proEq_S x_2$,
which implies that $u(x_1v)^\omega \in\upword{\omegaRegLang} \Longleftrightarrow u(x_2v)^\omega \in\upword{\omegaRegLang}$.
If $\machine$ is consistent with $\canoEq$, we conclude that $m_1 = m_2 = A$ or $m_1 = m_2 = B$. Contradiction.
Therefore, $\machine$ is not consistent with $\canoEq$.

2) We assume that $ux_1 v\not\canoEq u$, then we can find some experiment $(y, z)$ to differentiate them.
It follows that $ux_2 v\not\canoEq u$ since $x_1\proEq_S x_2$ and $ux_1 \canoEq ux_2$.
Assume that $\machine$ is consistent with $\canoEq$, then we have that $ u \neq \machine(ux_1v)$ and
$ u \neq \machine(ux_2v)$, which implies that $m_1=m_2=C$. Contradiction.
Thus, $\machine$ is not consistent with $\canoEq$.
\iffalse
If $ux_1 v\cdot yz^\omega \in \upword{\omegaRegLang} \land u\cdot yz^\omega \notin \upword{\omegaRegLang}$, then
we get a normalization factorization $(x, e)$ of $(ux_1v\cdot y, z)$ such that
$x = ux_1vyz^i, e = z^j$ for some $i\geq 0, j\geq 1$ and $\machine(xe) = \machine(x)$ refine the leading automaton.
It is correct since $\machine(ux_1v) = \machine(u)$ and we have that
$\func{TE}(u, (x_1vyz^i, z^j)) = \true$ while
$\func{TE}(u, (yz^i, z^j)) = \false$.
The other case is similar.
\fi
\end{itemize}

\item $\learnT_{u}$ is a progress tree in recurrent $\FDFA$. The analysis is similar
as the syntactic $\FDFA$. We assume that $x_1\proEq_R x_2$.
Suppose $x_1$ and $x_2$ have been assigned to terminal node
$t_1$ and $t_2$ of $\learnT_{u}$ respectively.
Therefore, we can find the least common ancestor $n$ from $\learnT_{u}$,
where $L_n(n)= v$ is supposed to be an experiment to differentiate $x_1$ and $x_2$.
Thus,  we can assume that $d_1: =\func{TE}(x_1, v)$ and $d_2:=\func{TE}(x_2, v)$ where $d_1, d_2\in\{\false, \true\}$.
Since $t_1$ and $t_2$ are in different subtrees of $n$,
we thus have $d_1 \neq d_2$.

1) We assume that $\machine$ is consistent with $\canoEq$.
Without loss of generality, suppose $d_1 = \false$ and $d_2 = \true$.
Since $d_2 = \true$, we have that $u = \machine(ux_2 v)$ and $u(x_2v)^\omega\in\upword{\omegaRegLang}$.
It follows that $ux_2v\canoEq u$ since $\machine$ is consistent with $\canoEq$. Moreover, we conclude
that $u = \machine(ux_1 v)$ and $u(x_1v)^\omega\in\upword{\omegaRegLang}$ by the fact that $x_1\proEq_R x_2$.
By the definition of $\func{TE}$, we have $d_1 = \true$. Contradiction. Therefore, $t_1$ and $t_2$
cannot be different terminal nodes.

2) $\machine$ is not necessarily consistent with $\canoEq$.
Without loss of generality, suppose $d_1 = \false$ and $d_2 = \true$.
Since $d_2 = \true$, we have that $u = \machine(ux_2 v)$ and $u(x_2v)^\omega\in\upword{\omegaRegLang}$.
Assume that $\machine$ is consistent with $\canoEq$,
it follows that $ux_2v\canoEq u$. Moreover, we conclude
that $u = \machine(ux_1 v)$ and $u(x_1v)^\omega\in\upword{\omegaRegLang}$ by the fact that $x_1\proEq_R x_2$.
By the definition of $\func{TE}$, we have $d_1 = \true$. Contradiction. Therefore, $\machine$ is not consistent with $\canoEq$.
\end{itemize}
$\qed$
\end{proof}

Once two finite words which are in the same equivalence class have been classified into two terminal nodes in the progress tree,
we can always prove that the leading automaton is not consistent with $\canoEq$. Therefore, the $\FDFA$ teacher
is able to return some counterexample to refine the leading automaton. If the leading automaton changes,
the $\FDFA$ learner should learn all progress automata from scratch with respect to current leading automaton.
At a certain point, the leading automaton $\machine$ will be consistent with $\canoEq$ since it will be
added a new state after every refinement. Thus, we conclude that the equivalence classes in the progress trees
will finally be classified correctly.

\begin{proposition}
Given the $\FDFA$ teacher that is able to answer membership and equivalence queries for $\FDFA$, the tree-based
$\FDFA$ learning algorithm can correctly classify all finite words.
\end{proposition}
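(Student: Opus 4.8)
The plan is to establish the claim tree by tree, reducing the leading tree and the periodic progress trees directly to Lem.~\ref{lem:eq-class-division}, reducing the syntactic and recurrent progress trees to Lem.~\ref{lem:eq-class-division-s-r}, and then supplying a convergence argument that forces the leading automaton $\machine$ into the regime where the latter lemma applies. Here ``correctly classify'' is read as: no two finite words of the same congruence class are ever sent to distinct terminal nodes, and for the syntactic and recurrent cases this holds once $\machine$ has stabilised.

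First I would dispose of the unconditional cases. For the leading tree of all three $\FDFA$s and for the progress trees of the periodic $\FDFA$, Lem.~\ref{lem:eq-class-division} already guarantees that words in one equivalence class are never split across distinct terminal nodes; this holds at every stage of the run, so nothing further is needed there.

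The remaining work is the syntactic and recurrent progress trees, where Lem.~\ref{lem:eq-class-division-s-r} guarantees correct classification only while the current $\machine$ is consistent with $\canoEq$. I would therefore argue that $\machine$ becomes and stays consistent after finitely many refinements. Observe first that the ``$\Leftarrow$'' direction of consistency is free: by Lem.~\ref{lem:eq-class-division} applied to the leading tree, $x_1\canoEq x_2$ already forces $x_1,x_2$ into the same terminal node, hence $\machine(x_1)=\machine(x_2)$. Thus $\machine$ never over-splits the classes of $\canoEq$, the map from $\canoEq$-classes onto the terminal nodes is a well-defined surjection, and so the number of states of $\machine$ is bounded by the finite index $\size{\canoEq}$, with consistency failing only when $\machine$ is still too coarse. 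The second half of Lem.~\ref{lem:eq-class-division-s-r} then says that whenever a syntactic or recurrent progress tree does misclassify two words of a single class, $\machine$ must currently be inconsistent; by the counterexample analysis of Sec.~\ref{sec:ce-translation} the $\FDFA$ teacher returns a decomposition that refines the leading tree, adding a new terminal node and hence a new state to $\machine$. Since $\machine$ has at most $\size{\canoEq}$ states, only finitely many such refinements occur, and each one triggers relearning of all progress trees from scratch; the progress trees in force once $\machine$ has stabilised are thus built over a consistent $\machine$ and, by Lem.~\ref{lem:eq-class-division-s-r}, classify every finite word correctly.

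The delicate point I expect to be the progress step for $\machine$: one must be certain that every inconsistency exhibited inside a progress tree actually produces a counterexample refining the \emph{leading} tree rather than merely a progress tree, so that genuine progress toward the index of $\canoEq$ is made at each such misclassification. This is precisely the bridge from the ``$\machine$ is not consistent'' conclusion of Lem.~\ref{lem:eq-class-division-s-r} to the leading-tree refinement, and it is where the counterexample-analysis machinery must be invoked to guarantee termination.
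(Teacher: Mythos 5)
Your skeleton coincides with the paper's own argument: Lem.~\ref{lem:eq-class-division} settles the leading tree and the periodic progress trees unconditionally, Lem.~\ref{lem:eq-class-division-s-r} settles the syntactic and recurrent progress trees whenever $\machine$ is consistent with $\canoEq$, and a finite-index convergence argument shows $\machine$ eventually becomes (and stays) consistent, after which the progress trees --- relearned from scratch on every change of $\machine$ --- classify all finite words correctly. Your observation that $\machine$ never over-splits $\canoEq$-classes, so its state count is bounded by the index of $\canoEq$, is exactly the paper's reason why only finitely many leading refinements can occur.

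However, the step you flag as delicate and then lean on is a genuine gap. You assert that whenever a syntactic or recurrent progress tree misclassifies, ``the $\FDFA$ teacher returns a decomposition that refines the leading tree, adding a new terminal node.'' The algorithm guarantees no such thing: by the dispatch rule of Sec.~\ref{sec:sub:fdfa-learner-tree-algo}, a negative counterexample $(u,v)$ refines the leading tree only when $\stateWord{u}v^\omega \in \upword{\omegaRegLang}$ where $\stateWord{u}=\machine(u)$; otherwise a \emph{progress} tree is refined, and this can happen repeatedly while $\machine$ is still inconsistent with $\canoEq$. So a single inconsistency need not yield immediate progress toward the index of $\canoEq$, and one must rule out an infinite run of progress-tree refinements under a fixed, inconsistent $\machine$. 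The paper closes precisely this hole with Lem.~\ref{lem:forc-current-leading} and Lem.~\ref{lem:forc-current-recurrent-leading}: with $\machine$ fixed, the progress learners are effectively learning DFAs for the right congruences $\proEq_{S'}$ and $\proEq_{R'}$, whose indices are bounded by $\size{\states}\cdot\size{\proEq_P}$, so progress refinements under a fixed $\machine$ terminate; combined with Coro.~\ref{coro:ce-state-increase} (every counterexample adds a state either to $\machine$ or to some progress automaton), any persistent discrepancy must eventually force a leading refinement, and finiteness of the index of $\canoEq$ then gives stabilisation. Supplying these bounded intermediate congruences is the missing idea in your proposal; with them inserted at the point you identified, the rest of your argument goes through.
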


%\begin{corollary}\label{coro:consistent}
%The periodic, syntactic and recurrent $\FDFA$ can be learned by the tree-based algorithm correctly in polynomial time.
%\end{corollary}
%\begin{proof}
%We discuss the time complexity as follows.
%\begin{itemize}
%\item For periodic $\FDFA$, suppose the target periodic $\FDFA$ has $n$ states.
%\end{itemize}
%\end{proof}

\subsection{Complexity for Tree-based $\FDFA$ Learning Algorithm}

The counterexample guided refinement for $\fdfas$ shows that:
\begin{corollary}\label{coro:ce-state-increase}
Given a counterexample $(u, v)$ for $\FDFA$ learner,
the tree-based $\FDFA$ learner will either add a new state to the leading automaton $\machine$ or the corresponding progress
automaton $\proDFA^{\stateWord{u}}$.
\end{corollary}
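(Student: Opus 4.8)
The plan is to read the statement off the two refinement procedures already described, after checking that in each branch the refinement is guaranteed to fire and that it genuinely introduces one fresh terminal node. By Definition~\ref{def:ce-for-fdfa-learner} the counterexample $(u,v)$ is either positive or negative and in both cases satisfies $\eqWith{\machine}{uv}{u}$; since the two are symmetric I would treat the negative case, where $uv^\omega\notin\upword{\omegaRegLang}$, $(u,v)$ is accepted by $\fdfas$, and $\stateWord{u}=\machine(u)$. The algorithm then branches on whether $\stateWord{u}v^\omega\in\upword{\omegaRegLang}$, and I would follow exactly this case split, showing a state is added to $\machine$ in the first branch and to $\proDFA^{\stateWord{u}}$ in the second.

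For the first branch, $\stateWord{u}v^\omega\in\upword{\omegaRegLang}$ and the leading tree $\learnT$ is refined. Writing $s_0s_1\cdots s_n$ for the run of $\machine$ on $u$, the decisive point is that the two boundary experiments disagree: $\func{TE}(s_0,(u[1\cdots n],v))=\false$ since $uv^\omega\notin\upword{\omegaRegLang}$, while $\func{TE}(s_n,(\emptyword,v))=\true$ since $s_n=\stateWord{u}$ and $\stateWord{u}v^\omega\in\upword{\omegaRegLang}$. As the value changes along the run there is a smallest $j\in[1\cdots n]$ with $\func{TE}(s_{j-1},(u[j\cdots n],v))\neq\func{TE}(s_j,(u[j+1\cdots n],v))$, giving the experiment $e=(u[j+1\cdots n],v)$ that separates $q=s_j$ from $pa=s_{j-1}u[j]$. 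The refinement replaces the terminal node labelled $q$ by an internal node labelled $e$ with terminal children $q$ and $pa$, so the number of terminal nodes of $\learnT$, hence the number of states of $\machine$, increases by exactly one.

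For the second branch, $\stateWord{u}v^\omega\notin\upword{\omegaRegLang}$ and the progress tree $\learnT_{\stateWord{u}}$ is refined. I would take the run $s_0\cdots s_n$ of $v$ over $\proDFA^{\stateWord{u}}$, with $s_n=\stateWord{v}$ accepting, so that $\stateWord{u}(\stateWord{v})^\omega\in\upword{\omegaRegLang}$. In the periodic case the boundaries again disagree, $\func{TE}(s_0,v[1\cdots n])=\false\neq\true=\func{TE}(s_n,\emptyword)$, and for the syntactic and recurrent cases the corresponding boundary disagreement is precisely what Appendix~\ref{app:ref_pt} establishes. Hence a smallest split index $j$ exists in each case, and replacing the leaf $s_j$ by a depth-one subtree adds one new terminal node to $\learnT_{\stateWord{u}}$, so $\proDFA^{\stateWord{u}}$ gains exactly one state.

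The only real content beyond bookkeeping is the existence of the split index $j$, which I expect to be the crux: it follows solely from the boundary-disagreement of $\func{TE}$, which in turn is forced by the defining properties of a counterexample together with the accepting-state characterization in Definition~\ref{def:cano-fdfas}. Once $j$ is in hand, the ``replace a leaf by a depth-one subtree'' construction adds precisely one terminal node, and the distinguishing experiment $e$ guarantees the new leaf is genuinely split off from $q$ rather than being a relabelling. The positive-counterexample case is entirely symmetric, exchanging the roles of $\true$ and $\false$.
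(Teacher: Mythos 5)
Your proposal is correct and follows essentially the same route as the paper: the paper states the corollary as an immediate consequence of the counterexample-guided refinement procedures of Sec.~\ref{sec:sub:fdfa-learner-tree-algo} and Appendix~\ref{app:ref_pt}, whose content is precisely your boundary-disagreement of $\func{TE}$ along the run, the existence of the smallest split index $j$, and the replacement of the leaf $q$ by a depth-one subtree with children $q$ and $pa$. Your added remarks (that the split genuinely introduces a fresh terminal node rather than a relabelling, and that the positive case exchanges $\true$ and $\false$) only make explicit what the paper leaves implicit.
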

Corollary.~\ref{coro:ce-state-increase} is a critical property for the termination of the tree-based $\FDFA$ learning algorithm
since each time we either make progress for the leading automaton or the corresponding progress automaton.

In Lem.~\ref{lem:eq-class-division-s-r}, we encounter a situation where the progress tree may classify
two finite words which are in the same equivalence class into two terminal nodes if
$\machine$ is not consistent with $\canoEq$. One may worry that if the $\FDFA$ teacher chooses to refine
the progress automaton continually, the refinement may not terminate. Lem.~\ref{lem:forc-current-leading}
shows that it will terminate since the number of equivalence classes of the progress automata with respect to $\machine$
is finite.
More precisely, if we fix the leading automaton $\machine$, we are actually learning a DFA induced by the right congruence
$x \proEq_{S'} y$ iff $\machine(ux) = \machine(uy)$ and for every $v\in\finwords$,
if $\machine(uxv) = u$, then $u(xv)^\omega \in\omegaRegLang \Longleftrightarrow u(xv)^\omega \in\omegaRegLang $.
One can easily verify that $x \proEq_{S'} y$ is a right congruence.
We remark that if $\machine$ is consistent with $\canoEq$, then $x \proEq_{S'} y$ is equivalent to $x \proEq_{S} y$.
\begin{lemma}\label{lem:forc-current-leading}
Given then leading automaton $\machine$, then for every
state $u$ in $\machine$, the index of $\proEq_{S'} $ is bounded by $\size{\states}\cdot \size{\proEq_P}$
where $\states$ is the state set of $\machine$.
\end{lemma}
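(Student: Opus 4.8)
The plan is to show that, for the fixed state $u$, the $\proEq_{S'}$-class of a word $x$ is completely determined by the pair $(\machine(ux), \class{x}_{\proEq_P})$ consisting of the state reached from $u$ on $x$ and the periodic progress class of $x$. The index of $\proEq_{S'}$ then cannot exceed the number of such pairs, which is at most $\size{\states}\cdot\size{\proEq_P}$.

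First I would introduce the map $\phi\colon \finwords \rightarrow \states \times \quotient{\proEq_P}$ defined by $\phi(x) = (\machine(ux), \class{x}_{\proEq_P})$. Its range has cardinality at most $\size{\states}\cdot\size{\proEq_P}$, so it suffices to prove $\phi(x)=\phi(y)\Longrightarrow x \proEq_{S'} y$: this shows $\proEq_{S'}$ is coarser than the kernel of $\phi$, whence the number of $\proEq_{S'}$-classes is bounded by the size of the range of $\phi$.

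The core step is verifying $\phi(x)=\phi(y)\Longrightarrow x \proEq_{S'} y$ against the two clauses defining $\proEq_{S'}$. The first clause, $\machine(ux)=\machine(uy)$, is immediate from equality of the first coordinates of $\phi(x)$ and $\phi(y)$. For the second clause I would use that the second coordinates agree, i.e.\ $x \proEq_P y$; by the definition of $\proEq_P$ this yields $u(xv)^\omega\in\omegaRegLang \Longleftrightarrow u(yv)^\omega\in\omegaRegLang$ for \emph{every} $v\in\finwords$. Since the second clause of $\proEq_{S'}$ demands this biconditional only for those $v$ with $\machine(uxv)=u$, the unconditional biconditional furnished by $\proEq_P$ dominates it, and so $x \proEq_{S'} y$.

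The argument is short precisely because $\proEq_P$ is a purely language-theoretic congruence, independent of $\machine$, whose defining biconditional ranges over all $v$ and is therefore strictly stronger than the $\machine$-guarded biconditional of $\proEq_{S'}$. The main (and rather minor) obstacle is simply to align the two definitions and confirm this domination, keeping in mind that the first coordinate of $\phi$ supplies exactly the leading-automaton agreement $\machine(ux)=\machine(uy)$ that $\proEq_{S'}$ imposes on top of the language condition.
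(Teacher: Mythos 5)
Your proof is correct and takes essentially the same route as the paper: both bound the index of $\proEq_{S'}$ by showing that the pair $(\machine(ux), \class{x}_{\proEq_P})$ determines the $\proEq_{S'}$-class of $x$, the state component handling the clause $\machine(ux)=\machine(uy)$ and the unconditional biconditional of $\proEq_P$ subsuming the $\machine$-guarded one, so the index is at most $\size{\states}\cdot\size{\proEq_P}$. If anything, your kernel-of-$\phi$ formulation is slightly tighter, since you prove only the refinement direction actually needed for the upper bound, whereas the paper informally asserts that the pair ``uniquely represents'' the equivalence class.
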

\begin{proof}
We prove the lemma by giving the upper bound $\size{\states}\cdot \size{\proEq_P}$ of the index of $\proEq_{S'}$. We use $q_i$ to denote the state which
can be reached by $u$ for $1\leq i \leq n$ where $n$ is the number of states reachable by $u$. We classify any $x\in\finwords$ into
a equivalence class of $\proEq_{S'}$ as follows.

We first find $q_i = \machine(ux)$. Since for every $y \in\finwords$ with $q_i = \machine(uy)$,
we have $\machine(uxv) = \machine(uyv)$, thus those experiments $v\in\finwords$ with $\machine(uxv)\neq u$
are not able to differentiate $x$ and $y$. In other words, the value of $\machine(uxv) = u$ is not necessary here.
Therefore, if we only consider $x, y\in\finwords$ with $q_i=\machine(ux)=\machine(uy)$, the criterion to decide
whether $x$ and $y$ are in the same equivalence class is
to judge whether for any $v\in\finwords$, $u(xv)^\omega \in\omegaRegLang \Longleftrightarrow u(yv)^\omega \in\omegaRegLang$,
which is exactly the same definition for $\proEq_P$. Thus, we can find the notation $(q_i, \class{x}_{\proEq_P})$
to uniquely represent the equivalence class $\class{x}_{\proEq_{S'}}$.
Therefore, the index of the right congruence $\proEq_{S'} $
is $ n \cdot \size{\proEq_P} \leq \size{\states}\cdot \size{\proEq_P}$. $\qed$
\end{proof}

Similarly, if we fix the leading automaton $\machine$ and learn recurrent $\FDFA$,
we are actually learning DFA induced by the right congruence
$x \proEq_{R'} y$ iff for every $v\in\finwords$,
$\machine(uxv) = u \land u(xv)^\omega \in\omegaRegLang \Longleftrightarrow
\machine(uyv) = u \land u(yv)^\omega \in\omegaRegLang$. Since $x \proEq_{S'} y$
implies $x \proEq_{R'} y$, it follows that $\size{\proEq_{R'}}$
is smaller than $\size{\proEq_{S'}}$.

The implication from $\deqWithu{S'}{x}{y}$ to $\deqWithu{R'}{x}{y}$ can be easily established by
assuming $\deqWithu{S'}{x}{y}$ and then for any $v\in\finwords$, we have
that $\eqWith{\machine}{uyv}{u}\land u(yv)^\omega\in\omegaRegLang$ if $\eqWith{\machine}{uxv}{u}\land u(xv)^\omega\in\omegaRegLang$.
First, assuming that $\eqWith{\machine}{uxv}{u}\land u(xv)^\omega\in\omegaRegLang$ and $\deqWithu{S'}{x}{y}$, one can
easily conclude that $u(yv)^\omega\in\omegaRegLang$.
In addition, one can combine the result $\eqWith{\machine}{ux}{uy}$ from $\deqWithu{S'}{x}{y}$ and assumption $\eqWith{\machine}{uxv}{u}$
together to prove $\eqWith{\machine}{uyv}{u}$ since $\machine$ is deterministic and $\singleEq_{\machine}$ is an equivalence relation.

\begin{lemma}\label{lem:forc-current-recurrent-leading}
Given the leading automaton $\machine$, then for every
state $u$ in $\machine$, the index of $\proEq_{R'}$ is bounded by $\size{\states}\cdot\size{\proEq_P}$
where $\states$ is the state set of $\machine$.
\end{lemma}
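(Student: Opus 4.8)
The plan is to avoid repeating the bounding argument of Lemma~\ref{lem:forc-current-leading} and instead to derive the bound for $\proEq_{R'}$ directly from the bound already established for $\proEq_{S'}$, exploiting the refinement relationship proved in the paragraph immediately preceding the statement. First I would record the two ingredients that are already in hand. By Lemma~\ref{lem:forc-current-leading}, the index $\size{\proEq_{S'}}$ is at most $\size{\states}\cdot\size{\proEq_P}$. Moreover, the discussion just before the statement establishes $\deqWithu{S'}{x}{y}\Longrightarrow\deqWithu{R'}{x}{y}$ for all $x,y\in\finwords$; I would also note in passing that $\proEq_{R'}$ is genuinely an equivalence relation, since it is defined by a biconditional quantified over all $v\in\finwords$ and is therefore reflexive, symmetric, and transitive, so that speaking of its index is meaningful.

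The key step is then the standard fact about refinements of equivalence relations. The implication $\deqWithu{S'}{x}{y}\Longrightarrow\deqWithu{R'}{x}{y}$ says precisely that $\proEq_{S'}$ refines $\proEq_{R'}$, so every equivalence class of the coarser relation $\proEq_{R'}$ is a disjoint union of classes of the finer relation $\proEq_{S'}$. Consequently the number of classes cannot increase when passing from $\proEq_{S'}$ to $\proEq_{R'}$, i.e. $\size{\proEq_{R'}}\leq\size{\proEq_{S'}}$. Combining this inequality with the bound of Lemma~\ref{lem:forc-current-leading} yields $\size{\proEq_{R'}}\leq\size{\states}\cdot\size{\proEq_P}$, which is exactly the claim.

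Since both ingredients are already available, there is essentially no technical obstacle to overcome here; this lemma is a short corollary rather than a fresh argument. The only point requiring care is the \emph{direction} of the refinement: one must check that the implication runs from $\proEq_{S'}$ to $\proEq_{R'}$ (the finer congruence implying the coarser one), so that it is the index of $\proEq_{R'}$ that is dominated and not the reverse. That direction has been justified in the preceding text, so the remaining work reduces to a one-line application of the refinement principle. An alternative, self-contained route would be to mimic the classification argument of Lemma~\ref{lem:forc-current-leading} directly for $\proEq_{R'}$, but this is less clean, because the definition of $\proEq_{R'}$ does not build in the precondition $\machine(ux)=\machine(uy)$ that made the state-indexed factorization $(q_i,\class{x}_{\proEq_P})$ available in the $\proEq_{S'}$ case; hence the refinement reduction is the preferable approach.
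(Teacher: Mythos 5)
Your proposal is correct and matches the paper's own reasoning exactly: the paper states this lemma without a separate proof precisely because the preceding paragraph establishes $\deqWithu{S'}{x}{y}\Longrightarrow\deqWithu{R'}{x}{y}$, whence $\size{\proEq_{R'}}\leq\size{\proEq_{S'}}\leq\size{\states}\cdot\size{\proEq_P}$ by Lemma~\ref{lem:forc-current-leading}, just as you argue. Your added care about the direction of the refinement is the right point to check, and your proof handles it correctly.
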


Assume that $\fdfas= (\machine, \{\proDFA^u\})$ is the corresponding periodic $\FDFA$ recognizing $\omegaRegLang$.
Let $n$ be the number of states in $\machine$ of $\fdfas$ and $k$ be the number of states in the largest progress automaton of $\fdfas$.

\treecomplexity*
\begin{proof}
Thm.~\ref{thm:algo-tree-halt-correctness} can be concluded from Lem.~\ref{lem:eq-class-division},
Coro.~\ref{coro:ce-state-increase},
Lem.~\ref{lem:forc-current-leading},
and Lem.~\ref{lem:forc-current-recurrent-leading}.
Suppose $\fdfas = (\machine, \{\proDFA^{u}\}$ is the corresponding periodic $\FDFA$ recognizing $\omegaRegLang$.
The number of states in
$\machine$ is $n$ and $k$ is the number of the largest progress automaton in $\fdfas$.

Given a counterexample $(u, v)$, the number of membership queries is at most $\size{u}$
if we refine the leading automaton and is at most $\size{v}$ if we refine the progress automaton.
Therefore, the number of membership queries used in analyzing counterexample is bounded by $\size{u} + \size{v}$.
One can also use binary search to reduce the number of membership queries used by counterexample analysis
to $\log (\size{u} + \size{v})$.
Moreover, when the classification tree has been refined, we need to construct the corresponding $\machine$ or $\proDFA^{\machine(u)}$ again.
Suppose the new added terminal node is labeled by $p$, the terminal node which needs to refined is labelled by $q$
and the experiment is $e$.
We only need to compute the successors of $p$ and update the successors of the predecessors of $q$.
\begin{itemize}
\item Computing the successors of $p$ is to calculate $\trans(p, a)$ for every $a\in\alphabet$, which
requires $\size{\alphabet}\cdot h$ membership queries where $h$ is the height of the classification tree.
\item Updating the successors of the predecessors of $q$ is to calculate $\func{TE}(s, e)$ for every state label
$s$ and $a\in\alphabet$ such that currently we have $\trans(s, a) = q$, which requires at most $\size{\alphabet}\cdot m$
membership queries where $m$ is the number of states in current $\machine$ or $\proDFA^{\machine(u)}$.
\end{itemize}
Since the height of the classification tree is at most $m$, thus the number of membership queries needed for constructing
the conjectured DFA is at most $2\cdot m\cdot \size{\alphabet}$.
It follows that for the tree-based algorithm, the number of membership queries used in the counterexample guided refinement
is bounded by $\size{u} + \size{v} + 2m\cdot \size{\alphabet}$.
We remark that in the table-based algorithm, the number of membership queries used in the counterexample guided refinement
is bounded by $\size{u} + \size{v} + m + \size{\alphabet}\cdot m + \size{\alphabet}$.

We give the complexity of the tree-based algorithm as follows.
\begin{itemize}
\item For periodic $\FDFA$. During the learning procedure, when receiving a counterexample for $\FDFA$ learner,
the tree-based algorithm either adds a new state into the leading automaton or into the corresponding
progress automaton. Thus, the number of the equivalence queries is bounded by $n + nk$ since
the number of states in the target periodic $\FDFA$ is bounded by $n + nk$.
In periodic $\FDFA$, we have $m \leq n+k$ since every time we either refine the leading automaton or a progress automaton.
Therefore, the number of membership queries needed for the algorithm is bounded by
$(n + nk)\cdot (\size{u} + \size{v} + 2(n + k)\cdot \size{\alphabet})\in \mathcal{O}((n + nk)\cdot (\size{u} + \size{v} + (n+k)\cdot\size{\alphabet}))$ in the worst case.

\item For syntactic and recurrent $\FDFA$, when receiving a counterexample for $\FDFA$ learner, the tree-based algorithm
will first decide whether to refine the leading automaton or the progress automaton. If it decides to refine the leading automaton,
we need to initialize all progress trees with a single node labelled by $\emptyword$ again, so the number of states in the progress automata
of the $\FDFA$ may decrease at that point, otherwise it refines the progress automaton and the number of states in $\FDFA$ will increase by one.

In the worst case, the learner will try to learn the progress automata as much as possible. In other words, if current leading
automaton has $m$ states, the number of states in every progress automaton is at most $m \cdot k$ according to
Lem.~\ref{lem:forc-current-leading} and Lem.~\ref{lem:forc-current-recurrent-leading}.
When all progress trees cannot be refined any more, either the learning task finishes or the $\FDFA$ teacher returns
a counterexample to refine current leading automaton. For the latter case, the number of states in the leading automaton will increase
by one, that is, $m+1$, and we need to redo the learning work for all progress trees.
The number of states in all progress automata in the new $\FDFA$ is bounded by $(m+1)^2\cdot k$.
Therefore, the number of
equivalence queries needed for tree-based algorithm is bounded by $(1 + 1\cdot1\cdot k) + (1 + 2\cdot2\cdot k) + \cdots
(1 + (n-1)\cdot (n-1)\cdot k) + (1 + n\cdot n\cdot k) \in \mathcal{O}(n + n^3 k)$.
%Note that after the progress automata are the finest with respect to current leading automaton, the learner
%will either get the final $\FDFA$ or have to refine the leading automaton.
Similarly, in syntactic and recurrent $\FDFA$s, we have that $m \leq n + nk$ since the number of states in a progress automaton
is bounded by $nk$.
It follows that the number of membership queries needed for the algorithm is in
$\mathcal{O}((n + n^3k)\cdot (\size{u} + \size{v} + 2(n+nk)\cdot\size{\alphabet})) \in \mathcal{O}((n + n^3k)\cdot(\size{u} + \size{v}
+ (n+nk)\cdot \size{\alphabet}))$ in the worst case.
\end{itemize}

$\qed$
\end{proof}
%We remark that in $\FDFA$ teacher if we use the over approximation method for BA construction, the amount of membership queries use in
%the spurious negative counterexample analysis is in $\mathcal{\size{v}}$, which however will not change the complexity given in Them.~\ref{thm:algo-tree-halt-correctness}.
\treespacecomplexity*
\begin{proof}
As we mentioned in Sec.~\ref{sec:fdfa-learner-table}, the $\FDFA$ learner can be viewed as a learner
consisting of many component DFA learners. For a component DFA learner, suppose the number of the states in the target DFA
is $m$, for table-based component DFA learner, the size of the observation table is in $\mathcal{O}((m + m\cdot\size{\alphabet}) \cdot m)$
since there are $m + m\cdot\size{\alphabet}$ rows and at most $m$ columns in the observation table in the worst case.
In contrast, for the tree-based
component DFA learner, the number of nodes in the classification tree is in $\mathcal{O}(m)$ since the number of terminal nodes
in the classification tree is $m$ and the number of internal nodes is at most $m-1$.
\begin{itemize}
\item For the periodic $\FDFA$, the number of states in the $\FDFA$ will increase after each refinement step.
Thus, it is easy to conclude that the space required for the leading automaton is in $\mathcal{O}(n)$
if we use tree-based learning algorithm and the space required by the table-based algorithm is in $\mathcal{O}((n + n\cdot\size{\alphabet}) \cdot n)$. Similarly, the space required by tree-based learning algorithm to learn each progress automaton is in $\mathcal{O}(k)$,
while for table-based algorithm, the space required is in $\mathcal{O}((k+k \cdot\ \size{\alphabet}) \cdot k)$.

\item For the syntactic and recurrent $\FDFA$. The learning procedure for the leading automaton is the same as periodic automaton.
Thus the space required by table-based and tree-based algorithm remain the same.

For learning progress automaton, the number of states in each progress automaton is at most $nk$ according to Lem.~\ref{lem:forc-current-leading} and Lem.~\ref{lem:forc-current-recurrent-leading}.
Therefore, for table-based algorithm, the space required is in $\mathcal{O}((nk+nk \cdot\ \size{\alphabet}) \cdot nk)$.
While for tree-based algorithm, the space required to learn each progress automaton is in  $\mathcal{O}(nk)$.

\end{itemize}

$\qed$
\end{proof}

\begin{restatable}{proposition}{teachercomplexity}\label{prop:algo-teacher-complexity}
	In $\FDFA$ teacher, suppose $n$ is the number of states in the leading automaton and $k$ is the number of states in
	the largest progress automaton in the input $\FDFA$ $\fdfas$ and the returned counterexample $uv^\omega$ has a decomposition
	$(u, v)$. Then
	\begin{itemize}
		\item the time and space complexity for building the BAs $\buechiL$ and $\buechiU$ are in
		$\mathcal{O}(n^2k^3)$ and $\mathcal{O}(n^2k^2)$ respectively, and
		\item for the under approximation method, the time and space complexity for analyzing the counterexample $uv^\omega$ are in
		$\mathcal{O}(n^2k\cdot (\size{v}(\size{v} + \size{u}))$,
        while for the over approximation method, the time and space complexity for analyzing the counterexample $uv^\omega$ are in
		$\mathcal{O}(n^2k^2\cdot (\size{v}(\size{v} + \size{u}))$ and in $\mathcal{O}(n^2k (\size{v}(\size{v} + \size{u}))$ respectively.
	\end{itemize}
\end{restatable}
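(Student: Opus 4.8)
The plan is to treat the two clauses of the proposition separately, reusing the size bounds already established for the constructed automata. For the first clause I would invoke Lemma~\ref{lem:fdfa-to-buechi-inclusion} directly: it already shows that $\buechiL$ and $\buechiU$ have $\mathcal{O}(n^2k^3)$ and $\mathcal{O}(n^2k^2)$ states, respectively. Since the construction in Def.~\ref{def:fdfa-to-buechi} obtains these BAs as (disjoint unions of) products of the leading automaton with one or two copies of each progress automaton, both the running time and the working space are proportional to the number of states and transitions produced. Hence building $\buechiL$ costs $\mathcal{O}(n^2k^3)$ and building $\buechiU$ costs $\mathcal{O}(n^2k^2)$ in both time and space, which settles the first bullet.

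For the counterexample analysis I would split along the under- and over-approximation cases and bound each against the sizes of the finite automata used in Sec.~\ref{sec:ce-translation}. For the under-approximation method the decisive step is, in every case U1--U3, to intersect $\autdollar_{u\$v}$ with either $\autdollarfdfa$ or $\autdollarfdfaneq$ and test (non)emptiness. By Prop.~\ref{prop:size-of-ddollar} the automaton $\autdollar_{u\$v}$ has $\mathcal{O}(\size{v}(\size{v}+\size{u}))$ states, while $\autdollarfdfa$ and $\autdollarfdfaneq$ each have $\mathcal{O}(n+n^2k)=\mathcal{O}(n^2k)$ states; their product therefore has $\mathcal{O}(n^2k\cdot\size{v}(\size{v}+\size{u}))$ states, and emptiness checking is linear in the product. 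This yields the claimed $\mathcal{O}(n^2k\cdot\size{v}(\size{v}+\size{u}))$ time and space for the under-approximation method.

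For the over-approximation method I would observe that cases O1 and O2 reduce, by construction, to exactly the intersection-and-emptiness computation used for U1 and U2, costing $\mathcal{O}(n^2k\cdot\size{v}(\size{v}+\size{u}))$. The remaining case O3 is handled by the decomposition-extraction procedure in the proof of Lemma~\ref{lem:lang-buechiU-period}, whose time and space were already bounded there by $\mathcal{O}(nk(n+nk)\cdot\size{v}(\size{v}+\size{u}))$ and $\mathcal{O}((n+nk)\cdot\size{v}(\size{v}+\size{u}))$; rewriting $nk(n+nk)=\mathcal{O}(n^2k^2)$ and $(n+nk)=\mathcal{O}(nk)\subseteq\mathcal{O}(n^2k)$, and taking the maximum over O1--O3, gives overall time $\mathcal{O}(n^2k^2\cdot\size{v}(\size{v}+\size{u}))$ and space $\mathcal{O}(n^2k\cdot\size{v}(\size{v}+\size{u}))$, as required.

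The main obstacle I anticipate is keeping the bookkeeping of automata sizes honest across the different cases --- in particular making sure that the dominant term in the over-approximation time comes from case O3 (the triple-product and decomposition search of Lemma~\ref{lem:lang-buechiU-period}) rather than from the intersection step, and that the looser space bound $\mathcal{O}(n^2k\cdot\size{v}(\size{v}+\size{u}))$ correctly subsumes the tighter $\mathcal{O}(nk\cdot\size{v}(\size{v}+\size{u}))$ coming from O3. The routine verifications that each product automaton can be built within these space bounds and that emptiness is testable in time linear in the product I would relegate to standard automata-theoretic facts.
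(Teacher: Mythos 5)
Your proposal is correct and follows essentially the same route as the paper's proof: the first bullet is read off from the size bounds of Lem.~\ref{lem:fdfa-to-buechi-inclusion}, the analysis cost is bounded by the product of $\autdollar_{u\$v}$ (of size $\mathcal{O}(\size{v}(\size{v}+\size{u}))$) with $\autdollarfdfa$ or $\autdollarfdfaneq$ (of size $\mathcal{O}(n+n^2k)$), and case O3 is charged to the extraction procedure of Lem.~\ref{lem:lang-buechiU-period} with its bounds $\mathcal{O}(nk(n+nk)\cdot\size{v}(\size{v}+\size{u}))$ and $\mathcal{O}((n+nk)\cdot\size{v}(\size{v}+\size{u}))$. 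Your version is in fact slightly more careful than the paper's, since you make explicit the rewriting $nk(n+nk)=\mathcal{O}(n^2k^2)$ and the maximization over cases O1--O3 that the paper leaves implicit.
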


\begin{proof}
Suppose the $\FDFA$ teacher currently needs to answer the equivalence query for $\FDFA$ $\fdfas = (\machine, \{\proDFA^u\})$.
Then the number of states in $\buechiL$ ($\buechiU$) is in $\mathcal{O}(n + n^2k^3)$ (respectively, $\mathcal{O}(n + n^2k^2)$).
In addition, the number of states in FA $\autdollarfdfa$ and $\autdollarfdfaneq$ are both in $\mathcal{O}(n + n^2k)$ and
the number of states in $\autdollar_{u\$v}$ is at most $\size{v}(\size{v} + \size{u})$ given that $(u, v)$ is a decomposition of the returned counterexample $uv^\omega$,
which can be applied to the under and the over approximation except for case O3 in the over approximation.
When we analyze the spurious negative counterexample, the time and space complexity are in $\mathcal{O}(nk(n+nk)\cdot (\size{v}(\size{v}+\size{u})))$ and $\mathcal{O}((n + nk)\cdot (\size{v}(\size{v}+\size{u})))$
according to Lem.~\ref{lem:lang-buechiU-period}. Therefore, we complete the proof.
$\qed$
\end{proof}

\batermination*
\begin{proof}
If we use the under-approximation method to construct the B\"uchi automaton, then
the BA learning algorithm will need to first learn a canonical $\FDFA$ to get the final B\"uchi automaton in the worst case.
This theorem is justified by Lem.~\ref{lem:language-fdfa-to-ba} and Lem.~\ref{lem:fdfa-to-buechi-inclusion}.
$\qed$
\end{proof}

\end{document}